\documentclass[journal,twoside,web]{ieeecolor}
\usepackage{generic}
\usepackage{cite}
\usepackage{algorithmic}
\usepackage{graphicx}
\usepackage[font=small,labelfont=bf]{caption}
\usepackage[labelformat=simple]{subcaption}

\usepackage{algorithm}
\usepackage{hyperref}
\hypersetup{hidelinks=true}
\usepackage{textcomp}
\def\BibTeX{{\rm B\kern-.05em{\sc i\kern-.025em b}\kern-.08em
    T\kern-.1667em\lower.7ex\hbox{E}\kern-.125emX}}
\definecolor{revcolor}{rgb}{0,0,0} 
\definecolor{arxivcolor}{rgb}{0,0,0} 

\usepackage{siunitx}

\usepackage{makecell}
\usepackage{booktabs}
\usepackage{multirow}

\usepackage{threeparttable}
\usepackage{array}
\newcolumntype{C}[1]{>{\centering\arraybackslash}m{#1}}
\newcolumntype{Y}{>{\centering\arraybackslash}X}

\usepackage{amsmath,amssymb,amsfonts}
\usepackage{dsfont}
\usepackage{mathtools}
\newtheorem{theorem}{Theorem}

\newtheorem{property}{Property}

\newtheorem{lemma}{Lemma}
\newtheorem{definition}{Definition}
\newtheorem{remark}{Remark}

\DeclareMathOperator{\Lip}{Lip}
\DeclareMathOperator{\diag}{diag}

\DeclareMathOperator{\vol}{vol}
\DeclareMathOperator{\cond}{cond}

\DeclareMathOperator{\sign}{sgn}
\newcommand{\bb}[1]{\mathbb{ #1 }}
\newcommand{\interior}{\mathrm{int}}
\newcommand{\grad}{\nabla}

\newcommand{\R}{\bb{R}}
\newcommand*{\defeq}{\stackrel{\text{\tiny def}}{=}}

\begin{document}
\title{Learning Robust Control Lyapunov Functions through Lipschitz Neural Networks}

\author{Shiqing~Wei,~\IEEEmembership{Graduate~Student~Member,~IEEE,} Prashanth~Krishnamurthy,~\IEEEmembership{Member,~IEEE,} \\ and Farshad~Khorrami,~\IEEEmembership{Fellow,~IEEE}
\thanks{This work was supported in part by ARO grants W911NF-21-1-0155 and W911NF-22-1-0028 and by the New York University Abu Dhabi (NYUAD) Center for Artificial Intelligence and Robotics (CAIR), funded by Tamkeen under the NYUAD Research Institute Award CG010.} 
\thanks{Shiqing Wei, Prashanth Krishnamurthy, and Farshad Khorrami are with Control/Robotics Research Laboratory, Electrical and Computer Engineering Department, Tandon School of Engineering, New York University, Brooklyn, NY 11201, USA. E-mail: \{shiqing.wei, prashanth.krishnamurthy, khorrami\}@nyu.edu.}}

\maketitle

\begin{abstract}
This work presents a novel framework for learning robust control Lyapunov functions and stabilizing controllers for nonlinear dynamical systems subject to additive disturbances upper bounded by a state-dependent function. We leverage recent advances in Lipschitz neural networks to jointly learn both the Lyapunov functions and state-feedback controllers. We establish explicit bounds on the Hessian and third-order derivatives of these neural networks in the spectral norm, and introduce a GPU-friendly branch-and-bound algorithm that utilizes higher-order bounds to significantly accelerate the verification of the Lyapunov conditions. Finally, we validate the proposed approach through extensive simulations on six different dynamical systems.
\end{abstract}

\begin{IEEEkeywords}
Stability of nonlinear systems, neural networks, robust control, computer-aided control design
\end{IEEEkeywords}

\section{Introduction}
\label{sec:introduction}
\IEEEPARstart{S}{tability} analysis of dynamical systems is crucial for controlling nonlinear systems where linear approximations are insufficient. In many scenarios, model mismatch or external disturbances may exist and necessitate robust control strategies, e.g., a robotic manipulator operating in an environment with unpredictable loads and friction variations.

Lyapunov functions are a foundational tool in the stability analysis of dynamical systems \cite{lyapunov1992general}. They offer a sufficient condition for the stability of a nonlinear system by designing a positive definite scalar function that decreases along solution trajectories, which enables characterization of the forward invariant region (or region of attraction) and robustification of control designs. While converse theorems guarantee the existence of a suitable Lyapunov function given certain stability properties of the system \cite{kellett2015classical}, these results are typically non-constructive in nature. Consequently, extensive research has focused on constructing Lyapunov functions using both analytical \cite{sepulchre2012constructive} and computational methods \cite{giesl2015review}.

Numerous computational methods have been investigated since the 1950s. Zubov introduced a specific Lyapunov function as the solution to a first-order partial differential equation (now known as Zubov's equation) whose solutions are computed using series expansions and other techniques \cite{zubov1961methods, vannelli1985maximal}. Later, Linear Programming (LP) methods were used to compute Lyapunov functions, first for switched linear systems \cite{brayton1979stability} and subsequently for nonlinear systems \cite{johansen2000computation}. Since the 1990s, significant effort has been dedicated to Linear Matrix Inequality (LMI) methods \cite{boyd1994linear} for nonlinear systems, which paved the way for a variety of techniques based on sums-of-squares (SOS) approaches \cite{parrilo2000structured, peet2009exponentially, peet2012converse}. However, the automatic construction of Lyapunov functions for general nonlinear systems remains an open question \cite{giesl2015review}.

Advances in computational hardware and deep learning frameworks have led many researchers to construct (control) Lyapunov functions for nonlinear systems using neural networks—a line of inquiry that also allows the joint learning of a state-feedback controller for a control system. Unlike earlier computational methods, this approach requires a formal verification of the Lyapunov conditions. Since Lyapunov functions are not necessarily convex \cite{jongeneel2024continuation}, this verification generally involves solving nonconvex optimization problems.
For continuous-time systems, recent work has utilized Satisfiability Modulo Theories (SMT) solvers such as dReal \cite{chang2019neural, zhou2022neural, liu2025physics} and Z3 \cite{abate2020formal}. Some other works use specific parameterization of the dynamics to bypass the verification of the Lyapunov function \cite{cheng2024learning, wang2024monotone}. In the discrete-time setting, optimization-based techniques have been employed \cite{dai2021lyapunov, wu2023neural}. In these studies, the neural networks use ReLU or leaky ReLU activation functions, which enable mixed-integer linear programming (MILP). In \cite{yang2024lyapunov}, bound propagation techniques such as $\alpha,\beta$-CROWN \cite{wang2021beta} are adopted to verify the learned Lyapunov function and controller without considering external disturbances.

Although recent research has shown that neural networks can effectively compute Lyapunov functions and feedback controllers, their robustness to external disturbances remains largely unexplored. For example, the studies in \cite{chang2019neural, liu2025physics, dai2021lyapunov, wu2023neural} assume exact dynamics without disturbances. In the following, we highlight some closely related works. In \cite{dawson2022safe}, the authors develop a learning framework to learn Lyapunov-barrier functions for control-affine systems with affine dependence in the uncertain parameters, although without formal verification. In \cite{zhou2022neural}, the authors tackle a scenario with partially known dynamics by first learning the system dynamics before proceeding to learn the Lyapunov function and controller. The work \cite{rego2022learning} considers input-to-state stability (ISS) Lyapunov functions and qualitatively examines the robustness of the learned controller to Gaussian noise. In contrast, we learn a robust Lyapunov function and a state-feedback controller for nonlinear control systems with additive disturbances upper bounded by a function of the state in continuous time. Accounting for such disturbances introduces additional challenges. First, the Lyapunov condition must be reformulated, which complicates the verification process. Second, the Lyapunov function must decrease uniformly across all admissible disturbances, making both training and verification significantly more demanding. We also notice that the choice of neural networks in existing works relies heavily on the capability of the verification method. For instance, the works \cite{chang2019neural, liu2025physics, zhou2022neural} use neural networks of two layers. Since Lyapunov functions can be scaled by a positive constant without affecting the stability analysis, we employ Lipschitz neural networks (LNNs)\cite{miyato2018spectral, trockman2021orthogonalizing, wang2023direct} to learn both Lyapunov functions and state-feedback controllers. In our approach, we derive bounds on the Hessian and third-order derivatives of LNNs with respect to their input. The key motivation is that Lyapunov conditions involve nonlinear expressions, and methods based only on zeroth- or first-order information typically lead to overly conservative bounds. By incorporating Hessian and third-order derivative bounds through higher-order Taylor expansions, we obtain tighter bounds on the network output, which directly enhances the precision and effectiveness of the verification process. Building on this insight, we introduce a branch-and-bound (BnB) algorithm to verify the Lyapunov functions. Compared to dReal \cite{gao2013dreal}, which uses only zeroth-order information and runs solely on CPUs, our BnB algorithm supports higher-order bounds and batch evaluation on GPUs.

\textit{Our contributions:} (1) We propose a novel framework to learn robust control Lyapunov functions and stabilizing state-feedback controllers for nonlinear control systems through LNNs. (2) We establish bounds on the input Hessian and third-order derivative of LNNs in the spectral norm, providing tighter bounds than zeroth- or first-order information. (3) Building on these higher-order bounds, we develop a GPU-friendly branch-and-bound-based verification algorithm that significantly reduces the verification time of the learned Lyapunov functions. We establish the worst-case time complexity and validate our approach through simulations on six different dynamical systems, including the inverted pendulum, the unicycle path following example, a third-order system, the cartpole, the 2D quadrotor, and the SCARA arm.

\section{Notations} \label{sec:notations}
Let $\mathbb{R}$ denote real numbers and $\mathbb{D}_{++}^n$ the set of $n \times n$ positive definite diagonal matrices. $\lVert \cdot \rVert_p$ is the $\ell_p$ norm of a vector or the induced norm of a matrix. $I$ is the identity matrix. $\diag(\cdot)$ takes a vector as input and returns a diagonal matrix with that vector on the diagonal. A class $\mathcal{C}^k$ function is a function that has a continuous $k$-th order derivative. A class $\mathcal{K}$ function $\alpha: [0, a) \rightarrow [0,\infty)$ is a continuous strictly increasing function with $\alpha(0) = 0$. For a $\mathcal{C}^1$ function $h: \R^n \rightarrow \R$, its gradient $\grad h$ is a column vector while the partial derivative $\frac{\partial h}{\partial x}$ is a row vector. For a $\mathcal{C}^1$ function $f: \R^n \rightarrow \R^m$, $\frac{\partial f}{\partial x}$ is an $m \times n$ matrix. For a set $A$, $\partial A$ denotes the boundary of $A$, and $\interior (A)$ the interior of $A$. Denote by $B_p(a, r) \subset \R^n$ the $\ell_p$ ball centered at $a \in \R^n$ with a radius of $r > 0$. $\mathds{1}_A (\cdot)$ is the indicator function, and $\mathds{1}_A (x) = 1$ if $x \in A$ and $\mathds{1}_A (x) = 0$ otherwise. For two matrices $A$ and $B$ of the same dimension, $\odot$ represents the Hadamard product and $(A \odot B)_{ij} = A_{ij} B_{ij}$. For $x \in \R$, $\lceil x \rceil$ returns the least integer greater than or equal to $x$.

\section{Preliminaries}\label{sec:preliminaries}
{\color{revcolor} Many neural networks can be seen as a sequential composition of different layers. We briefly introduce a generalized class of neural network layers with a prescribed Lipschitz constant. First, recall the definition of Lipschitz continuity.

\begin{definition}\label{def:lip_continuous}
    A function $\phi: \R^n \rightarrow \R^m$ is Lipschitz continuous (or $\gamma$-Lipschitz), if there exists a positive real constant $\gamma$ such that for all $x_1, x_2 \in \R^n$
    \begin{equation}\label{eq:def_lip}
        \lVert \phi(x_1)-\phi(x_2) \rVert_2 \leq \gamma \lVert x_1 - x_2 \rVert_2.
    \end{equation}
    The smallest $\gamma$ such that \eqref{eq:def_lip} holds is called the Lipschitz constant of $\phi$, which is denoted by $\Lip(\phi)$.
\end{definition}

Denote by $h_{\mathrm{in}} \in \R^{n_{\mathrm{in}}}$ and $h_{\mathrm{out}} \in \R^{n_{\mathrm{out}}}$ the input and output of a certain layer, respectively. A generalized neural network layer can be formulated as
\begin{equation}\label{eq:generalized_layer}
    h_{\mathrm{out}} = U \sigma (W h_{\mathrm{in}} + b)
\end{equation}
where $U \in \R^{n_{\mathrm{out}} \times n_{\mathrm{out}}}$ and $W \in \R^{n_{\mathrm{out}} \times n_{\mathrm{in}}}$ are weight matrices, $b$ is the bias vector, and $\sigma: \R \rightarrow \R$ is a $\mathcal{C}^0$ activation function (applied element-wise) with slope restricted to $[0,1]$. Examples of such activation functions include the softplus $\sigma(x) = \log(1+e^x)$, the sigmoid $\sigma(x) = 1/(1+e^{-x})$, the hyperbolic tangent $\sigma(x) = \tanh(x)$, the rectified linear unit (ReLU) $\sigma(x) = \max(0,x)$, etc. 

Several representative parameterizations have been proposed in the literature to rigorously enforce this 1-Lipschitz property during training:
\begin{itemize}
    \item \textbf{Spectral Normalization \cite{miyato2018spectral}:} Sets $U = I$ and $W = \widetilde{W} / \lVert \widetilde{W} \rVert_2$, where $\widetilde{W} \in \R^{n_{\mathrm{out}} \times n_{\mathrm{in}}}$ is a standard weight matrix and $\lVert \widetilde{W} \rVert_2$ is its spectral norm;
    \item \textbf{Orthogonal Layers \cite{trockman2021orthogonalizing}:} Sets $U = I$ and parameterizes $W$ to be semi-orthogonal (i.e., $W^\top W = I$ or $W W^\top = I$, yielding $\lVert W \rVert_2 = 1$) through the Cayley transform; 
    \item \textbf{Sandwich Layers \cite{wang2023direct}:} Sets $U = \sqrt{2}A^\top \Psi$ and $W = \sqrt{2}\Psi^{-1}B$, where $\Psi = \diag([e^{v_1}, \dots, e^{v_{n_{\mathrm{out}}}}]) \in \mathbb{D}_{++}^{n_{\mathrm{out}}}$ is a learnable diagonal scaling matrix, $A \in \R^{n_{\mathrm{out}} \times n_{\mathrm{out}}}$ and $B \in \R^{n_{\mathrm{out}} \times n_{\mathrm{in}}}$ parameterized also through the Cayley transform in \cite{trockman2021orthogonalizing} such that $AA^\top + BB^\top = I$.
\end{itemize}

Building upon this generalized layer, we consider the following LNN with $L \geq 2$ layers:
{\allowdisplaybreaks
\begin{subequations}\label{eq:lip_nn}
\begin{align}
    h^{(0)} &= \sqrt{\gamma} x, \\
    h^{(\ell)} &= U^{(\ell)} \sigma^{(\ell)} \Big( \underbrace{W^{(\ell)} h^{(\ell-1)} + b^{(\ell)}}_{z^{(\ell)}} \Big), \quad \ell = 1, ..., L-1, \label{eq:first_layers}\\
    F(x) &\defeq h^{(L)} = \sqrt{\gamma} W^{(L)} h^{(L-1)}  \label{eq:last_layer}
\end{align}
\end{subequations}
}where $\gamma > 0$ defines the global Lipschitz bound of the network. The $L-1$ hidden layers defined in \eqref{eq:first_layers} are 1-Lipschitz layers parameterized via any of the methods discussed above, and the final linear layer defined in \eqref{eq:last_layer} satisfies $\lVert W^{(L)} \rVert_2 \leq 1$. Denote by $n_{\ell}$ the dimension of $h^{(\ell)}$ (for $\ell = 0, ..., L$) with $n_{0} = n_x$. Note that $x$ and $h^{(L)}$ are the input and output of this neural network, respectively, and we have omitted the dependence of $h^{(\ell)}$ (for $\ell = 0, ..., L$) on $x$ in \eqref{eq:lip_nn} for brevity. The shorthand notation $z^{(\ell)}$ is used for later analysis. To enforce the equilibrium condition required for Lyapunov analysis, we define the final network output as:
\begin{equation}\label{eq:shifted_lip_nn}
    \hat{F}(x) \defeq F(x) - F(0).
\end{equation}
Because $F(x)$ is constructed to be $\gamma$-Lipschitz \cite[Section~3.2]{szegedy2013intriguing}, it trivially follows that the shifted network $\hat{F}(x)$ is also $\gamma$-Lipschitz and satisfies $\hat{F}(0) = 0$.
}

\section{Learning of Robust Control Lyapunov Functions}\label{sec:learning}
\subsection{Robust Control Lyapunov Functions}
Consider the following dynamical system
\begin{equation} \label{eq:sys}
    \dot{x} = f(x,u) + G d(t,x)
\end{equation}
where $x \in \mathcal{X} \subset \R^{n_x}$ is the state, $u \in \mathcal{U} \subset \R^{n_u}$ the control input, and $\mathcal{X}$ and $\mathcal{U}$ are compact subsets of $\R^{n_x}$ and $\R^{n_u}$, respectively. We assume that $f: \R^{n_x} \times \R^{n_u} \rightarrow \R^{n_x}$ is a $\mathcal{C}^2$ function in $x$ and $u$, and there exists $u_0 \in \mathcal{U}$ such that $f(0, u_0) = 0$. The function $d: \R \times \R^{n_x} \rightarrow \R^{n_d}$ represents a disturbance signal that can be time- and state-dependent, and we assume that $\lVert d(t,x) \rVert_2 \leq \epsilon(x)$ where $\epsilon: \R^{n_x} \rightarrow \R_+$ is known and Lipschitz continuous in $x$. The matrix $G \in \R^{n_x \times n_d}$ specifies the channel of the disturbance signal $d(t,x)$. We are interested in finding a state-feedback controller $\pi: \R^{n_x} \rightarrow \R^{n_u}$ that stabilizes \eqref{eq:sys}. The closed-loop system is
\begin{equation} \label{eq:cl_sys}
    \dot{x} = f(x,\pi(x)) + G d(t,x) \defeq f_{\mathrm{cl}}(t,x).
\end{equation}

Recall the definition of uniform ultimate boundedness:
\begin{definition}[Definition~4.6 \cite{khalil2002nonlinear}]\label{def:uub}
The solution to \eqref{eq:cl_sys} is said to be uniformly ultimately bounded (UUB) if there exist two positive constants $b$ and $c$ (both independent of $t_0$) such that, for any $a \in (0,c)$, there exists $T=T(a,b) \geq 0$ (independent of $t_0$) such that
\begin{equation}\label{eq:uub}
    \lVert x(t_0) \rVert_2 \leq a \Rightarrow \lVert x(t) \rVert_2 \leq b, \quad \forall t \geq t_0 +T.
\end{equation}
In addition, $b$ is referred to as the ultimate bound.
\end{definition}

Next, we introduce the definition of robust control Lyapunov functions used in this work.
\begin{definition}\label{def:rclf}
Let $D \subset \R^{n_x}$ be a domain containing the origin and $V: D \to \R$ a $\mathcal{C}^1$ function. $V$ is a robust control Lyapunov function (RCLF) for the system \eqref{eq:cl_sys} if there exists $\mu > 0$ such that:
\begin{enumerate}
\item there exists $V_r>0$ such that $B_2(0,\mu) \subset \Omega_{V_r} \subset D$ where $\Omega_{V_r} = \{ x \in \R^{n_x} \mid V(x) \leq V_r \}$.

\item there exist class $\mathcal{K}$ functions $\alpha_1$ and $\alpha_2$, and a Lipschitz continuous positive definite function $\omega$ such that for all $t \geq t_0$ and $x \in D$
\begin{subequations}
\begin{align}
    &\alpha_1(\lVert x \rVert_2) \leq V(x) \leq \alpha_2(\lVert x \rVert_2), \label{eq:pd_cond}\\
    &\dot{V}(t,x) \defeq \frac{\partial V}{\partial x}(x) f_{\mathrm{cl}}(t, x) \leq -\omega(x),\; \forall \lVert x \rVert_2 \geq \mu .\label{eq:stability_cond}
\end{align}
\end{subequations}
\end{enumerate}
\end{definition}

The following properties hold for an RCLF.
\begin{property}\label{prpt:forward_invariant}
If $V$ is an RCLF for system \eqref{eq:cl_sys}, then $\Omega_{V_r} = \{ x \in \R^{n_x} \mid V(x) \leq V_r \}$ is forward invariant. 
\end{property}
\begin{proof}
As $\dot{V}(t,x) < 0$ on $D \backslash B_2(0,\mu)$ by \eqref{eq:stability_cond} and $B_2(0,\mu) \subset \Omega_{V_r}$ (by the first condition of Definition~\ref{def:rclf}), we know that $\dot{V}(t,x) < 0$ on $\partial \Omega_{V_r}$. Therefore, $\Omega_{V_r}$ is forward invariant by Nagumo's Theorem \cite[Theorem~4.7]{blanchini2015set}.
\end{proof}

\begin{property}\label{prpt:uub}
If $V$ is an RCLF for system \eqref{eq:cl_sys}, then the solution to \eqref{eq:cl_sys} is UUB with $c = \alpha_2^{-1}(V_r)$ and 
\begin{equation}
b = \begin{cases}
    \alpha_1^{-1}( \alpha_2 (\mu) ), & \text{if} \ c > \alpha_1^{-1}( \alpha_2 (\mu) ),\\
    \alpha_1^{-1}(V_r), & \text{otherwise}.
\end{cases}
\end{equation}

\end{property}
\begin{proof}
Let $x(t_0)$ be the initial state of $\eqref{eq:cl_sys}$, and define $c = \alpha_2^{-1}(V_r)$, $b_1 = \alpha_1^{-1}( \alpha_2 (\mu) )$, and $b_2 = \alpha_1^{-1}(V_r)$. From \eqref{eq:pd_cond}, we have $\alpha_1 (\mu) \leq \alpha_2 (\mu)$, which implies $\mu \leq \alpha_1^{-1}( \alpha_2 (\mu) ) = b_1$. Moreover, \eqref{eq:pd_cond} ensures that $V(x) \leq \alpha_2 (c) = V_r$ for all $\lVert x \rVert_2 \leq c$, so $B_2(0,c) \subset \Omega_{V_r}$. If $c > b_1$, then $\mu \leq b_1 < c$, and thus $B_2(0, \mu) \subset B_2(0,b_1) \subset B_2(0,c) \subset \Omega_{V_r}$. Next, we distinguish two cases. If $V(x(t_0)) \leq \alpha_2 (\mu)$, then \eqref{eq:uub} holds with $T=0$ and $b=b_1$, because $\{ x \in \R^{n_x} \mid V(x) \leq \alpha_2 (\mu) \}$, which contains $B_2(0, \mu)$, is forward invariant by \eqref{eq:stability_cond}. Consequently, $\lVert x(t) \rVert_2 \leq b_1$ for all $t \geq t_0$. Otherwise, $V(x(t_0)) > \alpha_2 (\mu)$. Take $k=\min_{x\in A}\omega(x)$ with $A = \{ x \in \Omega_{V_r} \mid \lVert x \rVert_2 \geq \mu \}$. Then, $V(x(t)) \leq V(x(t_0)) - k (t-t_0)$ holds on $A$ by \eqref{eq:stability_cond}, which shows that $V(x(t))$ must decrease to $\alpha_2 (\mu)$ within the time interval $[t_0, t_0 + T]$ with $T = (V(x(t_0)) - \alpha_2 (\mu))/k$. On the other hand, if $c \leq b_1$, then from the forward invariance of $\Omega_{V_r}$ by Property~\ref{prpt:forward_invariant}, \eqref{eq:uub} holds with $T=0$ and $b=b_2$. This concludes the proof.
\end{proof}

\begin{property}\label{prpt:scaling}
If $V$ is an RCLF for system \eqref{eq:cl_sys}, then $\kappa V$ is also an RCLF for system \eqref{eq:cl_sys} for any $\kappa > 0$.
\end{property}
\begin{proof}
It suffices to replace $V_r$ with $\kappa V_r$ in the first condition of Definition~\ref{def:rclf}, and $\alpha_1$, $\alpha_2$, and $\omega$ with $\kappa \alpha_1$, $\kappa \alpha_2$, and $\kappa \omega$, respectively, in the second condition.
\end{proof}

Property~\ref{prpt:scaling} implies that, on a bounded domain, it is possible to limit the Lipschitz constant of an RCLF. In the following, we will use neural networks to learn an RCLF and a stabilizing state-feedback controller.

\begin{remark}
In the absence of disturbances, setting $\mu=0$ in Definition~\ref{def:rclf} reduces our formulation to the standard CLF. In this case, the UUB result in Property~\ref{prpt:uub} strengthens to asymptotic stability, thereby recovering the classical CLF framework (e.g., \cite{artstein1983stabilization}, \cite[Theorem 4.8]{khalil2002nonlinear}) and connecting directly to prior neural CLF approaches (e.g., \cite{chang2019neural}).
\end{remark}

\subsection{Learning Algorithm}
\subsubsection{Setup}
We use LNNs to train an RCLF and a stabilizing controller. Let $V_{\theta_1}$ be a shifted LNN defined in \eqref{eq:shifted_lip_nn} for the RCLF, and denote by $L_V$ and $\gamma_V$ the number of layers and Lipschitz bound of $V_{\theta_1}$, respectively. The activation functions of $V_{\theta_1}$ are chosen such that $V_{\theta_1}$ is $\mathcal{C}^3$. Let $\pi_{\theta_2}$ also be an LNN for the controller and denote by $L_\pi$ and $\gamma_\pi$ the number of layers and Lipschitz bound of $\pi_{\theta_2}$, respectively. We only require the activation functions of $\pi_{\theta_2}$ to be $\mathcal{C}^0$. In addition, we consider box constraints on the control inputs for the rest of this work, i.e., $\mathcal{U} = \{ u \in \R^{n_u} \mid \underline{u}_i \leq u_i \leq \overline{u}_i, i= 1,..., n_u \}$, and the final outputs of $\pi_{\theta_2}$ are clipped between $\underline{u}_i$ and $\overline{u}_i$ (which does not increase its Lipschitz bound). $\theta_1$ and $\theta_2$ are the trainable parameters of $V_{\theta_1}$ and $\pi_{\theta_2}$, respectively. 

LNNs are adopted in this work because a predefined Lipschitz constant enables explicit bounds on higher-order derivatives, which yield tighter bounds and significantly accelerate verification of the Lyapunov function and controller. In addition, the Lipschitz constraint serves as a natural regularizer that improves generalization during training. Although this choice reduces expressivity compared with unconstrained networks, the trade-off is justified since the derivative bounds are essential for efficient verification. Moreover, an RCLF can always be scaled so its Lipschitz constant falls within a prescribed bound (Property~\ref{prpt:scaling}), and for state-feedback controllers, larger Lipschitz bounds can be used to mitigate potential conservatism.

\subsubsection{Training Process}
We consider a bounded state space $\mathcal{X}$. Define the training set $\mathcal{D}_1 = \{x^{(1)}, ..., x^{(N_1)}\}$ that is sampled from $\mathcal{X}$ and the boundary set $\mathcal{D}_2 = \{x^{(1)}, ..., x^{(N_2)}\}$ that is sampled from $\partial \mathcal{X}$. In the following, we define different loss terms in order to train $V_{\theta_1}$ and $\pi_{\theta_2}$ to be a valid RCLF and a stabilizing controller, respectively.

\textbf{Positive definiteness:} To encourage $V_{\theta_1}$ to be positive definite (i.e., condition \eqref{eq:pd_cond}), we define the following loss:
\begin{equation}
    \mathcal{L}_1 = \frac{1}{N_1} \sum_{x^{(i)} \in \mathcal{D}_1} \max \left(- V_{\theta_1}(x^{(i)}) + \varepsilon_1, 0 \right).
\end{equation}
where $\varepsilon_1 > 0$ is intended to create some margin as we only train on a finite number of data samples.

\textbf{Decrease condition:} For condition \eqref{eq:stability_cond}, we first upper bound $\dot{V}(t,x)$ by
\begin{align} \label{eq:def_H}
    &\dot{V}_{\theta_1}(t,x) = \frac{\partial V_{\theta_1}}{\partial x} \left( f(x,\pi_{\theta_2}(x)) + G d(t,x) \right) \notag \\
    &\leq \frac{\partial V_{\theta_1}}{\partial x} f(x,\pi_{\theta_2}(x)) \allowbreak + \left \lVert \frac{\partial V_{\theta_1}}{\partial x} G \right \rVert_2 \epsilon(x) \defeq H(x).
\end{align}
Then, to encourage $H(x) + \omega(x) \leq 0$ for $\lVert x \rVert_2 \geq \mu$ (which is a stronger condition than the original condition \eqref{eq:stability_cond}), we define the loss:
\begin{align}
    &\mathcal{L}_2 = \frac{1}{\sum_{x^{(i)} \in \mathcal{D}_1} \mathds{1}_{\{\lVert x \rVert_2 \geq \mu \}}(x^{(i)})}  \times \notag\\
    &\sum_{x^{(i)} \in \mathcal{D}_1} \max \left( H(x^{(i)}) + \omega (x^{(i)}) + \varepsilon_2, 0 \right) \mathds{1}_{\{\lVert x \rVert_2 \geq \mu \}}( x^{(i)})
\end{align}
where $\varepsilon_2 > 0$ is used to create some margin.

We use $\widehat{V}_r = \min_{x \in \mathcal{D}_2} V_{\theta_1} (x)$ (an estimate of $V_r$) to identify the largest possible forward invariant region entirely contained in $\mathcal{X}$ (which will be verified later in Section~\ref{sec:verification}). The next two losses are designed to increase the size of this region.

\textbf{Boundary value:} To encourage that a ball $B_2(0, \mu)$ can be contained within a sublevel set of $V_{\theta_1}$, we define
\begin{equation}
    \mathcal{L}_3 = \frac{1}{N_2} \sum_{x^{(i)} \in \mathcal{D}_2} \max \left( -V_{\theta_1} (x^{(i)}) + V_l + \varepsilon_3, 0 \right)
\end{equation}
where $V_l > 0$ serves as an intended lower bound of $V_{\theta_1}$ on $\partial \mathcal{X}$ and $\varepsilon_3 > 0$ creates some margin.

\textbf{Maximal coverage:} To encourage a larger coverage of $\mathcal{X}$ by a sublevel set of $V_{\theta_1}$, we define the following loss
\begin{equation}
    \mathcal{L}_4 = \sqrt{ \frac{1}{N_2 - 1} \sum_{x^{(i)} \in \mathcal{D}_2} \left( V_{\theta_1}(x^{(i)}) - \overline{V} \right)^2 }
\end{equation}
where $\overline{V} = \frac{1}{N_2} \sum_{x^{(i)} \in \mathcal{D}_2} V_{\theta_1}(x^{(i)})$ to reduce the standard deviation of $V_{\theta_1}$ on $\partial \mathcal{X}$. Intuitively, if $V_{\theta_1}$ has a small variation on $\partial \mathcal{X}$, then $\{ x \in \mathcal{X} \mid  V_{\theta_1}(x) \leq \widehat{V}_r \}$ is more likely to cover a larger portion of $\mathcal{X}$ and $\widehat{V}_r$ will be less conservative.

\begin{algorithm}[t]
\caption{Training of $V_{\theta_1}$ and $\pi_{\theta_2}$}
\label{alg:train}
\begin{algorithmic}[1]
{\color{revcolor}
\REQUIRE Dynamical system $f$, bound on the disturbance signal $\epsilon$, state space $\mathcal{X}$, neural networks $V_{\theta_1}$ and $\pi_{\theta_2}$, hyperparameters $V_l, \mu, c_1, c_2, c_3, c_4, \varepsilon_1, \varepsilon_2, \varepsilon_3$, and number of epochs $N_e$.

\STATE Pretrain $V_{\theta_1}$ and $\pi_{\theta_2}$ to LQR solutions;
\STATE Sample $\mathcal{X}$ and $\partial \mathcal{X}$ to obtain $\mathcal{D}_1$ and $\mathcal{D}_2$;

\FOR{epoch $i = 1, \dots, N_e$}
    \STATE Update $\theta_1$ and $\theta_2$ via gradient descent to minimize the loss $\mathcal{L}$ over mini-batches of $\mathcal{D}_1$ and $\mathcal{D}_2$;
    \STATE Evaluate empirical losses $\mathcal{L}'_1$ and $\mathcal{L}'_2$ on $\mathcal{D}_1$ by setting margins $\varepsilon_1 = \varepsilon_2 = 0$;

    \IF{$\mathcal{L}'_1 = 0$ and $\mathcal{L}'_2 = 0$}
        \STATE Calculate $\widehat{V}_r = \min_{x \in \mathcal{D}_2} V_{\theta_1}(x)$;
        \STATE Store $\theta_1^\star$, $\theta_2^\star$, and $\widehat{V}_r^\star$ if the current parameters yield the largest empirical forward invariant region computed so far;
    \ENDIF
\ENDFOR

\RETURN $\theta_1^\star$, $\theta_2^\star$, and $\widehat{V}_r^\star$.
}
\end{algorithmic}
\end{algorithm}

The overall loss function is a weighted sum of these losses:
\begin{equation}\label{eq:total_loss}
\mathcal{L} = c_1 \mathcal{L}_1 + c_2 \mathcal{L}_2 + c_3 \mathcal{L}_3 + c_4 \mathcal{L}_4
\end{equation}
with $c_1, c_2, c_3, c_4 > 0$. The detailed training algorithm is given in Algorithm~\ref{alg:train}. It first initializes $V_{\theta_1}$ and $\pi_{\theta_2}$ using the LQR (Linear Quadratic Regulator) solution based on a linearization of the system, then trains $V_{\theta_1}$ and $\pi_{\theta_2}$ to satisfy conditions \eqref{eq:pd_cond} and \eqref{eq:stability_cond}, and keeps the optimal parameters that yield the largest forward invariant region. Algorithm~\ref{alg:train} is most likely to yield a valid RCLF when (i) the architecture of $V_{\theta_1}$ is expressive enough to approximate a Lyapunov function and the Lipschitz bound of $\pi_{\theta_2}$ is sufficiently large to provide control authority, (ii) the training dataset adequately covers the relevant state space, and (iii) conservative offsets $\varepsilon_1, \varepsilon_2$ are used initially. Since Algorithm~\ref{alg:train} minimizes the loss $\mathcal{L}$ in \eqref{eq:total_loss} only on a finite set of sampled states, eliminating all violation points requires parameter tuning. A common strategy is to begin with conservative values of $\varepsilon_1$ and $\varepsilon_2$, and then gradually reduce them to prevent overfitting of $V_{\theta_1}$ and $\pi_{\theta_2}$.

\begin{remark}\label{rmk:learned_dynamics}
{\color{revcolor} We emphasize that our framework does not certify stability for an unknown model: as is well known, formal certification over a domain requires reliable knowledge of the dynamics. What the disturbance-robust formulation provides is \emph{tolerance to a bounded, certified model mismatch}. Suppose the true dynamics $\dot{x}=f(x,u)$ are not known exactly, but a nominal model $\hat{f}(x,u)$ is available together with a known, rigorous error bound on $f-\hat{f}$ over the domain $D$ (e.g., from robust system identification, or Gaussian-process
regression with high-probability confidence bounds). Writing the closed loop as $\dot{x}=\hat{f}(x,\pi_{\theta_2}(x))+d(x)$ with mismatch
$d(x)=f(x,\pi_{\theta_2}(x))-\hat{f}(x,\pi_{\theta_2}(x))$, any certified bound $\lVert d(x)\rVert_2 \le \epsilon(x)$ allows $d$ to be treated as the state-dependent disturbance of \eqref{eq:sys} (with $G=I$), so that Algorithm~\ref{alg:train} can be applied to the nominal $\hat{f}$. For instance, if Lipschitz constants $K_x,K_u$ of
$f-\hat{f}$ in $x$ and $u$ are known on $D$ and $f(0,\pi_{\theta_2}(0)) - \hat{f}(0,\pi_{\theta_2}(0)) = 0$, then $\lVert d(x)\rVert_2 \le (K_x+K_u\gamma_\pi)\lVert x\rVert_2 \defeq \epsilon(x)$. The resulting guarantee is exactly as strong as the model-error bound itself: deterministic when $\epsilon$ is a hard bound, and high-probability when $\epsilon$ holds with a given confidence. Quantifying such a bound is a separate identification problem and is beyond the scope of this paper.}
\end{remark}

{\color{revcolor}
\begin{remark}\label{rmk:bounding_K_functions}
Condition \eqref{eq:pd_cond} requires class $\mathcal{K}$ functions $\alpha_1$ and $\alpha_2$ such that $\alpha_1(\lVert x \rVert_2) \leq V(x) \leq \alpha_2(\lVert x \rVert_2)$ for all $x \in D$. Because the neural network parameterization ensures that $V$ is continuous and the verification domain $D \subset \mathbb{R}^{n_x}$ is compact and contains the origin, explicitly enforcing positive definiteness ($V(0)=0$ and $V(x) > 0$ for $x \in D \setminus \{0\}$) guarantees, by extending \cite[Lemma~4.3]{khalil2002nonlinear} from a ball $B_2(0,r) \subset D$ to all of $D$ via compactness, that valid class $\mathcal{K}$ bounding functions exist \textit{a posteriori} for all $x \in D$.
\end{remark}
}

\begin{figure}[t]
    \centering
    \includegraphics[width=0.85\linewidth]{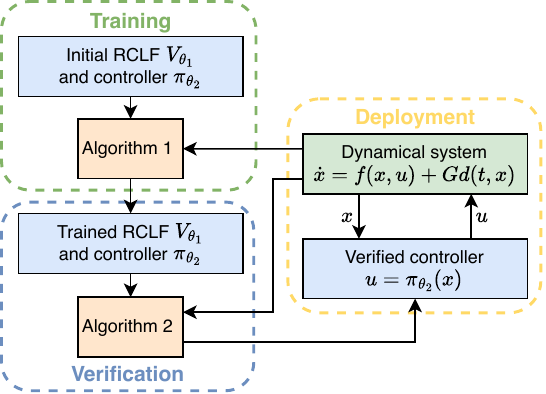}
    \caption{Overall control-design workflow.}
    \label{fig:diagram}
\end{figure}

\section{Verification of Robust Control Lyapunov Functions}\label{sec:verification}
In this section, we develop an algorithm to verify the trained state-feedback controller and RCLF. The complete workflow—from training to verification and deployment—is illustrated in Fig.~\ref{fig:diagram}. The conditions in Definition~\ref{def:rclf} are generally nonlinear and nonconvex, so tighter bounds are needed to reduce conservatism and improve efficiency. Leveraging the structure and explicit Lipschitz bounds of LNNs, we can compute higher-order derivative bounds in a tractable way.

\subsection{Bounds on the Hessian and Third-Order Derivative of LNNs}\label{sec:bounds}

{\color{revcolor} \begin{lemma}\label{lemma:lip_nn_jac}
Consider the LNN $F(x)$ defined in \eqref{eq:lip_nn}. If each activation function $\sigma^{(\ell)}$ is $\mathcal{C}^1$ for $\ell = 1, ..., L-1$, then the Jacobian of $F(x)$ w.r.t. $x$ is 
\begin{equation}\label{eq:lip_nn_jac}
\frac{\partial F(x)}{\partial x} = \frac{\partial F(x)}{\partial h^{(L-1)}} \frac{\partial h^{(L-1)}}{\partial h^{(L-2)}} \cdots \frac{\partial h^{(1)}}{\partial h^{(0)}}
\frac{\partial h^{(0)}}{\partial x}
\end{equation}
where $\frac{\partial h^{(0)}}{\partial x} = \sqrt{\gamma} I$, $\frac{\partial F(x)}{\partial h^{(L-1)}} = \sqrt{\gamma} W^{(L)}$, and $\frac{\partial h^{(\ell)}}{\partial h^{(\ell-1)}} = U^{(\ell)} \diag ({\sigma^{(\ell)}}'(z^{(\ell)})) W^{(\ell)}$ for $\ell = 1,..., L-1$. Furthermore, the shifted network satisfies $\frac{\partial \hat{F}(x)}{\partial x} = \frac{\partial F(x)}{\partial x}$.
\end{lemma}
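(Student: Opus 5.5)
The plan is a direct application of the multivariable chain rule to the composition structure in \eqref{eq:lip_nn}. Since each $\sigma^{(\ell)}$ is $\mathcal{C}^1$ and the remaining operations are affine, every layer map is $\mathcal{C}^1$. Write $F = \phi_L \circ \phi_{L-1} \circ \cdots \circ \phi_1 \circ \phi_0$ with
\begin{equation*}
\phi_0(x) = \sqrt{\gamma}\, x, \qquad \phi_\ell(h) = U^{(\ell)} \sigma^{(\ell)}\bigl(W^{(\ell)} h + b^{(\ell)}\bigr), \qquad \phi_L(h) = \sqrt{\gamma}\, W^{(L)} h .
\end{equation*}
The chain rule for compositions of differentiable maps then gives \eqref{eq:lip_nn_jac} as the ordered product of the layer Jacobians, each evaluated at the current intermediate value $h^{(\ell-1)}(x)$. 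Formally, I would argue by induction on $\ell$ that $\partial h^{(\ell)}/\partial x = \frac{\partial h^{(\ell)}}{\partial h^{(\ell-1)}} \, \partial h^{(\ell-1)}/\partial x$.

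Next I compute each factor. The first and last maps are linear, so their Jacobians are $\sqrt{\gamma} I$ and $\sqrt{\gamma} W^{(L)}$. For a hidden layer, apply the chain rule once more inside $\phi_\ell$. The map $h \mapsto z^{(\ell)} = W^{(\ell)} h + b^{(\ell)}$ has Jacobian $W^{(\ell)}$. The elementwise map $z \mapsto \sigma^{(\ell)}(z)$ has Jacobian $\diag({\sigma^{(\ell)}}'(z))$, because its $i$-th component depends only on $z_i$, so the off-diagonal partial derivatives vanish. Finally, multiplication by $U^{(\ell)}$ contributes $U^{(\ell)}$. Multiplying these gives $U^{(\ell)} \diag({\sigma^{(\ell)}}'(z^{(\ell)})) W^{(\ell)}$.

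For the shifted network \eqref{eq:shifted_lip_nn}, $F(0)$ is a constant vector independent of $x$, so its derivative is zero. Linearity of differentiation then gives $\partial \hat F/\partial x = \partial F/\partial x$.

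There is no real obstacle. The only points needing care are keeping the product order right: matrices are applied from the output side, consistent with the paper's row/Jacobian convention. One should also note that the $\mathcal{C}^1$ hypothesis is exactly what justifies both the classical chain rule and the diagonal derivative of the activation layer.
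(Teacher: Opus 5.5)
Your proposal is correct and takes the same approach as the paper, whose entire proof is ``a direct application of the chain rule.'' Your layer-by-layer decomposition, the diagonal Jacobian of the elementwise activation, and the observation that $F(0)$ is constant fill in the details the paper leaves implicit.
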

\begin{proof}
    This is a direct application of the chain rule.
\end{proof}}

{\color{revcolor}
\begin{lemma}\label{lemma:lip_nn_hess}
Consider the LNN $F(x)$ defined in \eqref{eq:lip_nn}. If $\sigma^{(\ell)}$ is $\mathcal{C}^2$ for $\ell = 1, ..., L-1$, then the Hessian of $F_i(x)$ (the $i$-th component of $F(x)$) w.r.t. $x$ is
\begin{align}\label{eq:lip_nn_hess}
\frac{\partial^2 F_i(x)}{\partial x^2} &= \sum_{\ell=1}^{L-1} \Bigl ( W^{(\ell)} \frac{\partial h^{(\ell-1)}}{\partial x} \Bigr )^\top \diag \Bigl ( {\sigma^{(\ell)}}''(z^{(\ell)}) \notag \\
&\quad \quad  \quad  \quad  \odot  \Bigl (\frac{\partial F_i(x)}{\partial h^{(\ell)}} U^{(\ell)} \Bigr )^\top  \Bigr ) W^{(\ell)} \frac{\partial h^{(\ell-1)}}{\partial x}.
\end{align}
\end{lemma}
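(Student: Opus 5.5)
The plan is to differentiate the Jacobian formula of Lemma~\ref{lemma:lip_nn_jac} once more, organizing the computation as a backward (reverse-mode) recursion so that each layer contributes exactly one summand. Write $g^{(\ell)} \defeq \frac{\partial F_i(x)}{\partial h^{(\ell)}}$, a row vector in $\R^{1\times n_\ell}$. By the chain rule, $g^{(L-1)} = \sqrt{\gamma}\, W^{(L)}_{i,:}$ is constant in $x$, and
\[
g^{(\ell-1)} = g^{(\ell)} U^{(\ell)} \diag\bigl({\sigma^{(\ell)}}'(z^{(\ell)})\bigr) W^{(\ell)}, \qquad \ell = L-1,\dots,1 .
\]
Also $\frac{\partial F_i}{\partial x} = g^{(0)}\sqrt{\gamma}$, so $\nabla F_i(x) = \sqrt{\gamma}\,(g^{(0)})^\top$. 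The Hessian is then the Jacobian of this column vector with respect to $x$.

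The key observation is that $g^{(\ell)}$ depends on $x$ only through $z^{(\ell+1)},\dots,z^{(L-1)}$. I would prove by downward induction on $\ell$ the identity
\[
\frac{\partial (g^{(\ell-1)})^\top}{\partial x} = \bigl(W^{(\ell)}\bigr)^\top \diag\bigl({\sigma^{(\ell)}}'(z^{(\ell)})\bigr)\bigl(U^{(\ell)}\bigr)^\top \frac{\partial (g^{(\ell)})^\top}{\partial x} + \bigl(W^{(\ell)}\bigr)^\top \diag\bigl({\sigma^{(\ell)}}''(z^{(\ell)}) \odot (g^{(\ell)} U^{(\ell)})^\top\bigr) \frac{\partial z^{(\ell)}}{\partial x}.
\]
This follows from the product rule applied to the recursion. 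The second term comes from differentiating the diagonal factor. For a fixed vector $v$, $\frac{\partial}{\partial x}\bigl[\diag(\sigma'(z))\, v\bigr] = \diag(\sigma''(z)\odot v)\,\frac{\partial z}{\partial x}$, since the $k$-th entry $\sigma'(z_k)v_k$ has gradient $\sigma''(z_k)v_k \frac{\partial z_k}{\partial x}$. Here $\frac{\partial z^{(\ell)}}{\partial x} = W^{(\ell)}\frac{\partial h^{(\ell-1)}}{\partial x}$ because $b^{(\ell)}$ is constant.

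Unrolling the recursion from $\ell=1$ up to $L-1$, and using $\partial g^{(L-1)}/\partial x = 0$, gives a sum over $\ell$. In that sum, the $\ell$-th term is premultiplied by the product of the transposed first-order layer Jacobians of layers $1,\dots,\ell-1$ together with the factor $\sqrt{\gamma}$ from $\partial h^{(0)}/\partial x$. By Lemma~\ref{lemma:lip_nn_jac}, this product is exactly $\bigl(\frac{\partial h^{(\ell-1)}}{\partial x}\bigr)^\top$. Each summand therefore becomes
\[
\Bigl(W^{(\ell)}\frac{\partial h^{(\ell-1)}}{\partial x}\Bigr)^\top \diag\Bigl({\sigma^{(\ell)}}''(z^{(\ell)}) \odot \bigl(g^{(\ell)}U^{(\ell)}\bigr)^\top\Bigr) W^{(\ell)}\frac{\partial h^{(\ell-1)}}{\partial x},
\]
which is \eqref{eq:lip_nn_hess}.

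Smoothness is needed only to justify the product rule: $\mathcal{C}^2$ activations make each $g^{(\ell)}$ a $\mathcal{C}^1$ function of $x$. The final layer is linear, so it contributes no second-order term.

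The main obstacle is the bookkeeping in the unrolling step. One must confirm that the accumulated left factors collapse to $\bigl(\partial h^{(\ell-1)}/\partial x\bigr)^\top$, with the $\sqrt{\gamma}$ and the transpose ordering matching. I would check this first on $L=2$ and $L=3$ before writing the general induction.
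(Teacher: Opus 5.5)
Your proposal is correct and is essentially the paper's argument: the paper applies the product rule directly to the chain-rule factorization of Lemma~\ref{lemma:lip_nn_jac}, differentiating with respect to each $x_j$ to get $\sum_{\ell}\frac{\partial F_i(x)}{\partial h^{(\ell)}} U^{(\ell)}\diag\bigl({\sigma^{(\ell)}}''(z^{(\ell)})\odot\frac{\partial z^{(\ell)}}{\partial x_j}\bigr)W^{(\ell)}\frac{\partial h^{(\ell-1)}}{\partial x}$, which is the same per-layer expansion your unrolled backward recursion produces. The only difference is bookkeeping: the paper then uses the identity $a\diag(b\odot c)=c^\top\diag(b\odot a^\top)$ and stacks the rows over $j$, whereas differentiating the transposed adjoint $(g^{(\ell)})^\top$ as a column vector gives you the symmetric form directly.
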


\begin{proof}
The proof is given in Appendix~\ref{appx:proof_lip_nn_hess}.
\end{proof}}

From Lemma~\ref{lemma:lip_nn_hess}, we are ready to establish a Hessian bound of the LNNs in the spectral norm.

{\color{revcolor}
\begin{theorem}\label{thm:lip_nn_hess_bound}
Consider the LNN $F(x)$ defined in \eqref{eq:lip_nn}. If $\sigma^{(\ell)}$ is $\mathcal{C}^2$ for $\ell = 1, \dots, L-1$, then the spectral norm of the Hessian $\frac{\partial^2 F_i(x)}{\partial x^2}$ satisfies
\begin{equation}\label{eq:lip_nn_hess_bound}
\left \lVert \frac{\partial^2 F_i(x)}{\partial x^2} \right \rVert_2 \le \sum_{\ell=1}^{L-1} \gamma^{\frac{3}{2}} \sup\bigl|{\sigma^{(\ell)}}''\bigr| \bigl\lVert \diag\bigl(c (U^{(\ell)})\bigr)^{1/2} W^{(\ell)} \bigr\rVert_2^2
\end{equation}
where $[c(U^{(\ell)})]_k = \lVert U^{(\ell)}_{:,k} \rVert_2$.
\end{theorem}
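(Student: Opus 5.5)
We start from the Hessian formula of Lemma~\ref{lemma:lip_nn_hess}, apply the triangle inequality over the sum in $\ell$, and bound each summand separately. Fix $\ell$ and write $A_\ell = W^{(\ell)} \frac{\partial h^{(\ell-1)}}{\partial x}$ and $v_\ell = \bigl(\frac{\partial F_i(x)}{\partial h^{(\ell)}} U^{(\ell)}\bigr)^\top \in \R^{n_\ell}$. The summand is then $A_\ell^\top D_\ell A_\ell$ with $D_\ell = \diag\bigl({\sigma^{(\ell)}}''(z^{(\ell)}) \odot v_\ell\bigr)$. The naive bound $\lVert A_\ell\rVert_2^2 \lVert D_\ell \rVert_2$ does not give the stated weighted quantity. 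Instead we will bound $|y^\top A_\ell^\top D_\ell A_\ell y|$ for unit $y$, which suffices because the summand is symmetric.

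\textbf{Step 1: controlling the diagonal entries.} The $k$-th component of $v_\ell$ is $\frac{\partial F_i}{\partial h^{(\ell)}} U^{(\ell)}_{:,k}$. By Cauchy--Schwarz,
\[
|[v_\ell]_k| \le \Bigl\lVert \frac{\partial F_i}{\partial h^{(\ell)}} \Bigr\rVert_2 \, \lVert U^{(\ell)}_{:,k}\rVert_2 = \Bigl\lVert \frac{\partial F_i}{\partial h^{(\ell)}} \Bigr\rVert_2 \, [c(U^{(\ell)})]_k .
\]
By Lemma~\ref{lemma:lip_nn_jac}, the row vector $\frac{\partial F_i}{\partial h^{(\ell)}}$ is the $i$-th row of $\sqrt{\gamma} W^{(L)}$ multiplied by the layer Jacobians from $h^{(\ell)}$ to $h^{(L-1)}$. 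Each layer is $1$-Lipschitz, so each of these Jacobians has spectral norm at most $1$; this is immediate where differentiable, since the operator norm of a Jacobian is bounded by the Lipschitz constant. Together with $\lVert W^{(L)}\rVert_2 \le 1$, this gives $\lVert \frac{\partial F_i}{\partial h^{(\ell)}}\rVert_2 \le \sqrt{\gamma}$. Hence $|[D_\ell]_{kk}| \le \sqrt{\gamma}\,\sup|{\sigma^{(\ell)}}''|\,[c(U^{(\ell)})]_k$.

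\textbf{Step 2: the weighted quadratic form.} Setting $w = A_\ell y$,
\[
|w^\top D_\ell w| \le \sum_k |[D_\ell]_{kk}|\, w_k^2 \le \sqrt{\gamma}\,\sup|{\sigma^{(\ell)}}''| \, \bigl\lVert \diag(c(U^{(\ell)}))^{1/2} W^{(\ell)} \tfrac{\partial h^{(\ell-1)}}{\partial x} y\bigr\rVert_2^2 .
\]
We then use submultiplicativity together with $\lVert \frac{\partial h^{(\ell-1)}}{\partial x}\rVert_2 \le \sqrt{\gamma}$. This bound holds because $h^{(0)} = \sqrt{\gamma}x$ and the hidden layers are $1$-Lipschitz. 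The result is
\[
|w^\top D_\ell w| \le \gamma^{3/2}\sup|{\sigma^{(\ell)}}''|\,\bigl\lVert \diag(c(U^{(\ell)}))^{1/2}W^{(\ell)}\bigr\rVert_2^2 .
\]
Summing over $\ell$ yields \eqref{eq:lip_nn_hess_bound}.

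\textbf{Main obstacle.} The key step is Step 1: the entrywise bound must be arranged so that the factor $\lVert U^{(\ell)}_{:,k}\rVert_2$ is absorbed into the quadratic form as the weight $\diag(c)^{1/2}$ rather than as $\lVert U^{(\ell)}\rVert_2$. The quadratic-form argument in Step 2 handles this, since the diagonal weights act on $w_k^2$. A second point to check is the claim that $\lVert U^{(\ell)}\diag({\sigma^{(\ell)}}')W^{(\ell)}\rVert_2 \le 1$ for the sandwich parameterization, where $U^{(\ell)}$ and $W^{(\ell)}$ are individually not contractive. We justify it via the layer's $1$-Lipschitz property rather than by bounding the factors separately.
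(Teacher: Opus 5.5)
Your proposal is correct and follows essentially the same argument as the paper. The paper also reduces each summand $\Theta^{(\ell)\top}\diag(\lambda^{(\ell)})\Theta^{(\ell)}$ to the weighted quadratic form $\sum_k |\lambda^{(\ell)}_k| (\Theta^{(\ell)} v)_k^2$, bounds $|\lambda^{(\ell)}_k|$ by Cauchy--Schwarz together with $\lVert \partial F_i/\partial h^{(\ell)}\rVert_2 \le \sqrt{\gamma}$, and finishes with $\lVert \partial h^{(\ell-1)}/\partial x\rVert_2 \le \sqrt{\gamma}$. You justify the unit-norm layer Jacobians through the $1$-Lipschitz property of each layer rather than by bounding $U^{(\ell)}$ and $W^{(\ell)}$ separately; the paper leaves this step implicit, and your way is the right one for the sandwich case.
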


\begin{proof}
By Lemma~\ref{lemma:lip_nn_hess}, $\frac{\partial^2 F_i(x)}{\partial x^2} = \sum_{\ell=1}^{L-1} \Theta^{(\ell)\top} \allowbreak \diag(\lambda^{(\ell)}) \Theta^{(\ell)}$, where $\Theta^{(\ell)} = W^{(\ell)} \frac{\partial h^{(\ell-1)}}{\partial x}$ and $\lambda^{(\ell)} = {\sigma^{(\ell)}}''(z^{(\ell)}) \odot \bigl(\frac{\partial F_i(x)}{\partial h^{(\ell)}} U^{(\ell)}\bigr)^\top$. Given a fixed $\ell$ and any unit vector $v$,
$\lvert v^\top \Theta^{(\ell)\top} \diag(\lambda^{(\ell)}) \Theta^{(\ell)} v \rvert = \lvert \sum_k \lambda^{(\ell)}_k (\Theta^{(\ell)} v)_k^2 \rvert \le \sum_k \lvert \lambda^{(\ell)}_k \rvert (\Theta^{(\ell)} v)_k^2,$
hence $\lVert \Theta^{(\ell)\top} \diag(\lambda^{(\ell)}) \Theta^{(\ell)} \rVert_2 \le \lVert \Theta^{(\ell)\top} \diag(|\lambda^{(\ell)}|) \Theta^{(\ell)} \rVert_2$. By the Cauchy--Schwarz inequality and $\big \lVert \frac{\partial F_i(x)}{\partial h^{(\ell)}} \big \rVert_2 \le \sqrt{\gamma}$, we have 
$\lvert \lambda^{(\ell)}_k \rvert \le \sqrt{\gamma}\,\sup \lvert {\sigma^{(\ell)}}'' \rvert \, [c(U^{(\ell)})]_k$, 
so $\lVert \Theta^{(\ell)\top} \diag(|\lambda^{(\ell)}|) \Theta^{(\ell)} \rVert_2 \le \sqrt{\gamma}\sup|{\sigma^{(\ell)}}''| \, \lVert \Theta^{(\ell)\top} \diag(c(U^{(\ell)})) \Theta^{(\ell)} \rVert_2$. Finally, using $\big \lVert \frac{\partial h^{(\ell-1)}}{\partial x} \big \rVert_2 \le \sqrt{\gamma}$, we obtain the desired result.
\end{proof}
}

{\color{revcolor}
\begin{theorem}\label{thm:lip_nn_third_order_bound}
Consider the LNN $F(x)$ defined in \eqref{eq:lip_nn}. If $\sigma^{(\ell)}$ is $\mathcal{C}^3$ for $\ell = 1, \dots, L-1$, then the spectral norm of $\frac{\partial^3 F_i(x)}{\partial x^2 \partial x_j}$ satisfies
\begin{align}\label{eq:lip_nn_third_order_bound}
\left \lVert \frac{\partial^3 F_i(x)}{\partial x^2 \partial x_j} \right \rVert_2 &\le \sum_{\ell=1}^{L-1} \Bigl[ \gamma\, \beta^{(\ell)} \sup|{\sigma^{(\ell)}}''| \, \Bigl( \Bigl\lVert \frac{\partial^2 F_i(x)}{\partial x_j \partial h^{(\ell)}} \Bigr\rVert_2 \notag \\
&+ 2 \Bigl\lVert \frac{\partial^2 h^{(\ell-1)}}{\partial x_j \partial x} \Bigr\rVert_2 \Bigr) 
+ \gamma^2 \sup|{\sigma^{(\ell)}}'''| \, \tau^{(\ell)} \Bigr],
\end{align}
where $\beta^{(\ell)} = \lVert \diag(c(U^{(\ell)}))^{1/2} W^{(\ell)} \rVert_2^2$, $\tau^{(\ell)} = \lVert {W^{(\ell)}}^\top \diag(c(U^{(\ell)}) \odot r(W^{(\ell)})) W^{(\ell)} \rVert_2 $, $[c(U^{(\ell)})]_k = \lVert U^{(\ell)}_{:,k} \rVert_2$, $[r(W^{(\ell)})]_k = \lVert W^{(\ell)}_{k,:} \rVert_2$, and the inner recursive Hessian bounds are given by:
\begin{equation}\label{eq:fwd_bound}
\left \lVert \frac{\partial^2 h^{(\ell)}}{\partial x_j \partial x} \right \rVert_2 \leq \sum_{k=1}^\ell \gamma \lVert c(U^{(k)}) \rVert_\infty \lVert W^{(k)} \rVert_2^2 \sup |{\sigma^{(k)}}'' |,
\end{equation}
\begin{equation}\label{eq:bwd_bound}
\left \lVert \frac{\partial^2 F_i(x)}{\partial x_j \partial h^{(\ell)}} \right \rVert_2 \leq \sum_{k=\ell+1}^{L-1} \gamma \lVert c(U^{(k)}) \rVert_\infty \lVert W^{(k)} \rVert_2^2 \sup |{\sigma^{(k)}}'' |.
\end{equation}
\end{theorem}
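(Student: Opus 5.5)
We will differentiate the Hessian formula of Lemma~\ref{lemma:lip_nn_hess} with respect to $x_j$ and bound each resulting piece separately. Write, as in the proof of Theorem~\ref{thm:lip_nn_hess_bound},
\[
\frac{\partial^2 F_i(x)}{\partial x^2} = \sum_{\ell=1}^{L-1} \Theta^{(\ell)\top} \diag(\lambda^{(\ell)}) \Theta^{(\ell)},
\]
where $\Theta^{(\ell)} = W^{(\ell)} \frac{\partial h^{(\ell-1)}}{\partial x}$ and $\lambda^{(\ell)} = {\sigma^{(\ell)}}''(z^{(\ell)}) \odot \bigl(\frac{\partial F_i}{\partial h^{(\ell)}} U^{(\ell)}\bigr)^\top$. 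By the product rule, $\partial_{x_j}$ of each summand consists of three parts:
\begin{itemize}
\item two symmetric terms $(\partial_{x_j}\Theta^{(\ell)})^\top \diag(\lambda^{(\ell)})\Theta^{(\ell)}$ and its transpose;
\item the term $\Theta^{(\ell)\top}\diag(\partial_{x_j}\lambda^{(\ell)})\Theta^{(\ell)}$.
\end{itemize}
The last term splits further, because $\partial_{x_j}\lambda^{(\ell)}$ is the sum of
\begin{itemize}
\item (a) ${\sigma^{(\ell)}}'''(z^{(\ell)}) \odot \partial_{x_j} z^{(\ell)} \odot \bigl(\frac{\partial F_i}{\partial h^{(\ell)}}U^{(\ell)}\bigr)^\top$, and
\item (b) ${\sigma^{(\ell)}}''(z^{(\ell)}) \odot \bigl(\frac{\partial^2 F_i}{\partial x_j \partial h^{(\ell)}}U^{(\ell)}\bigr)^\top$.
\end{itemize}

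\textbf{Terms (b) and the symmetric pair.} Term (b) is handled exactly as in Theorem~\ref{thm:lip_nn_hess_bound}. The absolute-value trick bounds $\lVert\Theta^\top\diag(\mu)\Theta\rVert_2$ by $\lVert\Theta^\top\diag(|\mu|)\Theta\rVert_2$. Cauchy--Schwarz gives $|\mu_k|\le \sup|{\sigma^{(\ell)}}''|\,\lVert\frac{\partial^2F_i}{\partial x_j\partial h^{(\ell)}}\rVert_2\,[c(U^{(\ell)})]_k$, and $\lVert\frac{\partial h^{(\ell-1)}}{\partial x}\rVert_2\le\sqrt\gamma$. Together these give $\gamma\beta^{(\ell)}\sup|{\sigma^{(\ell)}}''|\,\lVert\frac{\partial^2F_i}{\partial x_j\partial h^{(\ell)}}\rVert_2$.

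For the symmetric pair, $\partial_{x_j}\Theta^{(\ell)} = W^{(\ell)}\frac{\partial^2 h^{(\ell-1)}}{\partial x_j\partial x}$. Write $D=\diag(|\lambda^{(\ell)}|)$ with $|\lambda_k|\le\sqrt\gamma\sup|{\sigma''}|c_k$. For unit $u,v$, Cauchy--Schwarz in the weighted inner product gives
\[
|u^\top A^\top D B v|\le \lVert D^{1/2}Au\rVert_2\,\lVert D^{1/2}Bv\rVert_2 .
\]
Each factor is bounded through $\lVert \diag(c)^{1/2}W^{(\ell)}\rVert_2$ times the norm of $\frac{\partial h^{(\ell-1)}}{\partial x}$ or $\frac{\partial^2 h^{(\ell-1)}}{\partial x_j\partial x}$. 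This should yield $2\gamma\beta^{(\ell)}\sup|{\sigma''}|\lVert\frac{\partial^2 h^{(\ell-1)}}{\partial x_j\partial x}\rVert_2$. The powers of $\gamma$ need care: $\sqrt\gamma$ comes from $\lambda$ and $\sqrt\gamma$ from one Jacobian factor, giving $\gamma$.

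\textbf{Term (a).} Here $\partial_{x_j}z^{(\ell)} = W^{(\ell)}\frac{\partial h^{(\ell-1)}}{\partial x_j}$, so $|(\partial_{x_j}z^{(\ell)})_k|\le \lVert W^{(\ell)}_{k,:}\rVert_2\sqrt\gamma = r_k\sqrt\gamma$. This gives $|\text{(a)}_k|\le\gamma\sup|{\sigma'''}|c_kr_k$. Sandwiching between the two Jacobian factors yields $\gamma^2\sup|{\sigma'''}|\tau^{(\ell)}$.

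\textbf{Recursive bounds.} Bound \eqref{eq:fwd_bound} is proved by induction using $h^{(\ell)}=U^{(\ell)}\sigma(z^{(\ell)})$. Differentiating twice gives
\[
\frac{\partial^2 h^{(\ell)}}{\partial x_j\partial x} = U^{(\ell)}\diag(\sigma''\odot\partial_{x_j}z)\,W^{(\ell)}\frac{\partial h^{(\ell-1)}}{\partial x} + U^{(\ell)}\diag(\sigma')W^{(\ell)}\frac{\partial^2 h^{(\ell-1)}}{\partial x_j\partial x}.
\]
The second term is bounded by the $1$-Lipschitz property of the layer, $\lVert U\diag(\sigma')W\rVert_2\le1$. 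The first term is at most $\lVert c(U)\rVert_\infty\,\sup|\sigma''|\,\lVert W\rVert_2^2\,\gamma$, where one factor $\sqrt\gamma$ comes from each Jacobian of $h^{(\ell-1)}$. The $\lVert c(U)\rVert_\infty$ factor is the delicate point. Bounding $\lVert U\diag(a)\rVert_2$ via columns is not directly true, so I would instead apply $U\diag(a)$ to a vector and use $\lVert Uy\rVert\le\sum|y_k|\lVert U_{:,k}\rVert$ together with $\lVert\diag(a)Wv\rVert$ bounds. If that fails, I would settle for an $\lVert U\rVert_2$-type constant.

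Bound \eqref{eq:bwd_bound} follows by the mirrored backward induction on $\frac{\partial F_i}{\partial h^{(\ell)}}$, using $\lVert\partial F_i/\partial h^{(k)}\rVert\le\sqrt\gamma$.

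\textbf{Main obstacle.} I expect the main obstacle to be this cross-term bookkeeping. The exact constants claimed, $\lVert c(U)\rVert_\infty\lVert W\rVert_2^2$ and the factor $2$ together with $\beta^{(\ell)}$, must come out of weighted Cauchy--Schwarz and not from cruder bounds.
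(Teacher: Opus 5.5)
Your proposal is correct and follows the paper's proof essentially step for step. It uses the same split of $\partial_{x_j}\bigl(\Theta^{(\ell)\top}\diag(\lambda^{(\ell)})\Theta^{(\ell)}\bigr)$ into the symmetric cross pair and the two pieces of $\partial_{x_j}\lambda^{(\ell)}$. The cross pair is bounded by the weighted Cauchy--Schwarz product $\lVert\diag(|\lambda^{(\ell)}|)^{1/2}\Theta^{(\ell)}\rVert_2\lVert\diag(|\lambda^{(\ell)}|)^{1/2}\partial_{x_j}\Theta^{(\ell)}\rVert_2$, exactly as in the paper. The only difference is that you prove the recursive bounds by layer-wise induction rather than the paper's unrolled chain-rule sum, and the two are equivalent.

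Your hedge on the $\lVert c(U^{(k)})\rVert_\infty$ factor is unnecessary. The vector argument you sketch gives $\lVert U\diag(a)Wv\rVert_2\le\lVert c(U)\rVert_\infty\sum_k|a_k||(Wv)_k|\le\lVert c(U)\rVert_\infty\lVert a\rVert_2\lVert W\rVert_2\lVert v\rVert_2$ with $a={\sigma}''(z)\odot\partial_{x_j}z$. This is precisely the paper's ``Cauchy--Schwarz'' step; only the $\lVert a\rVert_\infty$ version of the column bound fails.
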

\begin{proof}
    The proof is given in Appendix~\ref{appx:proof_lip_nn_third_order_bound}.
\end{proof}
}

\begin{remark}
Many common nonlinear activation functions satisfy the conditions of Theorems~\ref{thm:lip_nn_hess_bound} and \ref{thm:lip_nn_third_order_bound}. Some classic examples include: (1) Softplus: $\sigma(x) = \log(1+e^x)$, $ 0 \leq \sigma'(x) \leq 1$, $0 \leq \sigma''(x) \leq 0.25$, $-0.097 \leq \sigma'''(x) \leq 0.097$; (2) Sigmoid: $\sigma(x) = \frac{1}{1+e^{-x}}$, $0 \leq \sigma'(x) \leq 0.25$, $-0.097 \leq \sigma''(x) \leq 0.097$, $-1/8 \leq \sigma'''(x) \leq 0.042$; (3) Hyperbolic tangent: $\sigma(x) = \tanh(x)$, $0 \leq \sigma'(x) \leq 1$, $-0.77 \leq \sigma''(x) \leq 0.77$, $-2 \leq \sigma'''(x) \leq 2/3$.
\end{remark}

{\color{revcolor}
\begin{remark}\label{rmk:asymptotic_bound}
We characterize the asymptotic growth of these bounds in $L$. For a Sandwich Layer \cite{wang2023direct} with $\Psi^{(\ell)} = \kappa I$ ($\kappa > 0$), we have $\lVert U^{(\ell)} \rVert_2 \leq \sqrt{2}\kappa$ and $\lVert W^{(\ell)} \rVert_2 \leq \sqrt{2}/\kappa$; with identical activations $\sigma$, Theorems~\ref{thm:lip_nn_hess_bound} and~\ref{thm:lip_nn_third_order_bound} simplify to $\big \lVert \frac{\partial^2 F_i(x)}{\partial x^2} \big \rVert_2 \leq \frac{1}{\kappa}(2 \gamma)^{\frac{3}{2}} \sup |{\sigma}''| \, (L-1)$ and $\big \lVert \frac{\partial^3 F_i(x)}{\partial x^2 \partial x_j} \big \rVert_2 \leq \frac{4 \gamma^2}{\kappa^2} \bigl[3 ( \sup |{\sigma}''| )^2 (L-2) (L-1) + \sup |{\sigma}'''| \, (L-1) \bigr]$, i.e., linear and quadratic in $L$, respectively. The same rates hold for Spectral Normalization \cite{miyato2018spectral} and Orthogonal Layers \cite{trockman2021orthogonalizing}, with $\big\lVert \frac{\partial^2 F_i(x)}{\partial x^2} \big\rVert_2 \leq \gamma^{\frac{3}{2}} \sup|\sigma''| \, (L-1)$ and $\big\lVert \frac{\partial^3 F_i(x)}{\partial x^2 \partial x_j} \big\rVert_2 \leq \gamma^2 [ \tfrac{3}{2} (\sup|\sigma''|)^2 (L-1)(L-2) + \sup|\sigma'''| \, (L-1) ]$.

Figure~\ref{fig:hess_bound} compares this Hessian bound of LNNs against that of standard fully connected networks (FCNs) computed using the result in \cite[Theorem~4]{singla2020second}. We consider the setup $n_x = 2$, $n_{L}=1$, $n_{\ell}=16$ (for $\ell = 1,\dots,L-1$), and vary $L$ from $2$ up to $5$, using $\tanh(\cdot)$ as the activation function. For each value of $L$, we randomly initialize the network parameters 20 times and plot the mean and standard deviation of the resulting Hessian bounds for both architectures (where the LNNs use Orthogonal Layers \cite{trockman2021orthogonalizing}). We observe that the unconstrained FCN bounds grow much faster than those of the LNNs. Figure~\ref{fig:third_order_bound_lip} visualizes the third-order bound of these randomly initialized LNNs. Furthermore, we note that Theorems~\ref{thm:lip_nn_hess_bound} and \ref{thm:lip_nn_third_order_bound} present global, closed-form, layer-separable upper bounds, and the empirical bars merely lower-bound this global supremum. Although strict equality is only attained in degenerate worst-case networks, this theoretical looseness is reclaimed during the BnB process in Section~\ref{sec:bnb}: the bounding constants appear only in higher-order remainders that vanish rapidly as the search rectangles shrink.
\end{remark}
}

\begin{figure}
\centering
\begin{subfigure}[b]{0.23\textwidth}
    \centering
    \includegraphics[width=\textwidth]{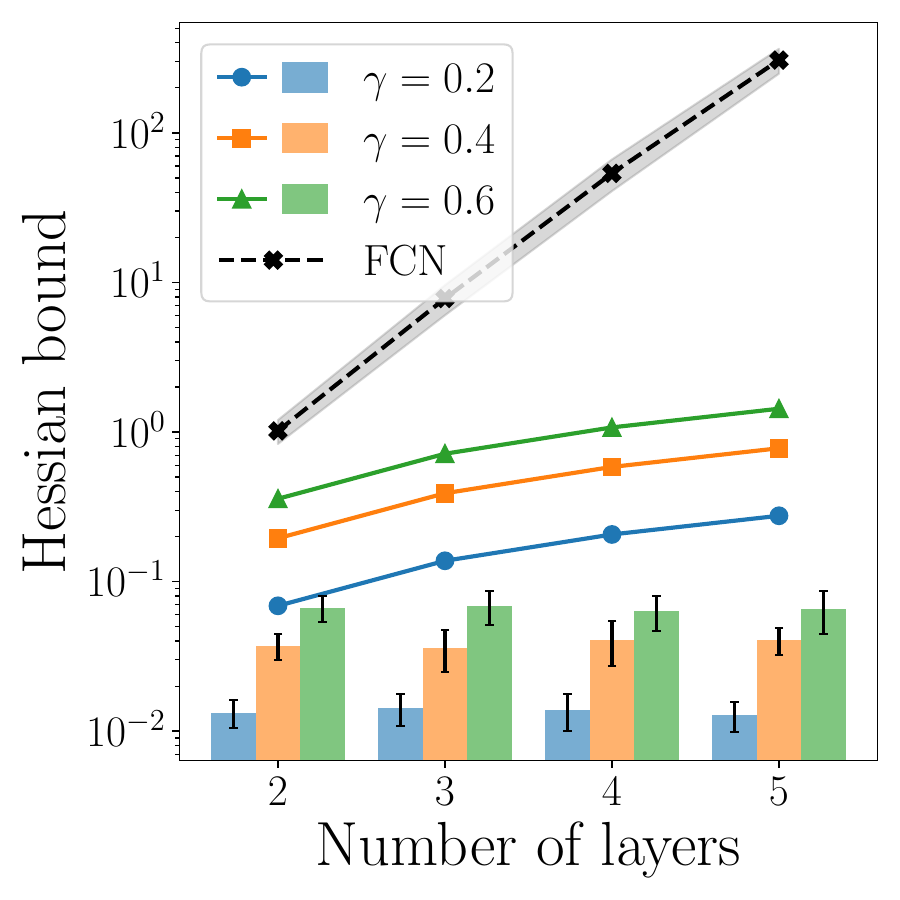}
    \caption{LNN and FCN Hessian bounds.}
    \label{fig:hess_bound}
\end{subfigure}
\begin{subfigure}[b]{0.23\textwidth}
    \centering
    \includegraphics[width=\textwidth]{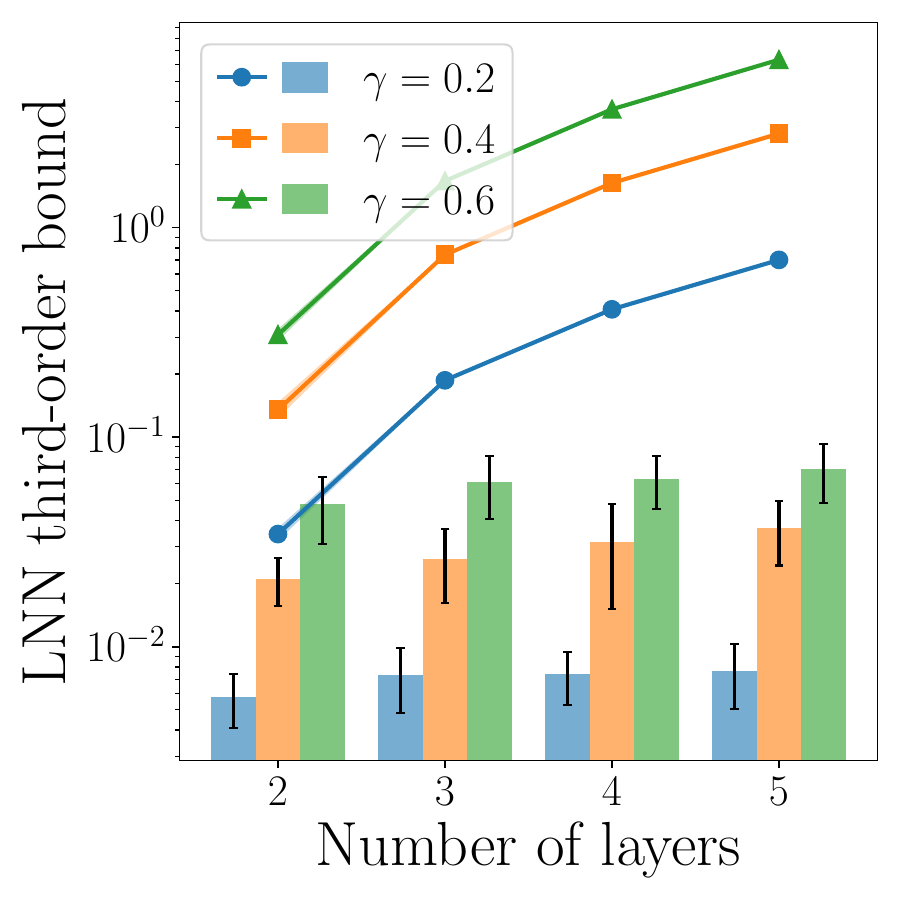}
    \caption{LNN third-order bound.}
    \label{fig:third_order_bound_lip}
\end{subfigure}
\caption{\subref{fig:hess_bound}: Comparison of Hessian bounds for FCNs and LNNs as the number of layers increases. Theoretical bounds are denoted by solid lines, while empirical bounds are shown as bars. \subref{fig:third_order_bound_lip}: Third-order bound of LNNs w.r.t. the number of layers.}
\label{fig:hess_and_third_order_bound}
\end{figure}

\begin{remark}
Beyond Lyapunov verification, Hessian bounds have also been applied in robustness certification \cite{singla2020second}, adversarial training \cite{srinivas2022efficient}, and neural certificates for dynamical systems \cite{neustroev2025neural}, underscoring their broader role as a fundamental tool for tighter characterizations of neural network behavior.
\end{remark}

\subsection{Branch-and-Bound Scheme}\label{sec:bnb}
Verification has been the main challenge in training neural network Lyapunov functions. Although positive definite, Lyapunov functions are not necessarily convex \cite{jongeneel2024continuation}. Existing neural network verification approaches include interval methods \cite{hein2017formal, wang2018efficient}, convex relaxations \cite{wong2018provable, fazlyab2020safety}, SMT solvers \cite{gao2013dreal}, etc. Interval methods often give vacuous bounds on large domains, convex relaxations focus on input–output robustness rather than Lyapunov conditions, and SMT solvers face scalability limits. {\color{revcolor} We propose a novel high-order verification scheme. The core idea is to use the smoothness of the network and explicitly construct and incorporate analytical first- and second-order bounds through Taylor expansions. These high-order bounds create significantly tighter bounds of the network output compared with zeroth-order bounds, which in turn drastically prunes the search space at the cost of slightly more expensive per-node evaluations. }

{\color{revcolor}
Let $Q = [x_{\mathrm{lb}}, x_{\mathrm{ub}}] \subset \R^{n_x}$ be a hyperrectangle. We seek to verify that a target function $\Phi \in \mathcal{C}^3$ satisfies $\Phi(x) \geq 0$ to a prescribed accuracy $\delta > 0$ on $Q \backslash E$ where $E \subset \R^{n_x}$ is an excluded set. We extract the midpoint $x_m (Q) = \frac{1}{2}(x_{\mathrm{ub}} + x_{\mathrm{lb}})$ and the half-width vector $v(Q) = \frac{1}{2}(x_\mathrm{ub} - x_\mathrm{lb} )$ of $Q$, and we omit the dependency of $x_m$ and $v$ on $Q$ for brevity. The goal is to find lower and upper bound functions $\Phi_{\mathrm{lb}}$ and $\Phi_{\mathrm{ub}}$ satisfying $\Phi_{\mathrm{lb}} (Q) \leq \min_{x \in Q} \Phi(x) \leq \Phi_{\mathrm{ub}} (Q)$ for a given $Q$. In the following, we propose three options for $\Phi_{\mathrm{lb}}$ and $\Phi_{\mathrm{ub}}$.}

{\color{revcolor}
\textbf{Zeroth-order bounds:} Let $K_\Phi \geq \Lip (\Phi)$. By the definition of Lipschitz continuity, we have:
\begin{subequations}\label{eq:zeroth_order_bound}
\begin{align}
    \Phi_{\mathrm{lb}}^{(0)}(Q) &= \Phi(x_m) - K_\Phi \lVert v \rVert_2,\\
    \Phi_{\mathrm{ub}}^{(0)}(Q) &= \Phi(x_m).
\end{align}
\end{subequations}

\textbf{First-order bounds:} Because $\Phi \in \mathcal{C}^2$, applying the second-order Taylor expansion with the Lagrange remainder yields: $\Phi(x) = \Phi(x_m) + \frac{\partial \Phi}{\partial x}(x_m) (x-x_m) + \frac{1}{2} (x-x_m)^\top \frac{\partial^2 \Phi}{\partial x^2} (x') (x-x_m)$ where $x' = x_m + \lambda (x-x_m)$ for some $\lambda \in (0, 1)$. Defining the spectral bound $H_\Phi \geq \sup_{x \in \R^{n_x}} \bigl \lVert \frac{\partial^2 \Phi}{\partial x^2} (x) \bigr \rVert_2$, we establish tighter first-order bounds:

\begin{subequations}\label{eq:first_order_bound}
    \begin{align}
    \Phi_{\mathrm{lb}}^{(1)}(Q) &= \Phi(x_m) - \left \lvert \frac{\partial \Phi}{\partial x}(x_m) \right \rvert v - \frac{1}{2} H_\Phi \lVert v \rVert_2^2, \\
    \Phi_{\mathrm{ub}}^{(1)}(Q) &= \Phi(x_m) - \left \lvert \frac{\partial \Phi}{\partial x}(x_m) \right \rvert v + \frac{1}{2} H_\Phi \lVert v \rVert_2^2.
    \end{align}
\end{subequations}

\textbf{Second-order bounds:} Similarly, as $\Phi \in \mathcal{C}^3$, we define $T_\Phi \geq \sup_{x \in \R^{n_x}} \left \lVert \frac{\partial^3 \Phi}{\partial x^2 \partial x_j} (x) \right \rVert_2$ for $j = 1, ..., n_x$. By the third-order Taylor expansion, we get more accurate bounds:
\begin{subequations}\label{eq:second_order_bound}
    \begin{align}
    \Phi_{\mathrm{lb}}^{(2)}(Q) &= \Phi(x_m) - \left \lvert \frac{\partial \Phi}{\partial x}(x_m) \right \rvert v - \frac{1}{2} v^\top \left \lvert \frac{\partial^2 \Phi}{\partial x^2}(x_m) \right \rvert v \notag \\
    &\quad -\frac{1}{6} T_\Phi \lVert v \rVert_2^2 \lVert v \rVert_1, \\
    \Phi_{\mathrm{ub}}^{(2)}(Q) &= \Phi(x_m) - \left \lvert \frac{\partial \Phi}{\partial x}(x_m) \right \rvert v + \frac{1}{2} v^\top \left \lvert \frac{\partial^2 \Phi}{\partial x^2}(x_m) \right \rvert v \notag \\
    &\quad + \frac{1}{6} T_\Phi \lVert v \rVert_2^2 \lVert v \rVert_1 .
    \end{align}
\end{subequations}

To obtain the tightest combined enclosure subject to the smoothness of $\Phi$, we take $\Phi_{\mathrm{lb}} (Q) = \max(\Phi_{\mathrm{lb}}^{(0)}, \Phi_{\mathrm{lb}}^{(1)}, \Phi_{\mathrm{lb}}^{(2)})$ and $\Phi_{\mathrm{ub}} (Q) = \min(\Phi_{\mathrm{ub}}^{(0)}, \Phi_{\mathrm{ub}}^{(1)}, \Phi_{\mathrm{ub}}^{(2)})$.} The detailed BnB algorithm is given in Algorithm~\ref{alg:bnb}. The algorithm starts from the initial rectangle $Q_{\mathrm{init}}$ and keeps splitting the rectangles into smaller ones until one of the early stop conditions is met or the required precision is achieved. 

\begin{algorithm}[t]
\caption{High-Order Branch-and-Bound Verification}
\label{alg:bnb}
{\color{revcolor}
\begin{algorithmic}[1]
\REQUIRE Target function $\Phi$, bounds $\Phi_{\mathrm{lb}}$ and $\Phi_{\mathrm{ub}}$, initial rectangle $Q_{\mathrm{init}}$, excluded region $E$, tolerance $\delta$

\STATE Initialize $k=0$, $\mathcal{Q}_0 = \{Q_{\mathrm{init}}\}$, and $L_0 = \Phi_{\mathrm{lb}} (Q_{\mathrm{init}})$;

\STATE Set $U_0 = \Phi_{\mathrm{ub}} (Q_{\mathrm{init}})$ if $Q_{\mathrm{init}} \cap E = \varnothing$, else $U_0 = \infty$;

\WHILE{$U_k - L_k > \delta$}

\STATE If $\mathcal{Q}_k = \varnothing$ or $L_k \geq 0$, return \textbf{success} (early stop);

\STATE If $U_k < 0$ , return \textbf{failure} (early stop);

\STATE Initialize $\mathcal{Q}_{k+1} = \varnothing$, $L_{k+1} = \infty$, and $U_{k+1} = \infty$;

\FOR{$Q \in \mathcal{Q}_k$} \label{lst:line:start}
    \STATE Split $Q$ along its longest axis into $Q_1$ and $Q_2$;
    \FOR{$i \in \{1, 2\}$}
        \IF{$Q_i \not\subset E$} \label{lst:line:Q_subset_E}
            \STATE $L_{k+1} = \min (L_{k+1}, \Phi_{\mathrm{lb}} (Q_i))$; \label{lst:line:def_L}
            \STATE If $Q_{i} \cap E = \varnothing$, $U_{k+1} = \min (U_{k+1}, \Phi_{\mathrm{ub}} (Q_i))$; \label{lst:line:def_U}
            \STATE If $\Phi_{\mathrm{lb}} (Q_i) < 0$, $\mathcal{Q}_{k+1} = \mathcal{Q}_{k+1} \cup \{Q_i\}$;
        \ENDIF
    \ENDFOR
\ENDFOR \label{lst:line:end}

\STATE $k \leftarrow k + 1$;

\ENDWHILE

\RETURN \textbf{success};

\end{algorithmic}}
\end{algorithm}

Finally, we evaluate the worst-case convergence of this scheme. Given $Q$, its volume and condition number are:
\begin{align}
    \vol(Q) &= \prod_{i=1}^{n_x} (x_{\mathrm{ub},i} - x_{\mathrm{lb},i}), \label{eq:volume_Q} \\
    \cond(Q) &= \frac{\max_{i=1, ..., n_x} (x_{\mathrm{ub},i} - x_{\mathrm{lb},i})}{\min_{i=1, ..., n_x} (x_{\mathrm{ub},i} - x_{\mathrm{lb},i})}. \label{eq:cond_Q}
\end{align}

\begin{lemma}[Lemma 1 in \cite{balakrishnan1991branch}] \label{lemma:cond}
For any $k$ and any rectangle $Q \in \mathcal{Q}_k$ generated in Algorithm~\ref{alg:bnb}, we have 
\begin{equation}
\cond(Q) \leq \max \{ \cond(Q_{\mathrm{init}}), 2 \}.
\end{equation}
\end{lemma}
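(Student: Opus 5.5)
We argue by induction on $k$. Every rectangle in $\mathcal{Q}_{k+1}$ is obtained from some rectangle in $\mathcal{Q}_k$ by a single bisection along its longest axis. The exclusion and pruning steps only discard rectangles and never change their shape. It therefore suffices to prove a one-step claim: if $Q$ is a rectangle with side lengths $\ell_1,\dots,\ell_{n_x}$ and $\cond(Q)\le C$, where $C \defeq \max\{\cond(Q_{\mathrm{init}}),2\}$, then both halves $Q_1,Q_2$ obtained by bisecting $Q$ along its longest axis also satisfy $\cond(Q_i)\le C$. The base case $\mathcal{Q}_0=\{Q_{\mathrm{init}}\}$ holds trivially, since $\cond(Q_{\mathrm{init}})\le C$.

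For the one-step claim, write $\ell_{\max}=\max_i \ell_i$ (attained at the split index $p$) and $\ell_{\min}=\min_i \ell_i$. Both halves have the same side lengths: $\ell_p$ is replaced by $\ell_{\max}/2$ and the other sides are unchanged. Let $M$ be the largest and $m$ the smallest remaining side, taken over $i\neq p$; if $n_x=1$ the condition number is $1$ and nothing needs proving. The new maximum is $\max\{M,\ell_{\max}/2\}\le \ell_{\max}$. The new minimum is $\min\{m,\ell_{\max}/2\}$. We split into two cases.

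\emph{Case 1: $\ell_{\max}/2\ge m$.} The new minimum is $m\ge \ell_{\min}$, and the new maximum is at most $\ell_{\max}$, so the new condition number is at most $\ell_{\max}/\ell_{\min}=\cond(Q)\le C$.

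\emph{Case 2: $\ell_{\max}/2< m$.} The new minimum is $\ell_{\max}/2$, and the new maximum is at most $\ell_{\max}$, so the new condition number is at most $\ell_{\max}/(\ell_{\max}/2)=2\le C$.

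The only subtle point is Case 1 when the new maximum is $\ell_{\max}/2$ rather than $M$. The bound $\max\{M,\ell_{\max}/2\}\le\ell_{\max}$ covers this situation, and the minimum there is still some original side length, hence at least $\ell_{\min}$. Ties in the choice of longest axis do not affect the argument, because any longest axis gives the same $\ell_{\max}$.

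I expect no real obstacle beyond bookkeeping. The main point is to note that the algorithm's operations (intersection tests with $E$, the bound computations, and pruning) never create rectangles other than bisection halves, so the induction covers every $Q\in\mathcal{Q}_k$.
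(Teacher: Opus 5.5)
Your proof is correct, and it is essentially the standard argument from the cited Balakrishnan et al.\ paper, which this paper invokes without reproducing. That argument is induction over iterations together with the one-step bound $\cond(Q_i)\le \ell_{\max}/\min\{\ell_{\min},\ell_{\max}/2\}=\max\{\cond(Q),2\}$ for a bisection along the longest edge; your case split is just an unpacked version of it, and your reading of the split as a midpoint bisection matches the paper's later use of $\vol(Q)=\vol(Q_{\mathrm{init}})/2^k$.
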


{\color{revcolor}
\begin{theorem}\label{thm:bnb_complexity_zeroth_order}
Assume that $\Phi$ is Lipschitz continuous and $K_\Phi \geq \Lip(\Phi)$. If none of the early stop conditions are satisfied and $E = \varnothing$, then Algorithm~\ref{alg:bnb} finishes within $k_0 = \lceil \log_2 (P_0) \rceil$ iterations, evaluating at most $\mathcal{O}(P_0)$ rectangles, where $P_0 = \bigl(\frac{\sqrt{n_x} K_{\Phi} \max \{ \cond(Q_{\mathrm{init}}), 2 \}}{2 \delta}\bigr)^{n_x} \vol (Q_{\mathrm{init}})$.
\end{theorem}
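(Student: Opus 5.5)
The plan is to bound, at each iteration $k$, the gap $U_k - L_k$ by a quantity that depends only on the largest half-width of the surviving rectangles. Then we count how many bisections are needed before that gap drops below $\delta$. Since $E=\varnothing$, line~\ref{lst:line:def_U} always updates $U_{k+1}$. So $U_k = \min_{Q} \Phi_{\mathrm{ub}}(Q)$ and $L_k = \min_Q \Phi_{\mathrm{lb}}(Q)$ over the same family of rectangles generated at level $k$ (both before pruning). Let $Q^\star$ attain $L_k$. Then
\[
U_k - L_k \le \Phi_{\mathrm{ub}}(Q^\star) - \Phi_{\mathrm{lb}}(Q^\star) \le \Phi_{\mathrm{ub}}^{(0)}(Q^\star) - \Phi_{\mathrm{lb}}^{(0)}(Q^\star) = K_\Phi \lVert v(Q^\star)\rVert_2 .
\]
This uses that $\Phi_{\mathrm{ub}} \le \Phi_{\mathrm{ub}}^{(0)}$ and $\Phi_{\mathrm{lb}} \ge \Phi_{\mathrm{lb}}^{(0)}$ by the max/min combination. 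It therefore suffices to make every rectangle at level $k$ satisfy $K_\Phi \lVert v \rVert_2 \le \delta$.

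Next I relate $\lVert v\rVert_2$ to the volume, using Lemma~\ref{lemma:cond}. Let $C = \max\{\cond(Q_{\mathrm{init}}),2\}$ and let $s_{\max}, s_{\min}$ be the longest and shortest side lengths of $Q$. Then $\lVert v\rVert_2 \le \tfrac{\sqrt{n_x}}{2} s_{\max} \le \tfrac{\sqrt{n_x}}{2} C s_{\min}$, and $s_{\min} \le \vol(Q)^{1/n_x}$. Every split halves the volume, so a rectangle at level $k$ has $\vol(Q) = 2^{-k}\vol(Q_{\mathrm{init}})$. Hence
\[
K_\Phi\lVert v\rVert_2 \le \frac{\sqrt{n_x} K_\Phi C}{2} \bigl(2^{-k}\vol(Q_{\mathrm{init}})\bigr)^{1/n_x},
\]
which is $\le \delta$ as soon as $2^k \ge P_0$, that is, for $k \ge k_0 = \lceil \log_2 P_0\rceil$. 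At that iteration the while-condition fails and the algorithm returns. One should double-check that the gap at level $k_0$ uses rectangles that really are at depth $k_0$, including pruned ones. The bound on $U_k-L_k$ only needs some rectangle $Q^\star$ at that depth, so pruning is harmless.

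For the rectangle count, level $j$ contains at most $2^j$ rectangles. Summing over $j=0,\dots,k_0$ gives at most $2^{k_0+1}-1 < 4P_0 + 1$ evaluations, which is $\mathcal{O}(P_0)$ because $2^{k_0} < 2P_0$. The main subtlety is the first step: $U_k$ and $L_k$ may be attained on different rectangles, so one must argue via the single rectangle $Q^\star$ achieving $L_k$. This works because $U_k$ is a minimum over a family that includes $Q^\star$, which the condition $E=\varnothing$ guarantees. The rest is the geometric volume–condition-number estimate from Lemma~\ref{lemma:cond}.
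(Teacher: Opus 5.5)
Your proof is correct and follows essentially the same route as the paper. You bound $U_k - L_k$ by $K_\Phi$ times the largest half-width at depth $k$, then use Lemma~\ref{lemma:cond} and $\vol(Q) = 2^{-k}\vol(Q_{\mathrm{init}})$ to get $\lVert v(Q)\rVert_2 \le \frac{\sqrt{n_x}}{2}\max\{\cond(Q_{\mathrm{init}}),2\}\bigl(2^{-k}\vol(Q_{\mathrm{init}})\bigr)^{1/n_x}$, solve for $k$, and count $2^{k_0+1}-1$ rectangles. Working with the single rectangle $Q^\star$ attaining $L_k$, together with $\Phi_{\mathrm{lb}} \ge \Phi_{\mathrm{lb}}^{(0)}$ and $\Phi_{\mathrm{ub}} \le \Phi_{\mathrm{ub}}^{(0)}$, is a tidier form of the paper's step $L_k \ge U_k - K_\Phi \max_{Q} \lVert v(Q)\rVert_2$, and it also covers the combined higher-order bounds.
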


\begin{proof}
By the equivalence of norms, $\lVert v (Q) \rVert_2 \leq \frac{\sqrt{n_x}}{2} \max_i (x_{\mathrm{ub},i} - x_{\mathrm{lb},i})$. Because $\cond(Q) \geq 1$, we lower bound the volume by $\vol(Q) \geq \max_i (x_{\mathrm{ub},i} - x_{\mathrm{lb},i}) [ \min_i (x_{\mathrm{ub},i} - x_{\mathrm{lb},i})]^{n_x-1} = \frac{[ \max_i (x_{\mathrm{ub},i} - x_{\mathrm{lb},i}) ]^{n_x}}{\cond(Q)^{n_x-1}} \geq \bigl( \frac{\max_i (x_{\mathrm{ub},i} - x_{\mathrm{lb},i})}{\cond(Q)} \bigr)^{n_x} $. This implies $\vol(Q) \geq \bigl( \frac{ 2 \lVert v \rVert_2}{\sqrt{n_x} \cond(Q)} \bigr)^{n_x}$. Rearranging provides a bound on the half-width norm of any $Q$:
\begin{equation}\label{eq:bound_size_Q}
\lVert v (Q) \rVert_2 \leq \frac{\sqrt{n_x}}{2} \cond(Q) \vol(Q)^{\frac{1}{n_x}}.
\end{equation}

Because the algorithm always splits a rectangle into two, the volume at iteration $k$ is $\vol(Q) = \frac{1}{2^k} \vol (Q_{\mathrm{init}})$. For any $Q \in \mathcal{Q}_k$, applying this and Lemma~\ref{lemma:cond} yields:
\begin{equation}\label{eq:bound_size_Q_k}
\lVert v (Q) \rVert_2 \leq \frac{\sqrt{n_x}}{2} \max \{ \cond(Q_{\mathrm{init}}), 2 \} \left ( \frac{\vol (Q_{\mathrm{init}})}{2^k} \right )^{\frac{1}{n_x}}.
\end{equation} 

Using the zeroth-order bounds \eqref{eq:zeroth_order_bound}, we have $L_k = \min_{Q \in \mathcal{Q}_k} [\Phi(x_m (Q)) - K_{\Phi} \lVert v (Q) \rVert_2 ] \geq U_k - \allowbreak K_{\Phi} \max_{Q \in \mathcal{Q}_k} \lVert v (Q) \rVert_2$. Therefore, by \eqref{eq:bound_size_Q_k}, $U_k - L_k \leq \frac{\sqrt{n_x}}{2} K_{\Phi}  \max \{ \cond(Q_{\mathrm{init}}), 2 \} \bigl ( \frac{\vol (Q_{\mathrm{init}})}{2^k} \bigr )^{\frac{1}{n_x}}$. To obtain $U_k - L_k \leq  \delta $ for any $\delta > 0$, we need $k_0 = \lceil \log_2 (P_0) \rceil$ iterations where 
\begin{equation}
P_0 = \left (\frac{\sqrt{n_x} K_{\Phi} \max \{ \cond(Q_{\mathrm{init}}), 2 \}}{2 \delta}\right )^{n_x} \vol (Q_{\mathrm{init}}).
\end{equation}
After $k_0$ iterations, the total number of evaluated rectangles is $2^{k_0 + 1} - 1$ which is $\mathcal{O}(P_0)$.
\end{proof}

\begin{theorem}\label{thm:bnb_complexity_first_order}
Under the same conditions as Theorem~\ref{thm:bnb_complexity_zeroth_order}, if $\Phi \in \mathcal{C}^2$ and $H_\Phi \geq \sup_{x \in \R^{n_x}} \bigl \lVert \frac{\partial^2 \Phi}{\partial x^2} (x) \bigr \rVert_2$, then Algorithm~\ref{alg:bnb} finishes within $k_1 = \lceil \log_2 (P_1) \rceil$ iterations, evaluating at most $\mathcal{O}(P_1)$ rectangles, where $P_1 =  \bigl(\frac{ n_x H_{\Phi} \max^2 \{ \cond(Q_{\mathrm{init}}), 2 \}}{4 \delta}\bigr)^{\frac{n_x}{2}} \vol (Q_{\mathrm{init}})$.
\end{theorem}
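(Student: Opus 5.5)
The argument follows the proof of Theorem~\ref{thm:bnb_complexity_zeroth_order}, with the first-order bounds \eqref{eq:first_order_bound} used in place of the zeroth-order ones. We first bound the gap between the lower and upper bounds on a single rectangle. Since $\Phi_{\mathrm{lb}}(Q) \geq \Phi_{\mathrm{lb}}^{(1)}(Q)$, we have $\Phi_{\mathrm{lb}}(Q) \geq \Phi(x_m) - \lvert \frac{\partial \Phi}{\partial x}(x_m)\rvert v - \frac{1}{2}H_\Phi \lVert v\rVert_2^2$. Since $\Phi_{\mathrm{ub}}(Q) \leq \Phi_{\mathrm{ub}}^{(1)}(Q)$, we have $\Phi_{\mathrm{ub}}(Q) \leq \Phi(x_m) - \lvert \frac{\partial \Phi}{\partial x}(x_m)\rvert v + \frac{1}{2}H_\Phi\lVert v\rVert_2^2$. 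Subtracting gives
\begin{equation*}
\Phi_{\mathrm{ub}}(Q) - \Phi_{\mathrm{lb}}(Q) \leq H_\Phi \lVert v(Q)\rVert_2^2,
\end{equation*}
because the linear terms cancel exactly.

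Next we pass from per-rectangle gaps to the global gap $U_k - L_k$. Take $Q^\ast \in \mathcal{Q}_k$ attaining $L_k = \Phi_{\mathrm{lb}}(Q^\ast)$. With $E=\varnothing$, $U_k$ is the minimum of $\Phi_{\mathrm{ub}}$ over the evaluated rectangles. The step needing care is that $U_k$ is taken over all children evaluated at iteration $k$, including those later pruned, while $L_k$ is a minimum over the same family. Both minima therefore run over a common index set, so
\begin{equation*}
U_k - L_k \leq \Phi_{\mathrm{ub}}(Q^\ast) - \Phi_{\mathrm{lb}}(Q^\ast) \leq H_\Phi \max_{Q} \lVert v(Q)\rVert_2^2 .
\end{equation*}
This is the same inequality used implicitly in Theorem~\ref{thm:bnb_complexity_zeroth_order}, where $L_k \geq U_k - K_\Phi\max\lVert v\rVert_2$. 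If it turns out that $L_k$ and $U_k$ are computed over slightly different rectangle sets, the fallback is the chain $L_k = \Phi_{\mathrm{lb}}(Q^\ast) \geq \Phi_{\mathrm{ub}}(Q^\ast) - H_\Phi\lVert v(Q^\ast)\rVert_2^2 \geq U_k - H_\Phi\max\lVert v\rVert_2^2$. This chain holds since every rectangle at level $k$ has been evaluated, and it gives the same bound.

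Then we insert the size estimate \eqref{eq:bound_size_Q_k}. Every rectangle at iteration $k$ has volume $2^{-k}\vol(Q_{\mathrm{init}})$, and Lemma~\ref{lemma:cond} controls its condition number, so
\begin{equation*}
U_k - L_k \leq H_\Phi \frac{n_x}{4}\max\nolimits^2\{\cond(Q_{\mathrm{init}}),2\}\left(\frac{\vol(Q_{\mathrm{init}})}{2^k}\right)^{2/n_x}.
\end{equation*}
The right-hand side is at most $\delta$ exactly when
\begin{equation*}
2^k \geq \left(\frac{n_x H_\Phi \max\nolimits^2\{\cond(Q_{\mathrm{init}}),2\}}{4\delta}\right)^{n_x/2}\vol(Q_{\mathrm{init}}) = P_1 .
\end{equation*}
Hence the loop terminates after at most $k_1 = \lceil \log_2 P_1\rceil$ iterations. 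Since the rectangles form a binary tree, the number evaluated is at most $2^{k_1+1}-1 = \mathcal{O}(P_1)$.

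The only real obstacle is the gap argument in the second paragraph. The rest is direct substitution of \eqref{eq:bound_size_Q_k} together with the observation that the first-order gap is quadratic in $\lVert v\rVert_2$. Squaring the size bound is what turns the exponent $n_x$ of $P_0$ into the exponent $n_x/2$ of $P_1$.
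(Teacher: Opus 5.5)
Your proposal is correct and follows the paper's proof: the paper likewise uses the first-order bounds \eqref{eq:first_order_bound} to get $U_k \le L_k + H_\Phi \max_{Q}\lVert v(Q)\rVert_2^2$, then squares the size estimate \eqref{eq:bound_size_Q_k} from Theorem~\ref{thm:bnb_complexity_zeroth_order} and solves for $k$ to obtain $P_1$. You also note explicitly that, with $E=\varnothing$, $U_k$ and $L_k$ are minima over the same set of evaluated children. That is a useful clarification, since the paper writes this step loosely as a minimum over $\mathcal{Q}_k$.
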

\begin{proof}
By the definition of $U_k$ and $L_k$ and the first-order bounds \eqref{eq:first_order_bound}, we have $U_k \leq \min_{Q \in \mathcal{Q}_k} [ \Phi(x_m) - \left \lvert \frac{\partial \Phi}{\partial x}(x_m) \right \rvert v (Q) + \frac{1}{2} H_\Phi \lVert v (Q) \rVert_2^2 ] \leq L_k + H_\Phi \max_{Q \in \mathcal{Q}_k} \lVert v (Q) \rVert_2^2$. The rest of the proof follows the algebraic rearrangement established in Theorem~\ref{thm:bnb_complexity_zeroth_order}.
\end{proof}

When $\Phi(x) = V_{\theta_1} (x)$, $n_x = 4$, $\sigma = \tanh$, $\cond(Q_{\mathrm{init}}) = 1$, $\vol(Q_{\mathrm{init}}) = 1$, and $\delta = 0.001$, the resulting BnB depth of Theorem~\ref{thm:bnb_complexity_first_order} is shown in Fig.~\ref{fig:bnb_depth} for Sandwich Layers \cite{wang2023direct}.

\begin{remark}
The lines~\ref{lst:line:start}-\ref{lst:line:end} of Algorithm~\ref{alg:bnb} can be accelerated using parallel computation on a GPU. As demonstrated in Section~\ref{sec:simulation_studies}, this approach not only offers runtime benefits but also scales better to larger neural networks.
\end{remark}
}

\begin{figure}[t]
    \centering
    \includegraphics[width=0.7\linewidth]{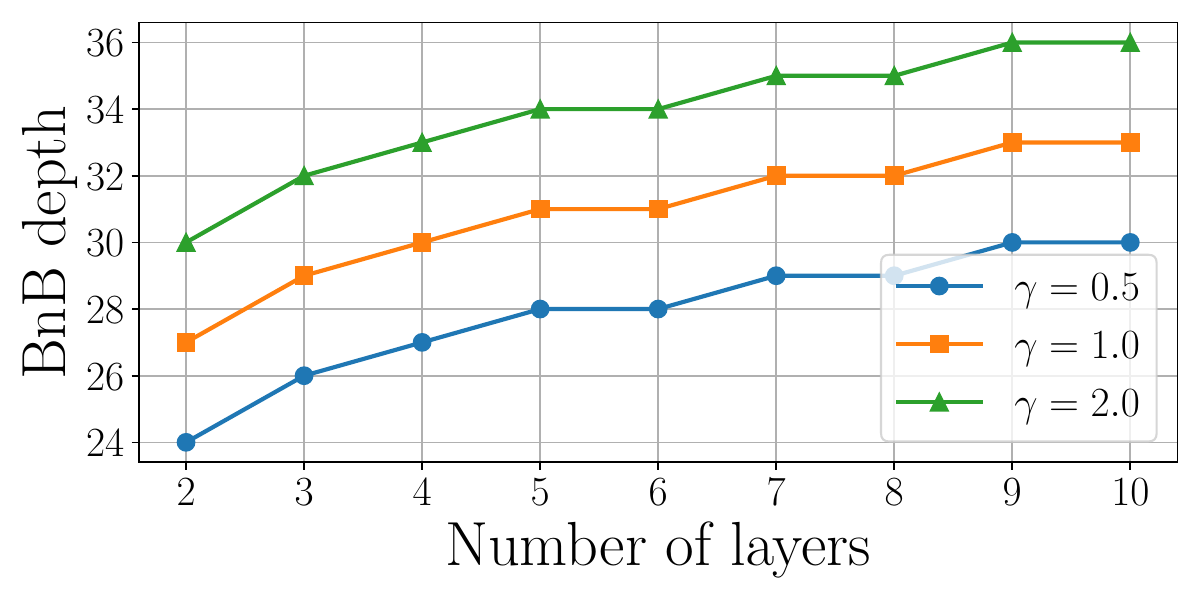}
    \caption{BnB depth w.r.t. the number of layers and Lipschitz bound.}
    \label{fig:bnb_depth}
\end{figure}

\subsection{Verification of the RCLF}
In this subsection, we detail the application of the high-order BnB algorithm (Algorithm~\ref{alg:bnb}) to formally verify the three core conditions of an RCLF.

\subsubsection{Positive Definiteness} 
To verify condition \eqref{eq:pd_cond}, we must guarantee that $V_{\theta_1}(x) > 0$ strictly away from the origin. To avoid numerical intractability near the equilibrium \cite{chang2019neural, zhou2022neural, wu2023neural}, we verify this property over the domain $\mathcal{X} \setminus B_2(0,\eta)$ for a small margin $\eta \ll 1$. Thus, we set the target function $\Phi(x) = V_{\theta_1}(x)$ and the excluded region $E = B_2(0,\eta)$.

Because $V_{\theta_1}$ is parameterized to be $\gamma_V$-Lipschitz, we set the zeroth-order bound constant to $K_\Phi = \gamma_V$. If the chosen activation functions are $\mathcal{C}^3$, the spectral bounds from Theorems~\ref{thm:lip_nn_hess_bound} and \ref{thm:lip_nn_third_order_bound} immediately provide the constants for the higher-order bounds in \eqref{eq:first_order_bound} and \eqref{eq:second_order_bound}. 

During the pruning step (line~\ref{lst:line:Q_subset_E} in Algorithm~\ref{alg:bnb}), a rectangle $Q = [x_{\mathrm{lb}}, x_{\mathrm{ub}}]$ is fully contained within the excluded origin ball ($Q \subset E$) if $ \sum_{i=1}^{n_x} \max(|x_{\mathrm{lb},i}|, |x_{\mathrm{ub},i}|)^2 \leq \eta^2 $.

\subsubsection{Inclusion of $B_2(0, \mu)$ by $\Omega_{\widehat{V}_r}$} 
Definition~\ref{def:rclf} requires that the ball $B_2(0, \mu)$ is entirely contained within the verified sublevel set $\Omega_{\widehat{V}_r}= \{ x \in \mathcal{X} \mid V_{\theta_1}(x) \leq \widehat{V}_r \}$. Because $V_{\theta_1}$ is $\gamma_V$-Lipschitz, this condition is trivially satisfied if $\widehat{V}_r \geq \gamma_V \mu$. 

If this shortcut does not hold, we cast the inclusion check as a verification problem by setting $\Phi(x) = \widehat{V}_r - V_{\theta_1}(x)$, initializing the root node as $Q_{\mathrm{init}} = [-\mu , \mu]^{n_x}$, and defining the excluded region as $E = Q_{\mathrm{init}} \setminus B_2(0, \mu)$. Algorithm~\ref{alg:bnb} then checks if $\Phi(x) \geq 0$ over $B_2(0, \mu)$. As before, high-order bounds apply if $V_{\theta_1} \in \mathcal{C}^3$. 

Here, the pruning condition $Q \subset E$ is met if the rectangle lies strictly outside the ball $B_2(0, \mu)$. This occurs when its minimum squared distance to the origin exceeds $\mu^2$, which is $\sum_{i=1}^{n_x} \max(0, x_{\mathrm{lb}, i}, -x_{\mathrm{ub}, i})^2 > \mu^2$.

\subsubsection{Decrease Condition} 
To verify the stability condition \eqref{eq:stability_cond}, we evaluate $H(x) + \omega(x) \leq 0$ (which is a stronger guarantee than \eqref{eq:stability_cond}, as derived in \eqref{eq:def_H}). We restrict our search space to the verified forward-invariant set $\Omega_{\widehat{V}_r}$ and exclude the interior of $B_2(0, \mu)$. We therefore set $\Phi(x) = - H(x) - \omega(x)$ and $E = \mathrm{int} (B_2(0, \mu)) \cup \{ x \in \mathcal{X} \mid V_{\theta_1}(x) > \widehat{V}_r \}$. 

A rectangle $Q$ is pruned if it is completely inside $B_2(0, \mu)$ or strictly outside the sublevel set:
$ \sum_{i=1}^{n_x} \max(|x_{\mathrm{lb},i}|, |x_{\mathrm{ub},i}|)^2 < \mu^2 $ or $V_{{\theta_1}, \mathrm{lb}} (Q) > \widehat{V}_r $
where $V_{{\theta_1}, \mathrm{lb}}$ is the tightest available lower bound for $V_{\theta_1}$ on $Q$, obtained via \eqref{eq:zeroth_order_bound}, \eqref{eq:first_order_bound}, or \eqref{eq:second_order_bound} subject to the smoothness of $V_{\theta_1}$.

Due to input saturation and the norm operations within $H(x)$, the target function $\Phi(x)$ is only $\mathcal{C}^0$. Its global Lipschitz bound $K_\Phi$ can be computed by:
$$
\begin{aligned}
\Lip(H+\omega) &\leq H_V C_f + \gamma_V (C_{f_x} + \gamma_{\pi} C_{f_u} ) \\
&\quad + K_\omega + H_V \lVert G \rVert_2 C_\epsilon + \gamma_V \lVert G \rVert_2 K_\epsilon \defeq K_\Phi
\end{aligned}
$$
where $K_\omega \geq \Lip(\omega)$, $C_f = \max_{x \in Q, u \in \mathcal{U}} \lVert f(x,u) \rVert_2$, $C_\epsilon = \max_{x \in Q} \epsilon(x)$, $C_{f_x} = \max_{x \in Q, u \in \mathcal{U}}  \lVert \frac{\partial f}{\partial x}(x,u) \rVert_2$, $C_{f_u} = \max_{x \in Q, u \in \mathcal{U}}  \lVert \frac{\partial f}{\partial u}(x,u) \rVert_2$, and $K_\epsilon \geq \Lip (\epsilon)$ due to the boundedness of $\mathcal{X}$ and $\mathcal{U}$. Because a direct Taylor expansion on the non-smooth $\Phi$ is impossible, we derive a custom high-order lower bound for $\min_{x \in Q} \Phi(x)$ in Appendix~\ref{appx:high_order_bound_stability}.

\begin{remark}
Our framework can be extended to address parametric uncertainties. For $\dot{x} = f(x,u,\lambda)$ with the uncertain parameters $\lambda \in \Lambda \subset \mathbb{R}^{n_\lambda}$ and state-feedback controller $\pi_{\theta_2}$, the closed-loop system is $\dot{x} = f_{\mathrm{cl}} (x, \lambda) = f(x,\pi_{\theta_2}(x),\lambda)$. We enforce a uniform Lyapunov decrease in the loss function: $\sup_{\lambda \in \Lambda}  \frac{\partial V_{\theta_1}}{\partial x}(x) f_{\mathrm{cl}} (x, \lambda) + \omega(x) \leq 0$. The $\sup_{\lambda \in \Lambda}$ operator can be implemented via adversarial sampling of $\lambda$, and/or exploiting structure: if $f$ is affine in $\lambda$ and $\Lambda$ is a convex polytope, the supremum is attained at vertices. Verification is analogous: check the Lyapunov inequality at the vertices of $\Lambda$ if the structure allows; otherwise, apply BnB over $(x, \lambda)$.
\end{remark}

{\color{revcolor}
\begin{remark}\label{rmk:computational_tradeoff}
Theorems~\ref{thm:bnb_complexity_zeroth_order} and \ref{thm:bnb_complexity_first_order} bound the number of evaluated rectangles, so the overall time complexity is the BnB tree size times the per-node bound computation cost. The global constants $H_\Phi$ and $T_\Phi$ are computed once offline using Theorems~\ref{thm:lip_nn_hess_bound} and \ref{thm:lip_nn_third_order_bound}, so the per-node cost is dominated by the local derivatives $\frac{\partial \Phi}{\partial x}(x_m)$ and $\frac{\partial^2 \Phi}{\partial x^2}(x_m)$ (e.g., $\Phi = V_{\theta_1}$ for positive definiteness and $\Phi = \widehat{V}_r - V_{\theta_1}$ for inclusion). With analytical forward-mode differentiation, this cost is $\mathcal{O}(L_V n_V^2)$, $\mathcal{O}(L_V n_V^2 n_x)$, and $\mathcal{O}(L_V n_V^2 n_x^2)$ for the zeroth-, first-, and second-order bounds, respectively, where $L_V$ and $n_V$ are the number of layers and maximum layer width of $V_{\theta_1}$. Although incorporating higher-order derivatives increases the per-node computational cost polynomially in $n_x$, it produces significantly tighter bounds. This trade-off drastically reduces the overall BnB depth, preventing the explosion of the search tree that empirically causes the zeroth-order baseline to become computationally intractable (see Appendix~\ref{appx:comparison_between_bounds}).
\end{remark}
}

{\color{revcolor} 
\begin{remark}\label{rmk:small_lipschitz_bound} 
Although the RCLF can be rescaled (Property~\ref{prpt:scaling}), a much smaller Lipschitz bound is undesirable in practice: at a fixed absolute tolerance $\delta$, it shrinks the bounding error and the signal together, so the apparent speedup is merely degraded relative precision; and it flattens $V_{\theta_1}$, yielding vanishing, ill-conditioned gradients that make the network harder to train.
\end{remark}
}

{\color{revcolor}
\begin{remark}\label{rmk:bi_lipschitz}
Parameterizing $\widetilde{V}(x) = \frac{1}{2} \lVert g(x) \rVert_2^2$ with a bi-Lipschitz network $g$ \cite{cheng2024learning, wang2024monotone} intrinsically guarantees positive definiteness and topologically spherical level sets, eliminating both the positive definiteness loss term and its verification step. This is a promising direction, but deriving the high-order bounds for this composition poses additional analytical challenges and is left to future work.
\end{remark}
}

{\color{revcolor}
\begin{remark}\label{rmk:sphere_equiv}
Although Lyapunov functions for asymptotically stable systems have level sets topologically equivalent to spheres \cite{kvalheim2025global}, our framework targets systems with persistent disturbances and achieves UUB, for which it suffices to verify the forward invariance of $\Omega_{\widehat{V}_r}$ to trap trajectories.
\end{remark}
}

\begin{figure*}[t]
    \centering
    \begin{subfigure}[b]{0.18\textwidth}
        \centering
        \includegraphics[width=\textwidth]{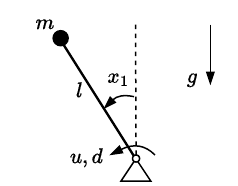}
        \caption{Inverted pendulum}
        \label{fig:inverted_pendulum_diagram}
    \end{subfigure}
    \begin{subfigure}[b]{0.18\textwidth}
        \centering
        \includegraphics[width=\textwidth]{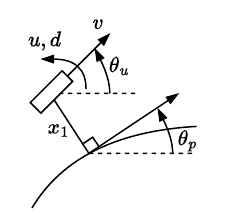}
        \caption{Unicycle}
        \label{fig:unicycle_diagram}
    \end{subfigure}
    \begin{subfigure}[b]{0.18\textwidth}
        \centering
        \includegraphics[width=\textwidth]{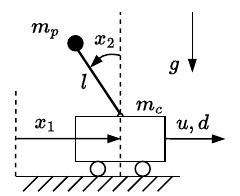}
        \caption{Cartpole}
        \label{fig:cartpole_diagram}
    \end{subfigure}
    \begin{subfigure}[b]{0.18\textwidth}
        \centering
        \includegraphics[width=\textwidth]{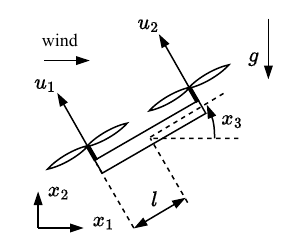}
        \caption{2D quadrotor}
        \label{fig:quadrotor_2d_diagram}
    \end{subfigure}
    \begin{subfigure}[b]{0.18\textwidth}
        \centering
        \includegraphics[width=\textwidth]{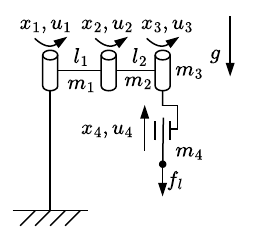}
        \caption{SCARA arm}
        \label{fig:scara_diagram}
    \end{subfigure}
    \caption{Diagrams of the inverted pendulum, unicycle path following, cartpole, 2D quadrotor, and SCARA arm.}
    \label{fig:system_diagrams}
\end{figure*}

\begin{table*}[t]
\centering
\caption{Verification time (in seconds unless otherwise specified) of the positive definiteness, decrease, and inclusion (of $B_2(0,\mu)$ by $\Omega_{\widehat{V}_r}$) conditions for different systems and different numbers of layers in $V_{\theta_1}$ (denoted by $L_V$).}
\label{tab:time}
\begin{threeparttable}
\scriptsize
\setlength{\tabcolsep}{2.5pt}
\begin{tabular}{C{1.4cm} C{0.35cm} *{12}{C{1.12cm}}}
\toprule
\multirow{3}{*}{\makecell[c]{System}} & \multirow{3}{*}{$L_V$}
 & \multicolumn{4}{c}{Positive Definiteness}
 & \multicolumn{4}{c}{Decrease}
 & \multicolumn{4}{c}{Inclusion} \\
\cmidrule(lr){3-6}\cmidrule(lr){7-10}\cmidrule(lr){11-14}
 & & \multirow{2}{*}{dReal} & \multirow{2}{*}{\makecell{Ours\\(CPU)}} & \multirow{2}{*}{\makecell{$\alpha,\beta$-\\CROWN\tnote{a}}} & \multirow{2}{*}{\makecell{Ours\\(GPU)}}
   & \multirow{2}{*}{dReal} & \multirow{2}{*}{\makecell{Ours\\(CPU)}} & \multirow{2}{*}{\makecell{$\alpha,\beta$-\\CROWN\tnote{a}}} & \multirow{2}{*}{\makecell{Ours\\(GPU)}}
   & \multirow{2}{*}{dReal} & \multirow{2}{*}{\makecell{Ours\\(CPU)}} & \multirow{2}{*}{\makecell{$\alpha,\beta$-\\CROWN\tnote{a}}} & \multirow{2}{*}{\makecell{Ours\\(GPU)}} \\
& & & & & & & & & & & & & \\
\midrule

\multirow{3}{*}{\makecell[c]{Inverted\\Pendulum\\($n_x=2$)}}
& 2 & 0.06{\tiny$\pm$0.007} & \textbf{0.03{\tiny$\pm$0.006}} & 0.93{\tiny$\pm$0.03} & 0.31{\tiny$\pm$0.02}
    & $\approx$ 989\tnote{b} & 0.79{\tiny$\pm$0.003} & 4.39{\tiny$\pm$0.02} & \textbf{0.71{\tiny$\pm$0.02}}
    & -\tnote{c} & - & - & - \\
& 3 & 55.16{\tiny$\pm$10.92} & \textbf{0.04{\tiny$\pm$0.009}} & 1.27{\tiny$\pm$0.04} & 0.30{\tiny$\pm$0.02}
    & $>$ 1~\si{h} & 0.72{\tiny$\pm$0.03} & 6.67{\tiny$\pm$0.03} & \textbf{0.71{\tiny$\pm$0.01}}
    & - & - & - & - \\
& 4 & $>$ 1~\si{h} & \textbf{0.05{\tiny$\pm$0.04}} & 1.67{\tiny$\pm$0.07} & 0.33{\tiny$\pm$0.03}
    & $>$ 1~\si{h} & 0.74{\tiny$\pm$0.01} & 8.75{\tiny$\pm$0.13} & \textbf{0.62{\tiny$\pm$0.01}}
    & - & - & - & - \\
\midrule

\multirow{3}{*}{\makecell[c]{Unicycle Path\\Following\\($n_x=2$)}}
& 2 & 0.10{\tiny$\pm$0.003} & 0.02{\tiny$\pm$0.0004} & 0.96{\tiny$\pm$0.07} & 0.31{\tiny$\pm$0.03}
    & 3.80{\tiny$\pm$1.48} & 1.53{\tiny$\pm$0.01} & NaN\tnote{d} & 1.92{\tiny$\pm$0.04}
    & -\tnote{c} & - & - & - \\
& 3 & 39.67{\tiny$\pm$3.42} & 0.03{\tiny$\pm$0.002} & 1.24{\tiny$\pm$0.08} & 0.34{\tiny$\pm$0.03}
    & OOM\tnote{d} & 1.53{\tiny$\pm$0.02} & NaN & 1.88{\tiny$\pm$0.03}
    & - & - & - & - \\
& 4 & $>$ 1~\si{h} & 0.03{\tiny$\pm$0.006} & 1.56{\tiny$\pm$0.05} & 0.32{\tiny$\pm$0.01}
    & OOM & 1.55{\tiny$\pm$0.01} & NaN & 1.94{\tiny$\pm$0.05}
    & - & - & - & - \\
\midrule

\multirow{3}{*}{\makecell[c]{Third-Order\\System\\($n_x=3$)}}
& 2 & 33.17{\tiny$\pm$27.40} & \textbf{0.04{\tiny$\pm$0.004}} & 1.37{\tiny$\pm$0.12} & 0.34{\tiny$\pm$0.03}
    & $\approx$ 107\tnote{b} & 1.65{\tiny$\pm$0.06} & NaN & \textbf{1.29{\tiny$\pm$0.04}}
    & 0.04{\tiny$\pm$0.01} & \textbf{0.02{\tiny$\pm$0.0003}} & 0.43{\tiny$\pm$0.05} & 0.13{\tiny$\pm$0.002} \\
& 3 & $>$ 1~\si{h} & \textbf{0.06{\tiny$\pm$0.007}} & 1.82{\tiny$\pm$0.05} & 0.32{\tiny$\pm$0.03}
    & OOM & 4.51{\tiny$\pm$0.75} & NaN & \textbf{1.39{\tiny$\pm$0.03}}
    & 20.46{\tiny$\pm$1.20} & \textbf{0.02{\tiny$\pm$0.001}} & 0.46{\tiny$\pm$0.03} & 0.13{\tiny$\pm$0.007} \\
& 4 & $>$ 1~\si{h} & \textbf{0.08{\tiny$\pm$0.02}} & 2.41{\tiny$\pm$0.02} & 0.32{\tiny$\pm$0.01}
    & OOM & 10.78{\tiny$\pm$0.33} & NaN & \textbf{1.70{\tiny$\pm$0.05}}
    & $>$ 1~\si{h} & \textbf{0.03{\tiny$\pm$0.001}} & 0.48{\tiny$\pm$0.04} & 0.12{\tiny$\pm$0.005} \\
\midrule

\multirow{3}{*}{\makecell[c]{Cartpole\\($n_x=4$)}}
& 2 & $>$ 1~\si{h} & 0.57{\tiny$\pm$0.07} & 2.59{\tiny$\pm$0.03} & \textbf{0.35{\tiny$\pm$0.01}}
    & $>$ 4~\si{h} & 88.52{\tiny$\pm$9.94} & NaN & \textbf{13.28{\tiny$\pm$0.08}}
    & 0.17{\tiny$\pm$0.02} & \textbf{0.02{\tiny$\pm$0.001}} & 0.42{\tiny$\pm$0.03} & 0.13{\tiny$\pm$0.003} \\
& 3 & $>$ 1~\si{h} & 3.37{\tiny$\pm$0.30} & 4.66{\tiny$\pm$0.03} & \textbf{0.39{\tiny$\pm$0.01}}
    & $>$ 4~\si{h} & 474.10{\tiny$\pm$19.59} & NaN & \textbf{29.24{\tiny$\pm$1.77}}
    & 500.96{\tiny$\pm$81.82} & \textbf{0.03{\tiny$\pm$0.005}} & 0.47{\tiny$\pm$0.02} & 0.13{\tiny$\pm$0.004} \\
& 4 & $>$ 1~\si{h} & 6.70{\tiny$\pm$0.62} & 7.73{\tiny$\pm$0.05} & \textbf{0.51{\tiny$\pm$0.006}}
    & $>$ 4~\si{h} & 0.47{\tiny$\pm$0.07}~\si{h} & NaN & \textbf{78.25{\tiny$\pm$10.68}}
    & OOM & \textbf{0.03{\tiny$\pm$0.002}} & 0.50{\tiny$\pm$0.05} & 0.12{\tiny$\pm$0.004} \\
\midrule

\multirow{2}{*}{\makecell[c]{2D Quadrotor\\($n_x=6$)}}
& \multirow{2}{*}{2} 
& \multirow{2}{*}{$>$ 1~\si{h}} & \multirow{2}{*}{6.87{\tiny$\pm$0.19}} & \multirow{2}{*}{14.09{\tiny$\pm$1.89}} & \multirow{2}{*}{\textbf{0.79{\tiny$\pm$0.009}}}
& \multirow{2}{*}{$>$ 6~\si{h}} & \multirow{2}{*}{1.92{\tiny$\pm$0.33}~\si{h}} & \multirow{2}{*}{NaN} & \multirow{2}{*}{\textbf{506.32{\tiny$\pm$83.18}}}
& \multirow{2}{*}{$>$ 1~\si{h}} & \multirow{2}{*}{\textbf{0.24{\tiny$\pm$0.01}}} & \multirow{2}{*}{0.96{\tiny$\pm$0.03}} & \multirow{2}{*}{0.36{\tiny$\pm$0.04}} \\
& & & & & & & & & & & & & \\
\midrule

\multirow{2}{*}{\makecell[c]{SCARA Arm\\($n_x=8$)}}
& \multirow{2}{*}{2} 
& \multirow{2}{*}{ $>$ 1~\si{h}} & \multirow{2}{*}{153.22{\tiny$\pm$7.01}} & \multirow{2}{*}{11.21{\tiny$\pm$0.09}} &  \multirow{2}{*}{\textbf{9.05{\tiny$\pm$0.04}}}
& \multirow{2}{*}{$>$ 12~\si{h}} & \multirow{2}{*}{$>$ 12~\si{h}} & \multirow{2}{*}{ERR\tnote{d}} & \multirow{2}{*}{\textbf{9.53{\tiny$\pm$0.37}~\si{h}}}
& \multirow{2}{*}{$>$ 1~\si{h}} & \multirow{2}{*}{136.07{\tiny$\pm$48.93}} & \multirow{2}{*}{0.46{\tiny$\pm$0.04}} & \textbf{7.67{\tiny$\pm$2.58}} \\
& & & & & & & & & & & & & \textbf{0.34{\tiny$\pm$0.03}}\tnote{a} \\

\bottomrule\addlinespace[1ex]
\end{tabular}

\begin{tablenotes}\footnotesize
\item[a] As $\alpha,\beta$-CROWN only accepts linear constraints in problem definition, we relax all $\ell_2$ balls to $\ell_\infty$ balls in a way that favors $\alpha,\beta$-CROWN. $B_2(0,\eta)$ is replaced by $B_\infty(0,\eta)$ for checking the positive definiteness condition; $B_2(0,\mu)$ by $B_\infty(0,\mu)$ for decrease condition and by $B_\infty(0,\mu/\sqrt{n_x})$ for inclusion condition. Due to the large volume difference between $\ell_2$ and $\ell_\infty$ balls for $n_x=8$, we report two values for our method on GPU. The first uses the original $\ell_2$ constraints; the second uses the same $\ell_\infty$ relaxation used by $\alpha,\beta$-CROWN to provide a direct comparison.
\item[b] dReal times out ($>$ 1~\si{h}) on two out of four trials for the inverted pendulum and three out of four for the third-order system.
\item[c] The inclusion condition is omitted for the inverted pendulum and unicycle path following as $\widehat{V}_r \geq \gamma_V \mu$.
\item[d] NaN: encounters NaN in bound evaluation. ERR: internal error. OOM: out of memory.
\end{tablenotes}
\end{threeparttable}
\end{table*}

\begin{table*}[t]
\centering
\caption{Verification time (in seconds unless otherwise specified) of the positive definiteness, decrease, and inclusion (of $B_2(0, \mu)$ by $\Omega_{\widehat{V}_r}$) conditions for different systems and different Lipschitz bounds of $V_{\theta_1}$ (denoted by $\gamma_V$).}
\label{tab:time_vs_lip}
\begin{threeparttable}
\begin{tabular}{C{1.5cm} C{0.8cm} C{1.39cm} C{1.39cm} C{1.39cm} C{1.39cm} C{1.39cm} C{1.39cm} C{1.39cm}}
\toprule
\multirow{2}{*}{System} & \multirow{2}{*}{\makecell{$\gamma_V$}} & \multirow{2}{*}{Success Rate} & \multicolumn{2}{c}{Positive Definiteness} & \multicolumn{2}{c}{Decrease} & \multicolumn{2}{c}{Inclusion} \\
\cmidrule(lr){4-5}\cmidrule(lr){6-7}\cmidrule(lr){8-9}
& & &  Ours (CPU) & Ours (GPU) & Ours (CPU) & Ours (GPU) & Ours (CPU) & Ours (GPU) \\
\midrule
\multirow{3}{*}{\makecell{Inverted\\Pendulum}}  
& 1.0 & 100\% & 0.02{\tiny$\pm$0.003} & 0.31{\tiny$\pm$0.02} & 0.62{\tiny$\pm$0.23} & 0.72{\tiny$\pm$0.03} & - & - \\
& 2.0 & 100\% & 0.02{\tiny$\pm$0.002} & 0.32{\tiny$\pm$0.02} & 0.80{\tiny$\pm$0.006} & 0.73{\tiny$\pm$0.04} & - & - \\
& 4.0 & 100\% & 0.02{\tiny$\pm$0.003} & 0.31{\tiny$\pm$0.03} & 0.80{\tiny$\pm$0.005} & 0.71{\tiny$\pm$0.02} & - & - \\
\midrule

\multirow{3}{*}{\makecell{Unicycle Path\\Following}}  
& 1.0 & 100\% & 0.02{\tiny$\pm$0.0004} & 0.31{\tiny$\pm$0.03} & 1.53{\tiny$\pm$0.01} & 1.92{\tiny$\pm$0.04} & - & - \\
& 2.0 & 100\% & 0.02{\tiny$\pm$0.0008} & 0.29{\tiny$\pm$0.005} & 1.54{\tiny$\pm$0.01} & 1.91{\tiny$\pm$0.03} & - & - \\
& 4.0 & 100\% & 0.02{\tiny$\pm$0.001} & 0.31{\tiny$\pm$0.02} & 1.53{\tiny$\pm$0.02} & 1.92{\tiny$\pm$0.07} & - & - \\
\midrule

\multirow{3}{*}{\makecell{Third-Order\\System}}  
& 2.5 & 100\% & 0.04{\tiny$\pm$0.003} & 0.32{\tiny$\pm$0.03} & 1.49{\tiny$\pm$0.01} & 1.26{\tiny$\pm$0.02} & 0.02{\tiny$\pm$0.0003} & 0.13{\tiny$\pm$0.003} \\
& 5.0 & 100\% & 0.04{\tiny$\pm$0.004} & 0.34{\tiny$\pm$0.03} & 1.65{\tiny$\pm$0.06} & 1.29{\tiny$\pm$0.04} & 0.02{\tiny$\pm$0.0003} & 0.13{\tiny$\pm$0.002} \\
& 10.0 & 100\% & 0.05{\tiny$\pm$0.004} & 0.32{\tiny$\pm$0.01} & 2.06{\tiny$\pm$0.55} & 1.21{\tiny$\pm$0.24} & 0.02{\tiny$\pm$0.0002} & 0.13{\tiny$\pm$0.007} \\
\midrule

\multirow{3}{*}{\makecell{Cartpole}}  
& 1.0 & 100\% & 0.60{\tiny$\pm$0.05} & 0.35{\tiny$\pm$0.02} & 82.61{\tiny$\pm$8.62} & 11.09{\tiny$\pm$2.81} & 0.02{\tiny$\pm$0.003} & 0.13{\tiny$\pm$0.007} \\
& 2.0 & 100\% & 0.62{\tiny$\pm$0.04} & 0.38{\tiny$\pm$0.04} & 108.69{\tiny$\pm$12.19} & 14.73{\tiny$\pm$0.44} & 0.02{\tiny$\pm$0.004} & 0.13{\tiny$\pm$0.003} \\
& 4.0 & 100\% & 0.65{\tiny$\pm$0.06} & 0.36{\tiny$\pm$0.02} & 146.89{\tiny$\pm$20.81} & 17.60{\tiny$\pm$1.23} & 0.02{\tiny$\pm$0.001} & 0.13{\tiny$\pm$0.003} \\
\midrule

\multirow{2}{*}{\makecell{2D Quadrotor}}  
& 0.5 & 100\% & 5.24{\tiny$\pm$0.08} & 0.68{\tiny$\pm$0.04} & 1.06{\tiny$\pm$0.03}~\si{h} & 292.50{\tiny$\pm$8.23} & 0.33{\tiny$\pm$0.07} & 0.34{\tiny$\pm$0.004} \\
& 1.0 & 100\% & 6.87{\tiny$\pm$0.19} & 0.79{\tiny$\pm$0.009} & 1.92{\tiny$\pm$0.33}~\si{h} & 506.32{\tiny$\pm$83.18} & 0.24{\tiny$\pm$0.01} & 0.36{\tiny$\pm$0.04} \\
\midrule

\multirow{2}{*}{\makecell{SCARA Arm}}  
& 0.5 & 100\% & 115.60{\tiny$\pm$2.46} & 7.57{\tiny$\pm$0.07} & $>$ 12~\si{h} & 5.12{\tiny$\pm$0.33} & 17.94{\tiny$\pm$0.67} & 1.26{\tiny$\pm$0.05} \\
& 1.0 & 100\% & 153.22{\tiny$\pm$7.01} & 9.05{\tiny$\pm$0.04} & $>$ 12~\si{h} & 9.53{\tiny$\pm$0.37} & 136.07{\tiny$\pm$48.93} & 7.67{\tiny$\pm$2.58} \\
\bottomrule\addlinespace[1ex]
\end{tabular}
\end{threeparttable}
\end{table*}

\section{Simulation Studies}\label{sec:simulation_studies}
We evaluate our methodology on six dynamical systems (five depicted in Fig.~\ref{fig:system_diagrams}): the inverted pendulum, unicycle path following, a third-order system, the cartpole, the 2D quadrotor, and the SCARA arm. Table~\ref{tab:time} reports the RCLF verification times for Algorithm~\ref{alg:bnb}, dReal \cite{gao2013dreal} (a CPU-based SMT solver widely used for Lyapunov verification \cite{chang2019neural, zhou2022neural, liu2025physics}), and $\alpha,\beta$-CROWN \cite{wang2021beta} (a GPU-based bound-propagation verifier, also used in \cite{yang2024lyapunov}) at the same precision. For each case we train with four random seeds and report the mean and standard deviation. For the SCARA arm, we use interval bound propagation (IBP) for local bounds on the gradient, Hessian, and third-order derivative of the LNNs to accelerate verification. Our results show that Algorithm~\ref{alg:bnb} offers better scalability for larger LNNs and significantly reduces the computational time. 

Table~\ref{tab:time_vs_lip} fixes $L_V=2$ and varies $\gamma_V$, reporting the verification time and success rate (over four random seeds) for each example. We use Sandwich Layers \cite{wang2023direct} for this section and set $\Psi^{(\ell)} = I$ for $V_{\theta_1}$ throughout. The network structures and hyperparameters are given in Tables~\ref{tab:nn_structure} and \ref{tab:hyperparams}. {\color{revcolor} Appendix~\ref{appx:additional_results_on_other_layers} reports results with the Spectral Normalization \cite{miyato2018spectral} and Orthogonal Layer \cite{trockman2021orthogonalizing} architectures, showing that the speedups are consistent and independent of the underlying Lipschitz parameterization.} All verifications are run on a PC with 64~GB RAM, an Intel Core i9-13900K, and an NVIDIA GeForce RTX 4090 GPU.

\subsection{Inverted Pendulum}
\begin{figure}[t]
    \centering
    \begin{subfigure}[b]{0.2\textwidth}
        \centering
        \includegraphics[width=\textwidth]{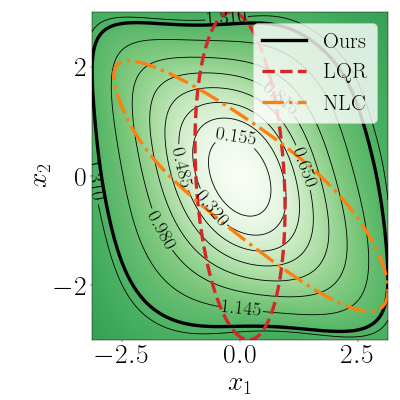}
        \caption{$V_{\theta_1}(x)$}
        \label{fig:eg_inv_pend_forward_inv}
    \end{subfigure}
    \begin{subfigure}[b]{0.2\textwidth}
        \centering
        \includegraphics[width=\textwidth]{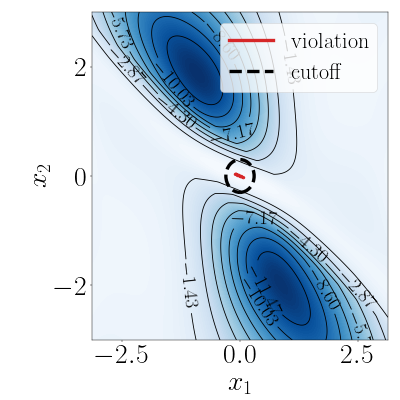}
        \caption{$H(x)+\omega(x)$}
        \label{fig:eg_inv_pend_stability}
    \end{subfigure}
    \caption{Visualization of $V_{\theta_1}(x)$ and $H(x)+\omega(x)$ for the inverted pendulum. \subref{fig:eg_inv_pend_forward_inv}: level sets of $V_{\theta_1}(x)$ and the largest forward invariant sets given by our method, LQR, and NLC. \subref{fig:eg_inv_pend_stability}: level sets of $H(x)+\omega(x)$ with the cutoff region $B_2(0,\mu)$, showing that violations ($H(x)+\omega(x)>0$) remain confined within the cutoff region, consistent with condition~\eqref{eq:stability_cond}.}
    \label{fig:eg_inv_pend_contour}
\end{figure}

\begin{table}[t]
\centering
\caption{Forward invariant region ratio (\%) for inverted pendulum.}
\label{tab:eg_inv_pend_ratio}
\begin{tabular}{ccccc}
\toprule
LQR & NLC & Ours ($L_V{=}2$) & Ours ($L_V{=}3$) & Ours ($L_V{=}4$) \\
\midrule
23.28 & 28.57 & 71.86 & 74.24 & 75.15 \\
\bottomrule
\end{tabular}
\end{table}

We consider the stationary inverted pendulum system (see Fig.~\ref{fig:inverted_pendulum_diagram}) with dynamics
\begin{equation}
    \dot{x}_1 = x_2, \quad \dot{x}_2 = [m g l \sin(x_1) + u + d(t,x)]/m l^2
\end{equation}
where $x_1$ represents the angle between the pendulum and the upward direction and $x_2$ the angular velocity with the state space being $\mathcal{X} = \{x \in \R^2 \mid |x_1| \leq \pi, |x_2| \leq 3\}$. The input $u \in \R$ is the torque applied to the pendulum with $\mathcal{U} = [-15, 15]$. We consider a disturbance signal $d(t,x) = d_0 \sin(2 \pi t) - d_1 x_2$ where $d_0 \sin(2 \pi t)$ represents a time-varying external torque and $- d_1 x_2$ the viscous friction proportional to the angular velocity. The system parameters are $m=1.0$~\si{kg}, $l = 1.0$~\si{m}, $g = 9.81$~\si{m.s^{-2}}, $d_0 = 0.1$~\si{N.m}, and $d_1 = 0.1$~\si{N.m.s}. As shown in Fig.~\ref{fig:eg_inv_pend_forward_inv} and Table~\ref{tab:eg_inv_pend_ratio}, our method achieves a much larger forward invariant region than LQR and NLC \cite{chang2019neural}. The region where $H(x)+\omega(x) > 0$ is contained in $B_2(0, \mu)$ with $\mu = 0.3$ as depicted in Fig.~\ref{fig:eg_inv_pend_stability}. The verified forward invariant region is $\Omega_{\widehat{V}_r}$ with $\widehat{V}_r = 1.19$ which is greater than $\gamma_V \mu$ in this case, so $B_2(0, \mu)$ must be contained within $\Omega_{\widehat{V}_r}$. We sample 200 initial conditions on the boundary of $\Omega_{\widehat{V}_r}$, and the average control effort ($\int_0^T \lVert u(t) \rVert_2^2 dt$) of LQR, NLC, and our method is 353.93, 302.63, and 255.95, respectively. The average settling time ($\inf \{ \tau \geq 0 \mid \lVert x(t) \rVert_2 \leq 0.05, \forall t \geq \tau \}$) is 11.38~\si{s}, 8.45~\si{s}, and 9.65~\si{s}, respectively.

\begin{figure}[t]
    \centering
    \begin{subfigure}[b]{0.2\textwidth}
        \centering
        \includegraphics[width=\textwidth]{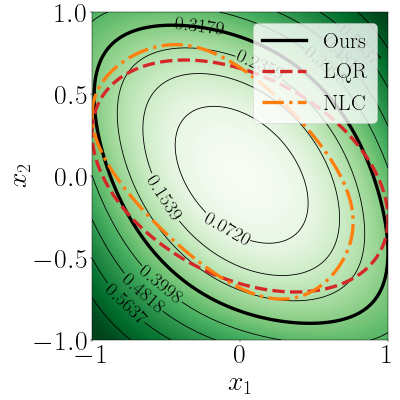}
        \caption{$V_{\theta_1}(x)$}
        \label{fig:eg_unicycle_forward_inv}
    \end{subfigure}
    \begin{subfigure}[b]{0.2\textwidth}
        \centering
        \includegraphics[width=\textwidth]{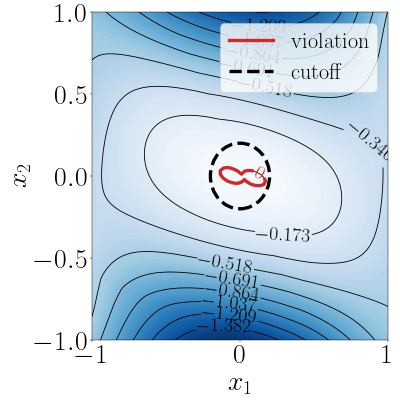}
        \caption{$H(x) + \omega(x)$}
        \label{fig:eg_unicycle_stability}
    \end{subfigure}
    \caption{Visualization of $V_{\theta_1}(x)$ and $H(x)+\omega(x)$ for the unicycle path following example. \subref{fig:eg_unicycle_forward_inv}: level sets of $V_{\theta_1}(x)$ and the largest forward invariant sets given by our method, LQR, and NLC. \subref{fig:eg_unicycle_stability}: level sets of $H(x)+\omega(x)$ and the cutoff region $B_2(0, \mu)$.}
    \label{fig:eg_unicycle_contour}
\end{figure}

\begin{table}[t]
\centering
\caption{Forward invariant region ratio (\%) for unicycle.}
\label{tab:eg_unicycle_ratio}
\begin{tabular}{ccccc}
\toprule
LQR & NLC & Ours ($L_V{=}2$) & Ours ($L_V{=}3$) & Ours ($L_V{=}4$) \\
\midrule
50.58 & 44.09 & 62.06 & 62.51 & 63.88 \\
\bottomrule
\end{tabular}
\end{table}

\subsection{Unicycle Path Following}
Consider the unicycle path following example depicted in Fig.~\ref{fig:unicycle_diagram}. Denote by $\theta_p$ the angle between the current tangent to the path and the horizontal direction and $\theta_u$ the angle between the unicycle velocity and the horizontal direction. Let $x_1$ be the distance between the center of the unicycle and the path and $x_2 = \theta_u - \theta_p$ the orientation error. In the case of following a circular path \cite{de2005feedback}, the dynamics are given by
\begin{equation}\label{eq:unicycle_dynamics}
    \dot{x}_1 = v \sin (x_2), \quad 
    \dot{x}_2 = u - \frac{v \cos (x_2)}{R - x_1} + d(t)
\end{equation}
where $v \in \R_+$ is the fixed linear velocity of the unicycle, $R \in \R_+$ is the radius of the circular path, and the control $u \in \R$ is the angular velocity of the unicycle. The disturbance signal $d(t)$ represents the uncertainties in the dynamics of orientation error (e.g., input noise). Assume that $d(t)$ takes the form $d(t) = d_0 \cos(2 \pi t)$, and the system parameters are $v = 2$~\si{m.s^{-1}}, $R = 10$~\si{m}, and $d_0 = 0.1$~\si{s^{-1}}. Let the state space be $\mathcal{X} = \{x \in \R^2 \mid |x_1| \leq 1, |x_2| \leq 1\}$ and control limits $\mathcal{U} = [-3.8, 4.2]$. As we still need to generate a control equal to $v/R$ at the equilibrium point $(0,0)$, we apply a change of variable $u = v/R + \tilde{u}$ to shift the control to zero at $x=0$.

We see from Fig.~\ref{fig:eg_unicycle_forward_inv} and Table~\ref{tab:eg_unicycle_ratio} that our method obtains a larger forward invariant region than LQR and NLC. As shown in Fig.~\ref{fig:eg_unicycle_stability}, the region where $H(x)+\omega(x) > 0$ is contained in $B_2(0, \mu)$ with $\mu = 0.2$. As the verified forward invariant region is $\Omega_{\widehat{V}_r}$ with $\widehat{V}_r = 0.32$ which is greater than $\gamma_V \mu$ in this case, $B_2(0, \mu)$ must be contained within $\Omega_{\widehat{V}_r}$. 
We sample 200 initial conditions on the boundary of $\Omega_{\widehat{V}_r}$, and the average control effort ($\int_0^T \lVert \tilde{u}(t) \rVert_2^2 dt$) of LQR, NLC, and our method is 0.84, 7.12, and 0.80, respectively. The average settling time ($\inf \{ \tau \geq 0 \mid \lVert x(t) \rVert_2 \leq 0.05, \forall t \geq \tau \}$) is 4.16~\si{s}, 4.75~\si{s}, and 2.52~\si{s}, respectively.

\begin{figure}[t]
    \centering
    \begin{subfigure}[b]{0.20\textwidth}
        \centering
        \includegraphics[width=\textwidth]{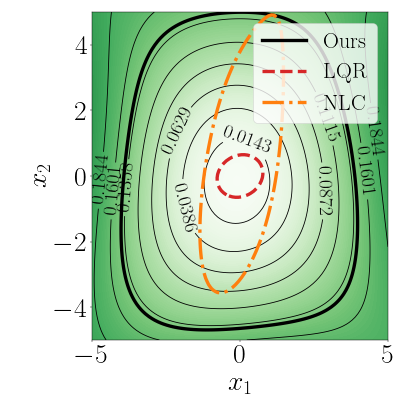}
        \caption{$V_{\theta_1}(x)$ on $(x_1, x_2)$}
        \label{fig:eg_tac_third_order_sys_forward_inv_1}
    \end{subfigure}
    \begin{subfigure}[b]{0.20\textwidth}
        \centering
        \includegraphics[width=\textwidth]{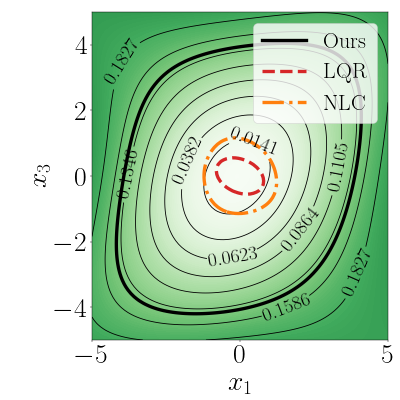}
        \caption{$V_{\theta_1}(x)$ on $(x_1, x_3)$}
        \label{fig:eg_tac_third_order_sys_forward_inv_2}
    \end{subfigure}
    \begin{subfigure}[b]{0.20\textwidth}
        \centering
        \includegraphics[width=\textwidth]{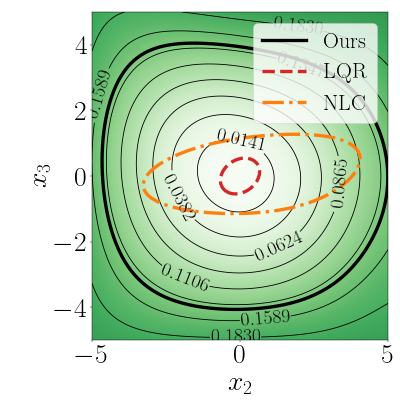}
        \caption{$V_{\theta_1}(x)$ on $(x_2, x_3)$}
        \label{fig:eg_tac_third_order_sys_forward_inv_3}
    \end{subfigure}
    \begin{subfigure}[b]{0.20\textwidth}
        \centering
        \includegraphics[width=\textwidth]{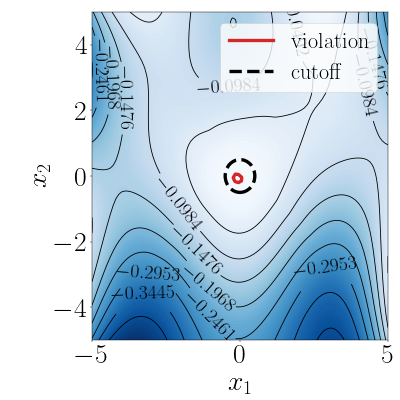}
        \caption{$H(x)+\omega(x)$ on $(x_1, x_2)$}
        \label{fig:eg_tac_third_order_sys_stability_1}
    \end{subfigure}
    \begin{subfigure}[b]{0.20\textwidth}
        \centering
        \includegraphics[width=\textwidth]{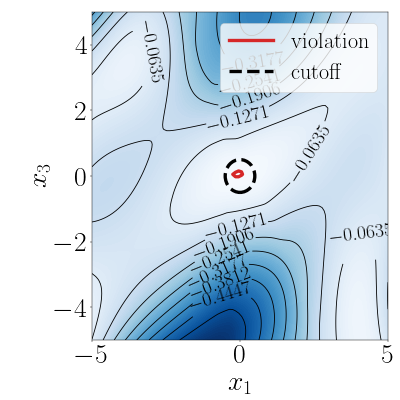}
        \caption{$H(x)+\omega(x)$ on $(x_1, x_3)$}
        \label{fig:eg_tac_third_order_sys_stability_2}
    \end{subfigure}
    \begin{subfigure}[b]{0.20\textwidth}
        \centering
        \includegraphics[width=\textwidth]{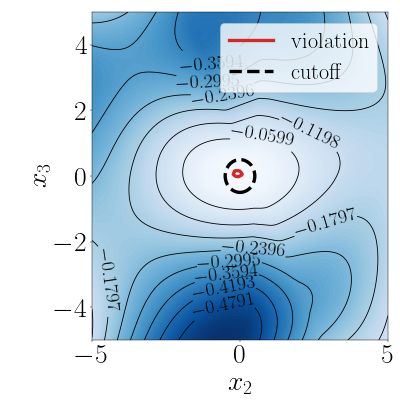}
        \caption{$H(x)+\omega(x)$ on $(x_2, x_3)$}
        \label{fig:eg_tac_third_order_sys_stability_3}
    \end{subfigure}
    \caption{Visualization of $V_{\theta_1}(x)$ and $H(x)+\omega(x)$ for the third-order system. \subref{fig:eg_tac_third_order_sys_forward_inv_1}-\subref{fig:eg_tac_third_order_sys_forward_inv_3}: level sets of $V_{\theta_1}(x)$ and the largest forward invariant sets given by our method, LQR, and NLC. \subref{fig:eg_tac_third_order_sys_stability_1}-\subref{fig:eg_tac_third_order_sys_stability_3}: level sets of $H(x)+\omega(x)$ and the cutoff region $B_2(0, \mu)$.}
    \label{fig:eg_tac_third_order_sys_contour}
\end{figure}

\begin{table}[t]
\centering
\caption{Forward invariant region ratio (\%) for the third-order system.}
\label{tab:eg_tac_third_order_sys_ratio}
\begin{tabular}{ccccc}
\toprule
LQR & NLC & Ours ($L_V{=}2$) & Ours ($L_V{=}3$) & Ours ($L_V{=}4$) \\
\midrule
0.12 & 2.59 & 39.78 & 40.19 & 41.89 \\
\bottomrule
\end{tabular}
\end{table}

\subsection{Third-Order System}\label{sec:third_order_sys}
Consider the third-order control-affine system proposed in \cite{andrieu2010uniting}. The dynamics of this system are given by
\begin{subequations}
    \begin{align}
    \dot{x}_1 &= -x_1 + x_3, \\
    \dot{x}_2 &= x_1^2 - x_2 -2 x_1 x_3 + x_3, \\
    \dot{x}_3 &= -x_2 + u + d(t)
    \end{align}
\end{subequations}
where the disturbance signal satisfies $\lvert d(t) \rvert \leq 0.1$. Let the state space be $\mathcal{X} = \{x \in \R^3 \mid |x_1| \leq 5, |x_2| \leq 5, |x_3| \leq 5\}$ and control limits $\mathcal{U} = \{u \in \R \mid |u| \leq 20\}$. 

We visualize the forward invariant regions given by our method, LQR, and NLC in Figs.~\ref{fig:eg_tac_third_order_sys_forward_inv_1}-\ref{fig:eg_tac_third_order_sys_forward_inv_3}. As shown in Figs.~\ref{fig:eg_tac_third_order_sys_stability_1}-\ref{fig:eg_tac_third_order_sys_stability_3}, the region where $H(x)+\omega(x) > 0$ is contained in $B_2(0, \mu)$ with $\mu = 0.5$. Table~\ref{tab:eg_tac_third_order_sys_ratio} shows that our method achieves a larger forward invariant region than both LQR and NLC, and we note that the LQR solution only gives a locally stable controller with a small forward invariant region. We sample 200 initial conditions on the boundary of $\Omega_{\widehat{V}_r}$, and under $d(t) = 0.1 \sin(2 \pi t)$, the average control effort ($\int_0^T \lVert \tilde{u}(t) \rVert_2^2 dt$) of NLC and our method is 25.81 and 12.59, respectively. The average settling time ($\inf \{ \tau \geq 0 \mid \lVert x(t) \rVert_2 \leq 0.05, \forall t \geq \tau \}$) is 4.81~\si{s} and 4.61~\si{s}, respectively. Five out of the 200 trajectories are unstable under the LQR controller.

\begin{figure}[t]
    \centering
    \begin{subfigure}[b]{0.20\textwidth}
        \centering
        \includegraphics[width=\textwidth]{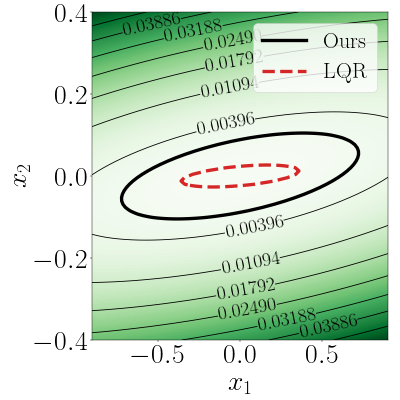}
        \caption{$V_{\theta_1}(x)$ on $(x_1, x_2)$}
        \label{fig:eg_cartpole_forward_inv_1}
    \end{subfigure}
    \begin{subfigure}[b]{0.20\textwidth}
        \centering
        \includegraphics[width=\textwidth]{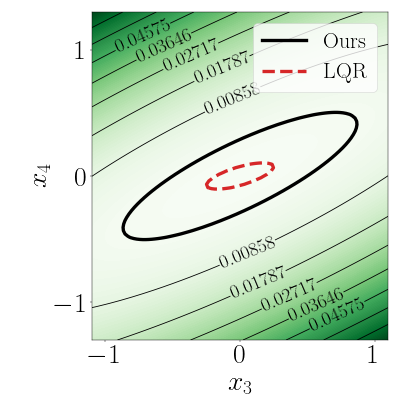}
        \caption{$V_{\theta_1}(x)$ on $(x_3, x_4)$}
        \label{fig:eg_cartpole_forward_inv_2}
    \end{subfigure}
    \begin{subfigure}[b]{0.20\textwidth}
        \centering
        \includegraphics[width=\textwidth]{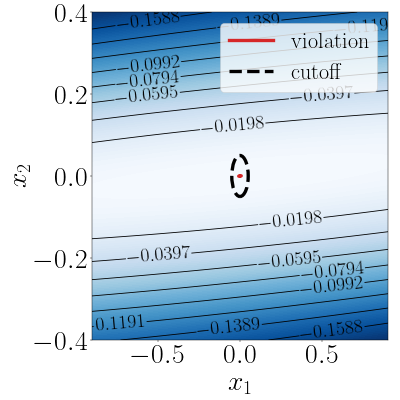}
        \caption{$H(x)+\omega(x)$ on $(x_1, x_2)$}
        \label{fig:eg_cartpole_stability_1}
    \end{subfigure}
    \begin{subfigure}[b]{0.20\textwidth}
        \centering
        \includegraphics[width=\textwidth]{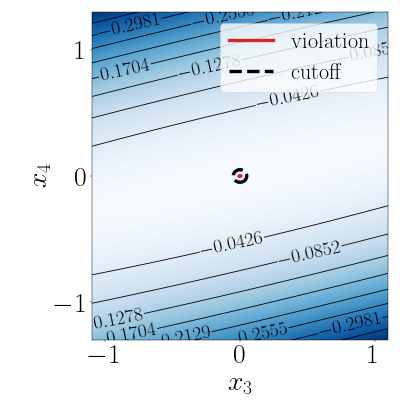}
        \caption{$H(x)+\omega(x)$ on $(x_3, x_4)$}
        \label{fig:eg_cartpole_stability_2}
    \end{subfigure}
    \caption{Visualization of $V_{\theta_1}(x)$ and $H(x)+\omega(x)$ for the cartpole. \subref{fig:eg_cartpole_forward_inv_1} and \subref{fig:eg_cartpole_forward_inv_2}: level sets of $V_{\theta_1}(x)$ and the largest forward invariant sets given by our method and LQR. \subref{fig:eg_cartpole_stability_1} and \subref{fig:eg_cartpole_stability_2}: level sets of $H(x)+\omega(x)$ and the cutoff region $B_2(0, \mu)$.}
    \label{fig:eg_cartpole_contour}
\end{figure}

\begin{table}[t]
\centering
\caption{Forward invariant region ratio (\%) for cartpole.}
\label{tab:eg_cartpole_ratio}
\begin{tabular}{cccc}
\toprule
LQR & Ours ($L_V{=}2$) & Ours ($L_V{=}3$) & Ours ($L_V{=}4$) \\
\midrule
{0.19} & 5.36 & 5.40 & 5.43 \\
\bottomrule
\end{tabular}
\end{table}

\subsection{Cartpole}
Consider the cartpole system in Fig.~\ref{fig:cartpole_diagram} with dynamics 
{\allowdisplaybreaks
\begin{subequations}
\begin{align}
\dot{x}_1 &= x_3, \quad \dot{x}_2 = x_4, \\
\dot{x}_3 &= \frac{m_p \sin(x_2) (-l x_4^2 + g \cos(x_2)) + u + d(x)}{m_c + m_p \sin^2(x_2)}, \\
\dot{x}_4 &= \frac{1}{l(m_c + m_p \sin^2(x_2))}[(m_p + m_c) g \sin(x_2)  \notag \\
&\quad - m_p l \sin(x_2) \cos(x_2) x_4^2 + \cos(x_2) (u + d(x))]
\end{align}
\end{subequations}
}where $x_1$ is the horizontal position of the cart, $x_2$ the angle between the pole and the upward direction in the counterclockwise direction, $x_3$ the horizontal speed of the cart, $x_4$ the pole’s angular velocity, and $u$ the force applied on the cart in the horizontal direction. $d(x)$ models the friction between the cart and the ground, and $d(x) = -d_0 \sign(x_3) (m_c+m_p) g$. Let the state space be $\mathcal{X} = \{x \in \R^4 \mid |x_1| \leq 0.9, |x_2| \leq 0.4, |x_3| \leq 1.1, |x_4| \leq 1.4\}$ and control limits $\mathcal{U} = [-80,80]$. The system parameters are $m_c = 2$~\si{kg}, $m_p = 0.1$~\si{kg}, $l=1$~\si{m}, $g = 9.81$~\si{m.s^{-2}}, and the friction coefficient is $d_0 = 0.001$. In this case, $\bigl \lvert \frac{d(x)}{m_c + m_p \sin^2(x_2)} \bigr \rvert \leq \bigl \lvert \frac{d(x)}{m_c} \bigr \rvert = 0.0104$ and $\bigl \lvert \frac{\cos(x_2) d(x)}{l(m_c + m_p \sin^2(x_2))} \bigr \rvert \leq \bigl \lvert \frac{d(x)}{l m_c} \bigr \rvert = 0.0104$.

Figures~\ref{fig:eg_cartpole_forward_inv_1} and \ref{fig:eg_cartpole_forward_inv_2} illustrate the forward invariant regions of our method and LQR. Figures~\ref{fig:eg_cartpole_stability_1} and \ref{fig:eg_cartpole_stability_2} demonstrate that the set where $H(x)+\omega(x)>0$ is entirely contained within the ball $B_2(0, \mu)$ with $\mu = 0.05$. Table~\ref{tab:eg_cartpole_ratio} shows that our method achieves a larger forward invariant region than LQR. The training of NLC did not converge within a reasonable time for systems with four or more dimensions. We sample 200 initial conditions on the boundary of $\Omega_{\widehat{V}_r}$. With these same initial conditions, the average control effort ($\int_0^T \lVert u(t) \rVert_2^2 dt$) of LQR and our method is 2.43 and 3.04, respectively. The average settling time ($\inf \{ \tau \geq 0 \mid \lVert x(t) \rVert_2 \leq 0.05, \forall t \geq \tau \}$) is 5.56~\si{s} and 3.34~\si{s}, respectively.

\begin{figure}[t]
    \centering
    \begin{subfigure}[b]{0.20\textwidth}
        \centering
        \includegraphics[width=\textwidth]{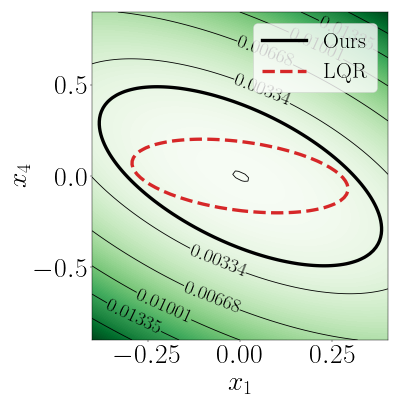}
        \caption{$V_{\theta_1}(x)$ on $(x_1, x_4)$}
        \label{fig:eg_quad_2d_forward_inv_1}
    \end{subfigure}
    \begin{subfigure}[b]{0.20\textwidth}
        \centering
        \includegraphics[width=\textwidth]{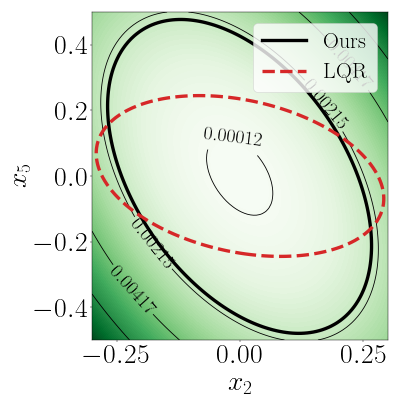}
        \caption{$V_{\theta_1}(x)$ on $(x_2, x_5)$}
        \label{fig:eg_quad_2d_forward_inv_2}
    \end{subfigure}
    \begin{subfigure}[b]{0.20\textwidth}
        \centering
        \includegraphics[width=\textwidth]{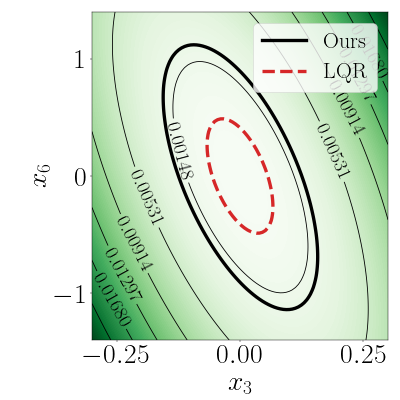}
        \caption{$V_{\theta_1}(x)$ on $(x_3, x_6)$}
        \label{fig:eg_quad_2d_forward_inv_3}
    \end{subfigure} 
    \begin{subfigure}[b]{0.20\textwidth}
        \centering
        \includegraphics[width=\textwidth]{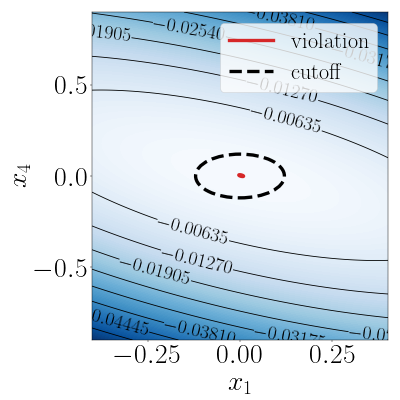}
        \caption{$H(x)+\omega(x)$ on $(x_1, x_4)$}
        \label{fig:eg_quad_2d_stability_1}
    \end{subfigure}
    \begin{subfigure}[b]{0.20\textwidth}
        \centering
        \includegraphics[width=\textwidth]{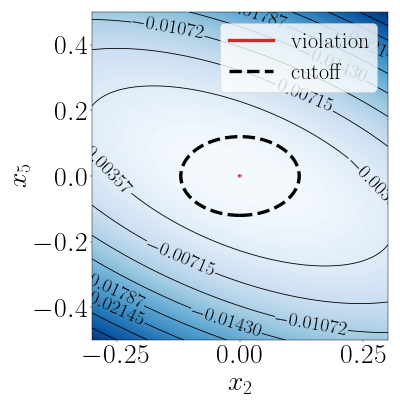}
        \caption{$H(x)+\omega(x)$ on $(x_2, x_5)$}
        \label{fig:eg_quad_2d_stability_2}
    \end{subfigure}
    \begin{subfigure}[b]{0.20\textwidth}
        \centering
        \includegraphics[width=\textwidth]{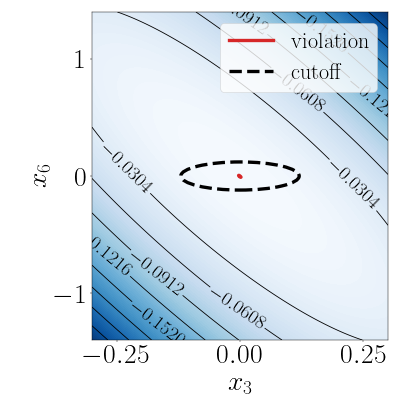}
        \caption{$H(x)+\omega(x)$ on $(x_3, x_6)$}
        \label{fig:eg_quad_2d_stability_3}
    \end{subfigure}
    \caption{Visualization of $V_{\theta_1}(x)$ and $H(x)+\omega(x)$ for the 2D quadrotor. \subref{fig:eg_quad_2d_forward_inv_1}-\subref{fig:eg_quad_2d_forward_inv_3}: level sets of $V_{\theta_1}(x)$ and the largest forward invariant sets given by our method and LQR. \subref{fig:eg_quad_2d_stability_1}-\subref{fig:eg_quad_2d_stability_3}: level sets of $H(x)+\omega(x)$ and the cutoff region $B_2(0, \mu)$.}
    \label{fig:eg_quad_2d_contour}
\end{figure}

\subsection{2D Quadrotor}
Consider the 2D quadrotor given in Fig.~\ref{fig:quadrotor_2d_diagram} with dynamics
\begin{subequations}
\begin{align}
    \dot{x}_1 &= x_4, \quad \dot{x}_2 = x_5, \quad \dot{x}_3 = x_6, \allowbreak \\
    \dot{x}_4 &= -\sin(x_3) (u_1+u_2)/m + d(t), \\
    \dot{x}_5 &= -g + \cos(x_3) (u_1+u_2)/m, \\
    \dot{x}_6 &= l (-u_1 + u_2)/I
\end{align}
\end{subequations}
where $x_1$ and $x_2$ are the horizontal and vertical positions of the quadrotor, $x_3$ is the angle between the body of the quadrotor and the horizontal direction, $x_4$ and $x_5$ are the linear velocities in the horizontal and vertical directions, $x_6$ is the angular velocity, and $u_1$ and $u_2$ are the forces provided by the propellers. Let $m$ be the mass, $I$ the inertia, and $l$ the arm length of the quadrotor. $d(t)$ models the wind effect in the horizontal direction, and we assume that $|d(t)| \leq 0.003$. Let the state space be $\mathcal{X} = \{x \in \R^6 \mid |x_1| \leq 0.4, |x_2| \leq 0.3, |x_3| \leq 0.3, |x_4| \leq 0.9, |x_5| \leq 0.5, |x_6| \leq 1.4\}$ and control limits $\mathcal{U} = \{u \in \R^2 \mid -7.19 \leq u_1, u_2 \leq 26.81\}$. The system parameters are $m = 2$~\si{kg}, $I = 0.2$~\si{kg.m^2}, $l=0.4$~\si{m}, and $g = 9.81$~\si{m.s^{-2}}. We apply a change of variables $u_1 = m g / 2+ \tilde{u}_1$ and $u_2 = m g / 2+ \tilde{u}_2$ to shift the controls to zero at $x=0$.

We visualize the forward invariant regions given by our method and LQR in Figs.~\ref{fig:eg_quad_2d_forward_inv_1}-\ref{fig:eg_quad_2d_forward_inv_3}. Figures~\ref{fig:eg_quad_2d_stability_1}-\ref{fig:eg_quad_2d_stability_3} show that the region where $H(x)+\omega(x) > 0$ is contained in $B_2(0, \mu)$ with $\mu = 0.12$.
We sample 200 initial conditions on the boundary of $\Omega_{\widehat{V}_r}$, and the average control effort ($\int_0^T \lVert \tilde{u}(t) \rVert_2^2 dt$) of LQR and our method is 0.38 and 0.43, respectively. The average settling time ($\inf \{ \tau \geq 0 \mid \lVert x(t) \rVert_2 \leq 0.05, \forall t \geq \tau \}$) is 4.81~\si{s} and 1.95~\si{s}, respectively.

\begin{figure}[t]
    \centering
    \begin{subfigure}[b]{0.18\textwidth}
        \centering
        \includegraphics[width=\textwidth]{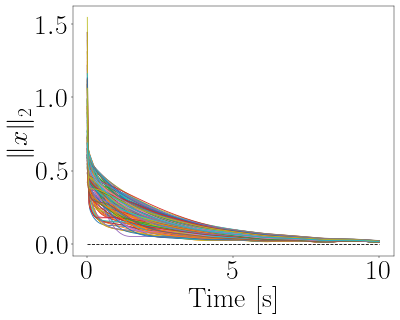}
        \caption{$\ell_2$ norm of states $x$}
        \label{fig:eg_scara_state_norm_perturbed}
    \end{subfigure}
    \begin{subfigure}[b]{0.18\textwidth}
        \centering
        \includegraphics[width=\textwidth]{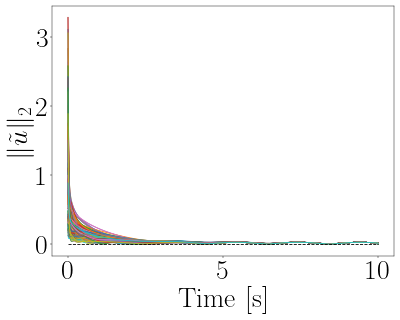}
        \caption{$\ell_2$ norm of $\tilde{u}$}
        \label{fig:eg_scara_control_perturbed}
    \end{subfigure}
    \caption{State and control trajectories for the SCARA arm.}
    \label{fig:eg_scara_sim}
\end{figure}

\subsection{SCARA Arm}
Consider the 8-state SCARA arm in Fig.~\ref{fig:scara_diagram} with dynamics given in \cite[Example 4.4]{murray2017mathematical}. The SCARA arm has three revolute joints and one prismatic joint. The states $x_1, ..., x_4$ represent the joint angles/positions and $x_5, ..., x_8$ represent the joint velocities. The arm is fully actuated, and $u_1, ..., u_4$ are the input joint torques/forces. We consider a vertical force $|f_l(t)| \leq 0.025$~\si{N} applied at the end-effector as the disturbance. Let the state space be $\mathcal{X} = \{x \in \R^8 \mid |x_1| \leq 0.59, |x_2| \leq 0.51, |x_3| \leq 0.51, |x_4| \leq 0.50, |x_5| \leq 0.79, |x_6| \leq 1.39, |x_7| \leq 1.00, |x_8| \leq 0.71\}$ and control limits $\mathcal{U} = \{u \in \R^4 \mid |u_1| \leq 4.52, |u_2| \leq 3.06, |u_3| \leq 1.67, 3.62 \leq u_4 \leq 6.19 \}$. The system parameters are the link masses $m_1=2.0, m_2=1.0, m_3=m_4=0.5$~\si{kg}, moments of inertia $I_{z1}=I_{z2}=I_{z3}=I_{z4}=0.1$~\si{kg.m^2}, link lengths $l_1=l_2=0.4$~\si{m}, and distances to the center of mass $r_1=r_2=0.2$~\si{m}. We apply a change of variables $\tilde{u}_1 = u_1$, $\tilde{u}_2 = u_2$, $\tilde{u}_3 = u_3$, and $u_4 = m_4 g + \tilde{u}_4$ to shift the controls to zero at $x=0$.

We set $\mu=0.24$ for this example, and Fig.~\ref{fig:eg_scara_sim} displays 200 trajectories starting from the boundary of $\Omega_{\widehat{V}_r}$ under $f_l(t) = 0.025 \sin(0.5 \pi t)$. With the same initial conditions, the average control effort ($\int_0^T \lVert \tilde{u}(t) \rVert_2^2 dt$) of LQR and our method is 0.13 and 0.10, respectively. The average settling time ($\inf \{ \tau \geq 0 \mid \lVert x(t) \rVert_2 \leq 0.05, \forall t \geq \tau \}$) is 5.26~\si{s} and 5.56~\si{s}, respectively.

\section{Conclusion}\label{sec:conclusion}
This work presents a novel framework for learning robust control Lyapunov functions and synthesizing stabilizing controllers for continuous-time nonlinear systems under additive disturbances upper bounded by a function of state. We establish bounds on the Hessian and third-order derivatives of LNNs and propose a GPU-friendly branch-and-bound verification algorithm. This algorithm substantially reduces verification time compared to previous approaches that either rely solely on zeroth-order bounds or run on CPUs. Simulation studies on six distinct dynamical systems demonstrate the effectiveness and robustness of our approach. 

{\color{revcolor} While our results show significant advantages over prior approaches, we explicitly acknowledge that scalability to ultra-high-dimensional nonlinear systems (such as large-scale networked topologies) remains a fundamental challenge. Because the computational complexity of a BnB-based verifier grows rapidly with the state dimension, formal verification can eventually become intractable for arbitrary dimensions. Addressing this bottleneck remains an open research direction. Future work will focus on developing dimension-reduction strategies, exploiting decentralized system structures, and incorporating tighter relaxations to further enhance scalability and enable formal verification of increasingly complex systems.}

\section*{References}
\bibliographystyle{IEEEtran}
\bibliography{master}

\appendix

\subsection{Training Details}\label{appx:training_details}
The hyperparameters used for training and verification are listed in Table~\ref{tab:hyperparams}. Table~\ref{tab:nn_structure} presents the network structures and activation functions used for each example. 

{\color{revcolor} Our framework utilizes LQR pretraining to provide a crucial ``warm start'' that ensures local stability, as training from random initialization frequently fails for high-dimensional systems. Several existing works on neural Lyapunov functions also use the LQR solution as a prior \cite{chang2019neural, zhou2022neural, wu2023neural}. The prescribed Lipschitz bounds of the controllers ($\gamma_\pi$) are set to match or slightly exceed the spectral norm of the baseline LQR gain matrix to guarantee sufficient mathematical expressivity and control authority. In practical deployments, the sensitivity to measurement noise inherent to high-gain controllers can be effectively mitigated by routing raw measurements through standard state estimators or filters.}

\begin{table}[H]
\centering
\caption{Hyperparameters used for each example.}
\label{tab:hyperparams}
\begin{threeparttable}
\begin{tabular}{cccccc} 
\toprule
Example & $\gamma_V$ & $\gamma_\pi$ & $\eta$ & $\mu$ & $\omega(x)$ \\ 
\midrule
Inverted Pendulum       & 1.0 & 38.0 & 0.1  & 0.3  & $0.02\lVert x \rVert_2$ \\
Unicycle  & 1.0 & 3.0  & 0.05 & 0.2  & $0.001\lVert x \rVert_2$ \\
Third-Order System  & 5.0 & 12.5 & 0.5 & 0.5  & $0.001\lVert x \rVert_2$ \\
Cartpole                & 1.0 & 84.0 & 0.05 & 0.05 & $10^{-5}\lVert x \rVert_2$ \\
2D Quadrotor            & 1.0 & 24.0 & 0.1  & 0.12 & $10^{-6}\lVert x \rVert_2$ \\
SCARA Arm               & 1.0 & 3.0  & 0.1  & 0.24 & $10^{-6}\lVert x \rVert_2$ \\
\bottomrule
\end{tabular}
\begin{tablenotes}
\footnotesize
\item[a] $\delta_{\textrm{positive}} = \delta_{\textrm{decrease}} = \delta_{\textrm{inclusion}} = 10^{-6}$ for all examples.
\end{tablenotes}
\end{threeparttable}
\end{table}

\begin{table}[H]
\centering
\caption{Neural network structures and activations.}
\label{tab:nn_structure}
\setlength{\tabcolsep}{2.5pt} 
\begin{tabular}{c c c c c}
\toprule
\multirow{2}{*}{Example} & \multicolumn{3}{c}{$V_{\theta_1}$} & \textbf{$\pi_{\theta_2}$} \\
 \cmidrule(lr){2-4} \cmidrule(lr){5-5}
 & 2-layer & 3-layer & 4-layer & 2-layer \\ 
\midrule
\makecell{Inverted \\ Pendulum } 
    & \makecell{[2, 128, 1] \\ {[tanh, id]}} 
    & \makecell{[2, 128x2, 1] \\ {[tanh x2, id]}} 
    & \makecell{[2, 128x3, 1] \\ {[tanh x3, id]}} 
    & \makecell{[2, 32, 1] \\ {[ReLU, id]}} \\ \addlinespace
\makecell{Unicycle}
    & \makecell{[2, 128, 1] \\ {[tanh, id]}} 
    & \makecell{[2, 128x2, 1] \\ {[tanh x2, id]}} 
    & \makecell{[2, 128x3, 1] \\ {[tanh x3, id]}} 
    & \makecell{[2, 32, 1] \\ {[ReLU, id]}}  \\ \addlinespace
\makecell{Third-Order System}
    & \makecell{[3, 128, 1] \\ {[tanh, id]}} 
    & \makecell{[3, 128x2, 1] \\ {[tanh x2, id]}} 
    & \makecell{[3, 128x3, 1] \\ {[tanh x3, id]}}
    & \makecell{[3, 32, 1] \\ {[ReLU, id]}} \\ \addlinespace
\makecell{Cartpole} 
    & \makecell{[4, 128, 1] \\ {[tanh, id]}} 
    & \makecell{[4, 128x2, 1] \\ {[tanh x2, id]}} 
    & \makecell{[4, 128x3, 1] \\ {[tanh x3, id]}}
    & \makecell{[4, 32, 1] \\ {[ReLU, id]}}\\ \addlinespace
\makecell{2D Quadrotor} 
    & \makecell{[6, 128, 1] \\ {[tanh, id]}} 
    & -- & -- 
    & \makecell{[6, 32, 2] \\ {[ReLU, id]}} \\ \addlinespace
\makecell{SCARA Arm} 
    & \makecell{[8, 128, 1] \\ {[tanh, id]}} 
    & -- & -- 
    & \makecell{[8, 32, 4] \\ {[ReLU, id]}} \\
\bottomrule
\end{tabular}
\end{table}

\subsection{Proof of Lemma~\ref{lemma:lip_nn_hess}}\label{appx:proof_lip_nn_hess}
{\color{revcolor}
Consider the LNN $F(x)$ defined in \eqref{eq:lip_nn}. By Lemma~\ref{lemma:lip_nn_jac}, the Jacobian of $F_i(x)$ (the $i$-th component of $F(x)$) w.r.t. $x$ is 
$\frac{\partial F_i(x)}{\partial x} = \frac{\partial F_i(x)}{\partial h^{(L-1)}} \frac{\partial h^{(L-1)}}{\partial h^{(L-2)}} \cdots \frac{\partial h^{(1)}}{\partial h^{(0)}} \frac{\partial h^{(0)}}{\partial x}$. Further differentiate $\frac{\partial F_i(x)}{\partial x}$ w.r.t. $x_j$ and we have 
$\frac{\partial^2 F_i(x)}{\partial x_j \partial x} = \sum_{\ell = 1}^{L-1} \frac{\partial F_i(x)}{\partial h^{(\ell)}} \frac{\partial^2 h^{(\ell)}}{\partial x_j \partial h^{(\ell-1)}} \frac{\partial h^{(\ell-1)}}{\partial x}$ 
because $\frac{\partial F_i(x)}{\partial h^{(L-1)}} \allowbreak = \sqrt{\gamma} W^{(L)}_i$ and $\frac{\partial h^{(0)}}{\partial x} = \sqrt{\gamma} I$ are both independent of $x$. For $\ell = 1,..., L-1$, we get from Lemma~\ref{lemma:lip_nn_jac} that 
$\frac{\partial^2 h^{(\ell)}}{\partial x_j \partial h^{(\ell-1)}} = U^{(\ell)} \diag \big ( {\sigma^{(\ell)}}''(z^{(\ell)}) \odot \frac{\partial z^{(\ell)}}{\partial x_j} \big ) W^{(\ell)}$. Next, plug $\frac{\partial^2 h^{(\ell)}}{\partial x_j \partial h^{(\ell-1)}}$ into $\frac{\partial^2 F_i(x)}{\partial x_j \partial x}$ and we get $\frac{\partial^2 F_i(x)}{\partial x_j \partial x} = \sum_{\ell=1}^{L-1} \frac{\partial F_i(x)}{\partial h^{(\ell)}} U^{(\ell)} \diag \big ( {\sigma^{(\ell)}}''(z^{(\ell)}) \odot \frac{\partial z^{(\ell)}}{\partial x_j} \big ) W^{(\ell)} \allowbreak \frac{\partial h^{(\ell-1)}}{\partial x} = \sum_{\ell=1}^{L-1} (\frac{\partial z^{(\ell)}}{\partial x_j})^\top \diag \big ({\sigma^{(\ell)}}''(z^{(\ell)}) \odot   (\frac{\partial F_i(x)}{\partial h^{(\ell)}} U^{(\ell)} )^\top \big ) \allowbreak W^{(\ell)} \frac{\partial h^{(\ell-1)}}{\partial x}$
where we use the fact that $a \diag (b \odot c) = c^\top \diag (b \odot a^\top)$ for a row vector $a$ and two column vectors $b$ and $c$ of the same length. Finally, noting that $\frac{\partial z^{(\ell)}}{\partial x} = W^{(\ell)} \frac{\partial h^{(\ell-1)}}{\partial x}$ for $\ell = 1, ..., L-1$ and stacking $\frac{\partial^2 F_i(x)}{\partial x_j \partial x} \in \R^{1 \times n_x}$ vertically, we obtain the desired result.}

\subsection{Proof of Theorem~\ref{thm:lip_nn_third_order_bound}}\label{appx:proof_lip_nn_third_order_bound}
{\color{revcolor} 
Before proving Theorem~\ref{thm:lip_nn_third_order_bound}, we establish the bounds on the forward Hessian $ \big\lVert \frac{\partial^2 h^{(\ell)}}{\partial x_j \partial x} \big \rVert_2$ for $\ell = 0, \dots, L-2$, and the backward Hessian $\big \lVert \frac{\partial^2 F_i(x)}{\partial x_j \partial h^{(\ell)}} \big \rVert_2$ for $\ell = 1, \dots, L-1$. 

\textbf{Forward Bound \eqref{eq:fwd_bound}.} When $\ell = 0$, the bound is 0 since $h^{(0)} = \sqrt{\gamma} x$. For $\ell \ge 1$, we expand using the chain rule: $\frac{\partial^2 h^{(\ell)}}{\partial x_j \partial x} = \sum_{k=1}^\ell \frac{\partial h^{(\ell)}}{\partial h^{(k)}} \frac{\partial^2 h^{(k)}}{\partial x_j \partial h^{(k-1)}} \frac{\partial h^{(k-1)}}{\partial x}$. We know the local Hessian is $\frac{\partial^2 h^{(k)}}{\partial x_j \partial h^{(k-1)}} = U^{(k)} \diag \bigl(\frac{\partial z^{(k)}}{\partial x_j} \odot {\sigma^{(k)}}''(z^{(k)})\bigr) W^{(k)}$.
Applying Cauchy–Schwarz inequality, its spectral norm is bounded by $\lVert c(U^{(k)}) \rVert_\infty \big \lVert \frac{\partial z^{(k)}}{\partial x_j} \big \rVert_2 \sup|{\sigma^{(k)}}''| \lVert W^{(k)} \rVert_2$. Because $\big \lVert \frac{\partial z^{(k)}}{\partial x_j} \big \rVert_2 \le \lVert W^{(k)} \rVert_2 \big \lVert \frac{\partial h^{(k-1)}}{\partial x_j} \big \rVert_2 \le \sqrt{\gamma} \lVert W^{(k)} \rVert_2$, we have $ \big \lVert \frac{\partial^2 h^{(k)}}{\partial x_j \partial h^{(k-1)}}  \big \rVert_2 \le \sqrt{\gamma} \lVert c(U^{(k)}) \rVert_\infty \lVert W^{(k)} \rVert_2^2 \sup|{\sigma^{(k)}}''|$.
As $\lVert \frac{\partial h^{(\ell)}}{\partial h^{(k)}} \rVert_2 \le 1$ and $\lVert \frac{\partial h^{(k-1)}}{\partial x} \rVert_2 \le \sqrt{\gamma}$, the summation simplifies to \eqref{eq:fwd_bound}.

\textbf{Backward Bound \eqref{eq:bwd_bound}.} When $\ell = L-1$, the bound is 0 because $F(x) = \sqrt{\gamma} W^{(L)} h^{(L-1)}$ is purely linear with respect to $h^{(L-1)}$. For $\ell \le L-2$, we have 
$\frac{\partial^2 F_i(x)}{\partial x_j \partial h^{(\ell)}} = \sum_{k=\ell+1}^{L-1} \frac{\partial F_i(x)}{\partial h^{(k)}} \frac{\partial^2 h^{(k)}}{\partial x_j \partial h^{(k-1)}} \frac{\partial h^{(k-1)}}{\partial h^{(\ell)}}$.
Taking the spectral norm and applying sub-multiplicativity yields: $\big \lVert \frac{\partial^2 F_i(x)}{\partial x_j \partial h^{(\ell)}} \big \rVert_2 \leq \sum_{k=\ell+1}^{L-1} \big \lVert \frac{\partial F_i(x)}{\partial h^{(k)}} \big \rVert_2 \big \lVert \frac{\partial^2 h^{(k)}}{\partial x_j \partial h^{(k-1)}} \big \rVert_2 \big \lVert \frac{\partial h^{(k-1)}}{\partial h^{(\ell)}} \big \rVert_2$. We already established that the local Hessian norm is bounded by $\sqrt{\gamma} \lVert c(U^{(k)}) \rVert_\infty \lVert W^{(k)} \rVert_2^2 \sup|{\sigma^{(k)}}''|$. Furthermore, due to the 1-Lipschitz property of the intermediate layers, $\big \lVert \frac{\partial h^{(k-1)}}{\partial h^{(\ell)}} \big \rVert_2 \leq 1$ and $\big \lVert \frac{\partial F_i(x)}{\partial h^{(k)}} \big \rVert_2 \leq \sqrt{\gamma}$. Substituting these three bounds directly into the summation yields the backward bound \eqref{eq:bwd_bound}.

\textbf{Bound on $\big \lVert \frac{\partial^3 F_i(x)}{\partial x^2 \partial x_j} \big \rVert_2$.} With $\Theta^{(\ell)} = W^{(\ell)} \frac{\partial h^{(\ell-1)}}{\partial x}$ and $\Lambda^{(\ell)} = \diag(\lambda^{(\ell)})$, $\lambda^{(\ell)} = (\frac{\partial F_i(x)}{\partial h^{(\ell)}} U^{(\ell)})^\top \odot {\sigma^{(\ell)}}''(z^{(\ell)})$, we have 
$
\frac{\partial^3 F_i(x)}{\partial x^2 \partial x_j} = \sum_{\ell=1}^{L-1}  \Theta^{(\ell)\top} \Lambda^{(\ell)} \tfrac{\partial \Theta^{(\ell)}}{\partial x_j} + \tfrac{\partial \Theta^{(\ell) \top}}{\partial x_j} \Lambda^{(\ell)} \Theta^{(\ell)}  + \Theta^{(\ell)\top} \diag(p^{(\ell)}) \Theta^{(\ell)} + \Theta^{(\ell)\top} \diag(q^{(\ell)}) \Theta^{(\ell)},
$ where $p^{(\ell)} = (\frac{\partial^2 F_i(x)}{\partial x_j \partial h^{(\ell)}} U^{(\ell)})^\top \odot {\sigma^{(\ell)}}''(z^{(\ell)})$ and $q^{(\ell)} = (\frac{\partial F_i(x)}{\partial h^{(\ell)}} U^{(\ell)})^\top \odot {\sigma^{(\ell)}}'''(z^{(\ell)}) \odot \frac{\partial z^{(\ell)}}{\partial x_j}$ are the two summands of $\frac{\partial \Lambda^{(\ell)}}{\partial x_j}$.

For any nonnegative weight vector $w$ (indexed by the hidden units of layer $\ell$), let $\Gamma_w := \lVert \Theta^{(\ell)\top} \diag(w) \Theta^{(\ell)} \rVert_2 = \lVert \diag(w)^{1/2} \Theta^{(\ell)} \rVert_2^2$. Using $\big \lVert \frac{\partial h^{(\ell-1)}}{\partial x} \big \rVert_2 \le \sqrt{\gamma}$, we have $\Gamma_{c(U^{(\ell)})} \le \gamma\, \beta^{(\ell)}$ and $\Gamma_{c(U^{(\ell)}) \odot r(W^{(\ell)})} \le \gamma\, \tau^{(\ell)}$.

\textbf{Term $\Theta^{(\ell)\top} \diag(p^{(\ell)}) \Theta^{(\ell)} $.} As in the proof of Theorem~\ref{thm:lip_nn_hess_bound}, $\lVert \Theta^{(\ell)\top} \diag(p^{(\ell)}) \Theta^{(\ell)} \rVert_2 \le \Gamma_{|p^{(\ell)}|}$. By Cauchy--Schwarz, $|p^{(\ell)}_k| \le \sup|{\sigma^{(\ell)}}''| \big \lVert \frac{\partial^2 F_i(x)}{\partial x_j \partial h^{(\ell)}} \big \rVert_2 [c(U^{(\ell)})]_k$, so $\lVert \Theta^{(\ell)\top} \diag(p^{(\ell)}) \Theta^{(\ell)}  \rVert_2  \le \gamma\, \beta^{(\ell)} \sup|{\sigma^{(\ell)}}''| \big \lVert \frac{\partial^2 F_i(x)}{\partial x_j \partial h^{(\ell)}} \big \rVert_2$.

\textbf{Term $\Theta^{(\ell)\top} \diag(q^{(\ell)}) \Theta^{(\ell)}$.} Since $\frac{\partial z^{(\ell)}}{\partial x_j} = W^{(\ell)} \frac{\partial h^{(\ell-1)}}{\partial x_j}$ and $\big \lVert \frac{\partial h^{(\ell-1)}}{\partial x_j} \big \rVert_2 \le \sqrt{\gamma}$, we have $\bigl|(\frac{\partial z^{(\ell)}}{\partial x_j})_k\bigr| \le \sqrt{\gamma}\,[r(W^{(\ell)})]_k$, while $\bigl|(\frac{\partial F_i(x)}{\partial h^{(\ell)}} U^{(\ell)})_k\bigr| \le \sqrt{\gamma}\,[c(U^{(\ell)})]_k$ as before. Hence $|q^{(\ell)}_k| \le \gamma \sup|{\sigma^{(\ell)}}'''| [c(U^{(\ell)}) \odot r(W^{(\ell)})]_k$, and $\lVert \Theta^{(\ell)\top} \diag(q^{(\ell)}) \Theta^{(\ell)} \rVert_2 \le \Gamma_{|q^{(\ell)}|} \le \gamma \sup|{\sigma^{(\ell)}}'''| \, \Gamma_{c(U^{(\ell)}) \odot r(W^{(\ell)})} \le \gamma^2 \sup|{\sigma^{(\ell)}}'''| \, \tau^{(\ell)}$.

\textbf{Term $\Theta^{(\ell)\top} \Lambda^{(\ell)} \tfrac{\partial \Theta^{(\ell)}}{\partial x_j}$.} 
$\bigl\lVert \Theta^{(\ell)\top} \Lambda^{(\ell)} \tfrac{\partial \Theta^{(\ell)}}{\partial x_j} \bigr\rVert_2 \le  \bigl\lVert \diag(|\lambda^{(\ell)}|)^{1/2} \Theta^{(\ell)} \bigr\rVert_2 \, \bigl\lVert \diag(|\lambda^{(\ell)}|)^{1/2} \tfrac{\partial \Theta^{(\ell)}}{\partial x_j} \bigr\rVert_2$, as  $\Lambda^{(\ell)}$ is diagonal. As $|\lambda^{(\ell)}_k| \le \allowbreak \sqrt{\gamma}\sup|{\sigma^{(\ell)}}''| [c(U^{(\ell)})]_k$, the first factor obeys $\lVert \diag(|\lambda^{(\ell)}|)^{1/2} \Theta^{(\ell)} \rVert_2^2 \le \sqrt{\gamma}\sup|{\sigma^{(\ell)}}''| \, \Gamma_{c(U^{(\ell)})} \le \gamma^{3/2}\sup|{\sigma^{(\ell)}}''| \beta^{(\ell)}$. For the second, $\frac{\partial \Theta^{(\ell)}}{\partial x_j} = W^{(\ell)} \frac{\partial^2 h^{(\ell-1)}}{\partial x_j \partial x}$, and it follows that $\bigl\lVert \diag(|\lambda^{(\ell)}|)^{1/2} \tfrac{\partial \Theta^{(\ell)}}{\partial x_j} \bigr\rVert_2^2 \le \sqrt{\gamma}\sup|{\sigma^{(\ell)}}''| \allowbreak  \bigl\lVert \diag(c(U^{(\ell)}))^{1/2} W^{(\ell)} \bigr\rVert_2^2 \allowbreak  \bigl\lVert \tfrac{\partial^2 h^{(\ell-1)}}{\partial x_j \partial x} \bigr\rVert_2^2 = \sqrt{\gamma}\sup|{\sigma^{(\ell)}}''| \allowbreak  \beta^{(\ell)} \bigl\lVert \tfrac{\partial^2 h^{(\ell-1)}}{\partial x_j \partial x} \bigr\rVert_2^2 $.
Taking the square root, $\lVert  \Theta^{(\ell)\top} \Lambda^{(\ell)} \tfrac{\partial \Theta^{(\ell)}}{\partial x_j} \rVert_2  \allowbreak \le \gamma\, \beta^{(\ell)} \sup|{\sigma^{(\ell)}}''| \allowbreak  \lVert \frac{\partial^2 h^{(\ell-1)}}{\partial x_j \partial x} \rVert_2$.

As $\lVert \Theta^{(\ell)\top} \Lambda^{(\ell)} \tfrac{\partial \Theta^{(\ell)}}{\partial x_j} \rVert_2  = \lVert \tfrac{\partial \Theta^{(\ell) \top}}{\partial x_j} \Lambda^{(\ell)} \Theta^{(\ell)} \rVert_2$, summing these terms over $\ell = 1, \dots, L-1$ yields \eqref{eq:lip_nn_third_order_bound}.
}

\subsection{High-order Lower Bound for $\Phi(x) = -H(x)-\omega(x)$}\label{appx:high_order_bound_stability}

The target function for verifying the decrease condition is $\Phi(x) = -H(x) - \omega(x)$, where $ H(x) \defeq \frac{\partial V_{\theta_1}}{\partial x} (x) f(x,\pi_{\theta_2}(x)) + \lVert \frac{\partial V_{\theta_1}}{\partial x} (x) G  \rVert_2 \epsilon(x) $.
Because $\pi_{\theta_2}$ and the $\ell_2$-norm operator are only $\mathcal{C}^0$, a direct Taylor expansion of $H(x)$ is impossible. Instead, we decompose $H(x) = H_1(x) + H_2(x)$ where $H_1 (x) = \frac{\partial V_{\theta_1}}{\partial x}(x) f(x,\pi_{\theta_2}(x))$ and $ H_2 (x) = \lVert \frac{\partial V_{\theta_1}}{\partial x}(x) G \rVert_2 \epsilon(x) $.
Given a rectangle $Q$, let $x_m$ be its midpoint, $v$ be its half-width vector, $\zeta = \lVert v \rVert_2$, and $U_Q$ the hyperrectangle formed by the element-wise lower and upper bounds of $\pi_{\theta_2}$ on $Q$. To rigorously lower bound $\Phi(x)$, we will derive upper bounds for $H_1(x)$, $H_2(x)$, and $\omega(x)$ on $Q$.

\textbf{Taylor Expansions.}
Because $f$ is $\mathcal{C}^2$, we can expand the system dynamics around $x_m$. For any $x \in Q$ there exists $\lambda_i' \in (0, 1)$ defining the intermediate points $x_i' = x_m + \lambda_i'(x - x_m)$ and $u_i' = \pi_{\theta_2}(x_m) + \lambda_i'(\pi_{\theta_2}(x) - \pi_{\theta_2}(x_m))$ such that $f(x,\pi_{\theta_2}(x)) = f(x_m,\pi_{\theta_2}(x_m)) + \frac{\partial f}{\partial x} (x_m, \pi_{\theta_2}(x_m)) (x-x_m) + \frac{\partial f}{\partial u} (x_m, \pi_{\theta_2}(x_m)) (\pi_{\theta_2}(x)-\pi_{\theta_2}(x_m)) +  [ \frac{1}{2} (x-x_m)^\top \frac{\partial^2 f_i}{\partial x^2} (x_i', u_i') (x-x_m)  ]_i +  [ \frac{1}{2} (x-x_m)^\top \frac{\partial^2 f_i}{\partial u \partial x} (x_i', u_i') (\pi_{\theta_2}(x)-\pi_{\theta_2}(x_m))  ]_i +  [ \frac{1}{2} (\pi_{\theta_2}(x)-\pi_{\theta_2}(x_m))^\top \frac{\partial^2 f_i}{\partial x \partial u} (x_i', u_i') (x-x_m)  ]_i  +  [ \frac{1}{2} (\pi_{\theta_2}(x)-\pi_{\theta_2}(x_m))^\top \frac{\partial^2 f_i}{\partial u^2} (x_i', u_i') (\pi_{\theta_2}(x)-\pi_{\theta_2}(x_m))  ]_i$.

Similarly, because $V_{\theta_1}$ is $\mathcal{C}^3$, for $\lambda_i'' \in (0,1)$ and intermediate points $x_i'' = x_m + \lambda_i''(x - x_m)$,
$\frac{\partial V_{\theta_1}}{\partial x}(x) = \frac{\partial V_{\theta_1}}{\partial x}(x_m) + (x-x_m)^\top \frac{\partial^2 V_{\theta_1}}{\partial x^2}(x_m) +  [ \frac{1}{2} (x-x_m)^\top \frac{\partial^3 V_{\theta_1}}{\partial x^2 \partial x_i}(x_i'') (x-x_m)  ]_i$.

\textbf{Bounding $H_1(x)$.}
By multiplying the expansions above and bounding the higher-order remainder terms using the maximum spectral norms over the domain, we obtain an upper bound for $H_1(x)$: $H_1 (x) 
\leq H_1(x_m) +  \lvert \frac{\partial V_{\theta_1}}{\partial x}(x_m)  \allowbreak   \frac{\partial f}{\partial x} (x_m, \pi_{\theta_2}(x_m)) + f^\top(x_m, \pi_{\theta_2}(x_m)) \frac{\partial^2 V_{\theta_1}}{\partial x^2}(x_m)  \rvert v  
+  \lVert \frac{\partial V_{\theta_1}}{\partial x}(x_m) \frac{\partial f}{\partial u} (x_m, \pi_{\theta_2}(x_m))  \rVert_2 \gamma_\pi \zeta + \frac{1}{2}  \lvert \frac{\partial V_{\theta_1}}{\partial x}(x_m)  \rvert h_f \zeta^2 
+  v^\top \lvert \frac{\partial^2 V_{\theta_1}}{\partial x^2}(x_m)  \frac{\partial f}{\partial x} (x_m, \pi_{\theta_2}(x_m))  \rvert v 
+  \lVert v^\top \lvert \frac{\partial^2 V_{\theta_1}}{\partial x^2}(x_m)   \allowbreak   \frac{\partial f}{\partial u} (x_m, \pi_{\theta_2}(x_m)) \rvert \rVert_2 \gamma_\pi \zeta
+ \frac{1}{2} t_V ^\top \lvert f(x_m, \pi_{\theta_2}(x_m))  \rvert \zeta^2 
+ \frac{1}{2}  v^\top  \lvert \frac{\partial^2 V_{\theta_1}}{\partial x^2}(x_m)  \rvert h_f   \zeta^2 
+ \frac{1}{2}  t_V^\top \lvert \frac{\partial f}{\partial x} (x_m, \pi_{\theta_2}(x_m))   \rvert v \zeta^2 
\allowbreak 
+ \frac{1}{2}  \lVert t_V^\top \lvert \frac{\partial f}{\partial u} (x_m, \pi_{\theta_2}(x_m))  \rvert  \rVert_2 \gamma_\pi \zeta^3 
+ \frac{1}{4}  t_V^\top h_f  \zeta^4
$ where $\gamma_\pi \geq \Lip(\pi_{\theta_2})$, and the elements of the vectors $h_f$ and $t_V$ are defined as: $h_{f,i} = 2 \gamma_\pi \max_{x\in Q, u \in U_Q} \lVert \frac{\partial^2 f_i}{\partial x \partial u} (x,u) \rVert_2 + \gamma_\pi^2 \max_{x\in Q, u \in U_Q} \lVert \frac{\partial^2 f_i}{\partial u^2} (x,u) \rVert_2 + \max_{x\in Q, u \in U_Q} \lVert \frac{\partial^2 f_i}{\partial x^2}(x,u) \rVert_2$ and $t_{V,i} = T_V$ with $T_V \geq \sup_{x \in \R^{n_x}} \lVert \frac{\partial^3 V_{\theta_1}}{\partial x^2 \partial x_j} (x) \rVert_2$ for all $j$.

\textbf{Bounding $H_2(x)$ and $\omega(x)$.}
Applying similar expansions and norm inequalities to the disturbance term $H_2(x)$, we have: $H_2(x) \leq \lVert \lvert \frac{\partial V_{\theta_1}}{\partial x} (x_m) G\rvert + v^\top \lvert \frac{\partial^2 V_{\theta_1}}{\partial x^2} (x_m) G\rvert + \frac{1}{2} t_V^\top \lvert G \rvert \zeta^2 \rVert_2 ( \epsilon(x_m) + K_\epsilon \zeta ) $ where $K_\epsilon \geq \Lip (\epsilon)$. Finally, the function $\omega(x)$ is upper bounded by $\omega(x) \leq \omega(x_m) + K_\omega \zeta$ where $K_\omega \geq \Lip (\omega)$. 

By combining the upper bounds of $H_1$, $H_2$, and $\omega$, we obtain the required high-order lower bound for the verification target $\min_{x \in Q} \Phi(x)$.

\subsection{Comparison between the Zeroth- and High-Order Bounds}\label{appx:comparison_between_bounds}
{\color{revcolor} Table~\ref{tab:comparison_between_bounds} compares the verification times of Algorithm~\ref{alg:bnb} when using only the zeroth-order bounds versus the proposed higher-order bounds with the Sandwich Layers \cite{wang2023direct}. While the algorithm is theoretically capable of running solely with the zeroth-order bounds, doing so quickly becomes computationally intractable. As demonstrated empirically, the zeroth-order baseline consistently times out on the positive definiteness and decrease conditions for systems with state dimensions $n_x \geq 4$, highlighting the necessity of the higher-order bounds.}

\begin{table}[H]
\centering
\footnotesize
\setlength{\tabcolsep}{3.5pt} %
\caption{{\color{revcolor}Verification times (in seconds unless otherwise specified) for each system using different bounds on GPU (average over four trials). ZO: zeroth-order. HO: high-order.}}
\label{tab:comparison_between_bounds}
{\color{revcolor}
\begin{tabular}{c c cc cc cc}
\toprule
\multirow{2}{*}{System} & \multirow{2}{*}{$L_V$} & \multicolumn{2}{c}{Pos. Def.} & \multicolumn{2}{c}{Decrease} & \multicolumn{2}{c}{Inclusion} \\
\cmidrule(lr){3-4} \cmidrule(lr){5-6} \cmidrule(lr){7-8} & & ZO & HO & ZO & HO & ZO & HO \\
\midrule
\multirow{3}{*}{\makecell[c]{Inverted\\Pendulum}}  
& 2 & \textbf{0.16} & 0.31 & \textbf{0.17} & 0.71 & -- & -- \\
& 3 & \textbf{0.17} & 0.30 & \textbf{0.18} & 0.71 & -- & -- \\
& 4 & \textbf{0.16} & 0.33 & \textbf{0.19} & 0.62 & -- & -- \\
\midrule
\multirow{3}{*}{\makecell[c]{Cartpole}}  
& 2 & $>$ 4~\si{h} & \textbf{0.39} & $>$ 4~\si{h} & \textbf{17.87} & 0.45 & \textbf{0.12} \\
& 3 & $>$ 4~\si{h} & \textbf{0.43} & $>$ 4~\si{h} & \textbf{48.09} & 0.46 & \textbf{0.12} \\
& 4 & $>$ 4~\si{h} & \textbf{0.55} & $>$ 4~\si{h} & \textbf{127.40} & 0.46 & \textbf{0.13} \\
\midrule
SCARA Arm  
& 2 & $>$ 12~\si{h} & \textbf{151.90} & $>$ 12~\si{h} & \textbf{9.53~\si{h}} & 5.02~\si{h} & \textbf{7.92} \\
\bottomrule
\end{tabular}}
\end{table}

\subsection{Additional Results with the Spectral Normalization and Orthogonal Layer Architectures}\label{appx:additional_results_on_other_layers}
{\color{revcolor} In Tables~\ref{tab:spectral_normalization_times} and \ref{tab:orthogonal_layer_times}, we report the performance of the BnB algorithm on the inverted pendulum and the cartpole when integrated with the Spectral Normalization \cite{miyato2018spectral} and Orthogonal Layer \cite{trockman2021orthogonalizing} architectures, respectively. These results demonstrate that the computational speedups achieved by our BnB algorithm are consistent and independent of the specific underlying Lipschitz parameterization.

Furthermore, our empirical evaluations indicate that the Sandwich Layer architecture \cite{wang2023direct} offers superior expressivity compared to the other two architectures. For instance, in the inverted pendulum example, the average verified forward-invariant region covers 59.52\% of the state space when using Spectral Normalization and 61.31\% with Orthogonal Layers, but increases significantly to 73.75\% when using Sandwich Layers. Similarly, for the cartpole system, the invariant region volumes are 5.34\%, 5.30\%, and 5.40\% for Spectral Normalization, Orthogonal Layer, and Sandwich Layer, respectively.}

\begin{table}[H]
\centering
\footnotesize
\setlength{\tabcolsep}{3.5pt} %
\caption{{\color{revcolor}Verification times (in seconds) for the Spectral Normalization architecture on GPU (average over four trials).}}
\label{tab:spectral_normalization_times}
{\color{revcolor}
\begin{tabular}{c c cc cc cc}
\toprule
\multirow{3}{*}{System} & \multirow{3}{*}{$L_V$} & \multicolumn{2}{c}{Pos. Def.} & \multicolumn{2}{c}{Decrease} & \multicolumn{2}{c}{Inclusion} \\
\cmidrule(lr){3-4} \cmidrule(lr){5-6} \cmidrule(lr){7-8} & & \multirow{2}{*}{\makecell{$\alpha,\beta$-\\CROWN}} & \multirow{2}{*}{\makecell{Ours}} & \multirow{2}{*}{\makecell{$\alpha,\beta$-\\CROWN}} & \multirow{2}{*}{\makecell{Ours}} & \multirow{2}{*}{\makecell{$\alpha,\beta$-\\CROWN}} & \multirow{2}{*}{\makecell{Ours}}  \\
& & & & & & & \\
\midrule
\multirow{3}{*}{\makecell[c]{Inverted\\Pendulum}}  
& 2 & 0.92 & \textbf{0.34} & 4.56 & \textbf{0.98} & -- & -- \\
& 3 & 1.24 & \textbf{0.30} & 6.67 & \textbf{1.03} & -- & -- \\
& 4 & 1.64 & \textbf{0.33} & 8.82 & \textbf{1.02} & -- & -- \\
\midrule
\multirow{3}{*}{\makecell[c]{Cartpole}}  
& 2 & 2.58 & \textbf{0.35} & NaN & \textbf{12.45} & 0.39 & \textbf{0.14} \\
& 3 & 4.63 & \textbf{0.38} & NaN & \textbf{22.56} & 0.47 & \textbf{0.13} \\
& 4 & 7.69 & \textbf{0.47} & NaN & \textbf{40.59} & 0.51 & \textbf{0.13} \\
\bottomrule
\end{tabular}}
\end{table}

\begin{table}[H]
\centering
\footnotesize
\setlength{\tabcolsep}{3.5pt} %
\caption{{\color{revcolor}Verification times (in seconds) for the Orthogonal Layer architecture on GPU (average over four trials).}}
\label{tab:orthogonal_layer_times}
{\color{revcolor}
\begin{tabular}{c c cc cc cc}
\toprule
\multirow{3}{*}{System} & \multirow{3}{*}{$L_V$} & \multicolumn{2}{c}{Pos. Def.} & \multicolumn{2}{c}{Decrease} & \multicolumn{2}{c}{Inclusion} \\
\cmidrule(lr){3-4} \cmidrule(lr){5-6} \cmidrule(lr){7-8} & & \multirow{2}{*}{\makecell{$\alpha,\beta$-\\CROWN}} & \multirow{2}{*}{\makecell{Ours}} & \multirow{2}{*}{\makecell{$\alpha,\beta$-\\CROWN}} & \multirow{2}{*}{\makecell{Ours}} & \multirow{2}{*}{\makecell{$\alpha,\beta$-\\CROWN}} & \multirow{2}{*}{\makecell{Ours}}  \\
& & & & & & & \\
\midrule
\multirow{3}{*}{\makecell[c]{Inverted\\Pendulum}}  
& 2 & 1.03 & \textbf{0.33} & 4.68 & \textbf{1.01} & -- & -- \\
& 3 & 1.26 & \textbf{0.29} & 6.72 & \textbf{0.98} & -- & -- \\
& 4 & 1.61 & \textbf{0.29} & 8.77 & \textbf{0.99} & -- & -- \\
\midrule
\multirow{3}{*}{\makecell[c]{Cartpole}}  
& 2 & 2.57 & \textbf{0.36} & NaN & \textbf{13.04} & 0.40 & \textbf{0.14} \\
& 3 & 4.64 & \textbf{0.41} & NaN & \textbf{25.66} & 0.46 & \textbf{0.13} \\
& 4 & 7.72 & \textbf{0.48} & NaN & \textbf{58.87} & 0.48 & \textbf{0.13} \\
\bottomrule
\end{tabular}}
\end{table}

\subsection{Two-Link Cartesian Arm}
\begin{figure}[H]
    \centering
    \includegraphics[width=0.4\linewidth]{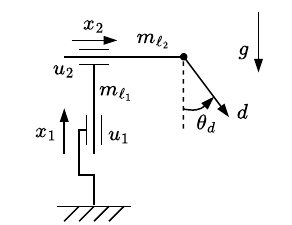}
    \caption{Diagram of the two-link Cartesian arm.}
    \label{fig:cartesian_arm_diagram}
\end{figure}

{\color{arxivcolor}
Consider the two-link Cartesian arm depicted in Fig.~\ref{fig:cartesian_arm_diagram}. The dynamics of this system are given by
\begin{subequations}
    \begin{align}
        \dot{x}_1 &= x_3, \quad \dot{x}_2 = x_4 \\
        \dot{x}_3 &= -g + (u_1 - \cos(\theta_d(t)) d)/(m_{\ell_1} + m_{\ell_2}) \\
        \dot{x}_4 &= (u_2 + \sin(\theta_d(t)) d)/m_{\ell_2}
    \end{align}
\end{subequations}
where $x_1$ and $x_2$ are the joint coordinates, $x_3$ and $x_4$ are the joint velocities, and $u_1$ and $u_2$ are the joint torques controlling the two joints, respectively. Let $m_{\ell_1}$ and $m_{\ell_2}$ be the masses of the two links, and assume that there is an external force $d$ acting on the end of the second link, forming a time-varying angle $\theta_d(t)$ with the downward direction. The system parameters are $m_{\ell_1} = m_{\ell_2} = 2$~\si{kg}, $g = 9.81$~\si{m.s^{-2}}, and the external force is $d=0.1$~\si{N}. Let the state space be $\mathcal{X} = \{x \in \R^4 \mid |x_1| \leq 0.5, |x_2| \leq 0.5, |x_3| \leq 0.5, |x_4| \leq 0.5\}$ and control limits $\mathcal{U} = \{u \in \R^2 \mid 32.24 \leq u_1 \leq 46.24, |u_2| \leq 3\}$. We apply a change of variables $u_1 = (m_{\ell_1} + m_{\ell_2})g + \tilde{u}_1$ and $u_2 = \tilde{u}_2$ to shift the controls to zero at $x=0$. Note that $\bigl \lVert \bigl [\frac{1}{m_{\ell_1} + m_{\ell_2}} \cos(\theta_d(t)) d, \frac{1}{ m_{\ell_2}} \sin(\theta_d(t)) d  \bigr ]^\top \bigr \rVert_2 \leq 0.05$.}

\begin{figure}[H]
    \centering
    \begin{subfigure}[b]{0.20\textwidth}
        \centering
        \includegraphics[width=\textwidth]{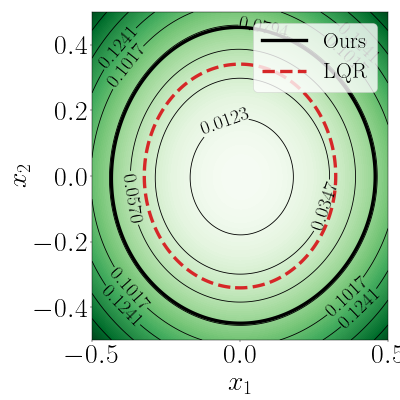}
        \caption{$V_{\theta_1}(x)$ on $(x_1, x_2)$}
        \label{fig:eg_cartesian_arm_forward_inv_1}
    \end{subfigure}
    \begin{subfigure}[b]{0.20\textwidth}
        \centering
        \includegraphics[width=\textwidth]{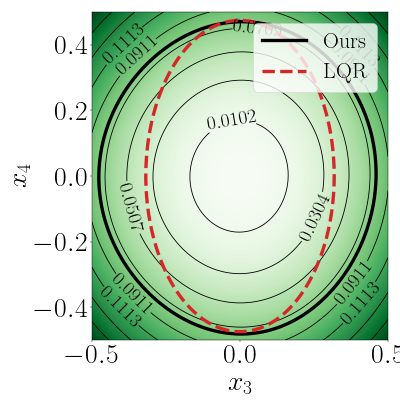}
        \caption{$V_{\theta_1}(x)$ on $(x_3, x_4)$}
        \label{fig:eg_cartesian_arm_forward_inv_2}
    \end{subfigure}
    \begin{subfigure}[b]{0.20\textwidth}
        \centering
        \includegraphics[width=\textwidth]{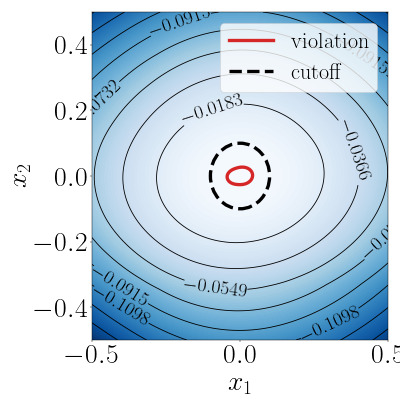}
        \caption{$H(x) + \omega (x)$ on $(x_1, x_2)$}
        \label{fig:eg_cartesian_arm_stability_1}
    \end{subfigure}
    \begin{subfigure}[b]{0.20\textwidth}
        \centering
        \includegraphics[width=\textwidth]{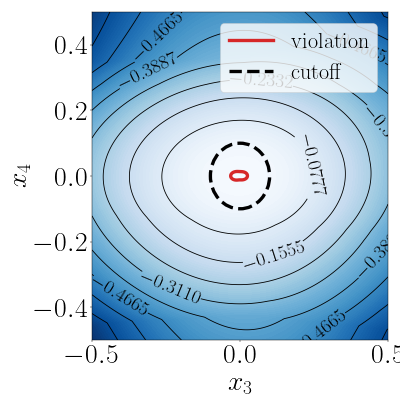}
        \caption{$H(x) + \omega (x)$ on $(x_3, x_4)$}
        \label{fig:eg_cartesian_arm_stability_2}
    \end{subfigure}
    \caption{{\color{arxivcolor} Visualization of $V_{\theta_1}(x)$ and $H(x) + \omega (x)$ for the two-link Cartesian arm. \subref{fig:eg_cartesian_arm_forward_inv_1} and \subref{fig:eg_cartesian_arm_forward_inv_2}: level sets of $V_{\theta_1}(x)$ and the largest forward invariant sets given by our method and LQR. \subref{fig:eg_cartesian_arm_stability_1} and \subref{fig:eg_cartesian_arm_stability_2}: level sets of $H(x) + \omega (x)$ and the cutoff region $B_2(0, \mu)$.}}
    \label{fig:eg_cartesian_arm_contour}
\end{figure}

{\color{arxivcolor}We visualize the forward invariant regions given by our method and LQR in Figs.~\ref{fig:eg_cartesian_arm_forward_inv_1} and \ref{fig:eg_cartesian_arm_forward_inv_2}. As shown in Figs.~\ref{fig:eg_cartesian_arm_stability_1} and \ref{fig:eg_cartesian_arm_stability_2}, the region where $H(x) + \omega(x) > 0$ is contained in $B_2(0, \mu)$ with $\mu = 0.1$. Table~\ref{tab:eg_cartesian_arm_ratio} shows that our method achieves a larger forward invariant region than LQR. We simulate 200 trajectories starting from the boundary of $\Omega_{\widehat{V}_r}$ under the external force $d$ with $\theta_d(t) = 2 \pi t$. With the same initial conditions, the average control effort ($\int_0^T \lVert \tilde{u}(t) \rVert_2^2 dt$) of LQR and our method is 1.71 and 1.81, respectively. The average settling time ($\inf \{ \tau \geq 0 \mid \lVert x(t) \rVert_2 \leq 0.05, \forall t \geq \tau \}$) is 3.81~\si{s} and 3.44~\si{s}, respectively.}

\begin{table}[H]
\centering
\caption{{\color{arxivcolor} Forward invariant region ratio (\%) for Cartesian arm.}}
\label{tab:eg_cartesian_arm_ratio}
{\color{arxivcolor}
\begin{tabular}{cccc}
\toprule
LQR & Ours ($L_V{=}2$) & Ours ($L_V{=}3$) & Ours ($L_V{=}4$) \\
\midrule
8.62 & 20.61 & 22.58 & 24.38 \\
\bottomrule
\end{tabular}}
\end{table}

\subsection{Networked Ring System Example}
{\color{arxivcolor}
We have tested the performance of our method on the following networked ring system:
\begin{equation}
    \dot{x}_{i} = \sin(x_i - x_{i+1}) + u_i
\end{equation}
with $x_{N+1} = x_1$ and $x_i \in [-1,1]$ up to $N=8$. As shown in Table~\ref{tab:eg_ring_sys_ratio}, our method consistently achieves a larger forward invariant region than the LQR baseline across these dimensions. Although our GPU-accelerated framework substantially improves verification speeds, the inherent difficulty of high-dimensional formal verification remains an open challenge across the neural Lyapunov literature. A comprehensive formal treatment of arbitrary $N$-dimensional networked topologies is an interesting direction for future work.

\begin{table}[H]
\centering
\caption{{\color{arxivcolor}Forward invariant region ratio (\%) for the networked ring system.}}
\label{tab:eg_ring_sys_ratio}
{\color{arxivcolor}
\begin{tabular}{lcccc}
\toprule
$n_x$ & 2 & 4 & 6 & 8 \\
\midrule
LQR & 60.45 & 22.40 & 4.31 & 0.55 \\
Ours & 86.88 & 30.47 & 6.06 & 0.85 \\
\bottomrule
\end{tabular}
}
\end{table}
}

\end{document}